\documentclass[12pt]{article}
\usepackage{amsmath, amsthm, amssymb, amsfonts,bigints}
\usepackage{natbib,comment}
\usepackage{setspace}
\usepackage{graphicx}
\usepackage{caption,subcaption}
\usepackage{color}
\usepackage{xcolor,mathtools}
\usepackage{afterpage,array,tabularx,makecell,multirow}
\usepackage{appendix}
\usepackage{float}
\usepackage{epstopdf}
\usepackage{placeins,enumitem}
\usepackage{units}
\usepackage[margin=1.25in]{geometry}
\usepackage{booktabs}
\usepackage{threeparttable}
\usepackage{float}
\usepackage{appendix}
\floatstyle{plaintop}
\restylefloat{table}
\restylefloat{figure}
\usepackage{titlesec}
\usepackage[bottom]{footmisc}
\usepackage{tikz}

\newif\ifkeepremark

\newcounter{subassumption}[asu]

\makeatletter
\renewcommand{\p@subassumption}{\theasu}
\makeatother

\interfootnotelinepenalty=10000

\setcounter{MaxMatrixCols}{10}
\newif \ifshowexplanations
\showexplanationstrue

\newtheorem{condition}{Condition}

\newtheorem{example}{Example}

\newtheorem{lemma}{Lemma}

\newtheorem{proposition}[condition]{Proposition}

\theoremstyle{theorem}
\newtheorem{remark}{Remark}
\theoremstyle{definition}

\theoremstyle{definition}
\newtheorem{assumption}{Assumption}


\usepackage{chngcntr}
\usepackage{apptools}



\theoremstyle{plain} 
\newtheorem*{continuancex}{Example \continuanceref~(continued)}
\newenvironment{continuance}[1]
{\newcommand{\continuanceref}{\ref{#1}}\begin{continuancex}}
	{\end{continuancex}}

\newcommand{\E}{\mathbb{E}}

\definecolor{JuliaOrange}{HTML}{E36F47}
\definecolor{JuliaGreen}{HTML}{3EA44E}
\definecolor{dark_green}{rgb}{0,0.75,0}
\newif \ifshowexplanationsGiovanni
\showexplanationsGiovannitrue

\definecolor{dark_blue}{rgb}{0.0, 0.0, 0.55}
\newif \ifshowexplanationsTim
\showexplanationsTimtrue

\usepackage[colorlinks=true,
linkcolor=blue,
citecolor=blue,
urlcolor=blue]{hyperref}

\defcitealias{compiani2025demand}{CMS}

\usepackage{bibunits}
\defaultbibliographystyle{ecta} 
\defaultbibliography{references}

\begin{document}

\begin{bibunit}

\onehalfspacing 

\date{June 20, 2026}
\title{From Unstructured Data to Demand Counterfactuals: Theory and Practice\thanks{%
We are grateful to Steve Berry, Phil Haile, JF Houde, Jonathon McClure, Francesca Molinari, Takeaki Sunada and seminar and conference participants at Amazon, Arizona, ASSA 2026, BU, Chicago Booth, Cornell, Duke, EIEF, Montreal, Pittsburgh, Sciences Po, Seoul National, Toronto, Toulouse, UCLA, USC, Wisconsin, and Yale. This material is based upon work supported by the National Science Foundation under Award No.~2521471 (Christensen).
}}
\author{Timothy Christensen\thanks{%
Department of Economics, Yale University. \texttt{timothy.christensen@yale.edu}} 
\quad \quad
Giovanni Compiani\thanks{%
Booth School of Business, University of Chicago. \texttt{giovanni.compiani@chicagobooth.edu}}
}

\maketitle
\thispagestyle{empty}

\begin{abstract}  
\singlespacing
\noindent 
Empirical models of multi-product demand rely on low-dimensional product representations to capture substitution patterns, increasingly using proxies built from unstructured data. When proxies are imperfect, standard workflows yield biased counterfactuals and invalid inference. We develop a practical toolkit to address these issues. Our methods apply to market-level and/or individual data, require minimal additional computation, provide simple standard-error formulas, and accommodate proxies from fine-tuned models. Further, we propose diagnostics to assess proxy quality. Our methods yield meaningful improvements in predicting substitution in empirically calibrated simulations and in an application where we assess counterfactual prediction performance against a ground truth.

\vspace{15pt}
\noindent
\emph{Keywords}: Demand estimation, unstructured data, bias correction, fine-tuning, product embeddings, differentiated products.
\end{abstract}

\clearpage
\setcounter{page}{1}

\pagenumbering{arabic}

\newpage

\section{Introduction}

Estimating demand for differentiated products is central to many fields of economics and marketing. A common strategy is to specify the utility of a product as a function of the product's price and other observable attributes, often allowing for rich types of consumer heterogeneity (\cite{BLP_1995,BLP_micro_2004}, henceforth BLP). Among other applications, this approach has been used to study the impact of horizontal mergers \citep{nevo2000mergers}, new product launches \citep{hausman1994valuation, petrin2002quantifying}, trade policy \citep{goldberg1995product}, school choice \citep{bayer2007unified, neilson2017targeted}, two-sided markets \citep{fan2013ownership, lee2013vertical}, the evolution of markups over time \citep{grieco2024evolution}, and how firms should optimize prices and promotions \citep{vilcassim1995investigating}.

The success of demand models in predicting counterfactual quantities of interest (\emph{counterfactuals} hereafter) hinges on their ability to capture substitution patterns. Doing so requires knowledge of the product attributes that correctly reflect the underlying dimensions of differentiation, which poses a fundamental measurement challenge \citep{berry2021foundations}. First, consumer choices are often driven by hard-to-quantify characteristics, such as visual design, user friendliness, or style. In these cases, a growing literature shows that product images, descriptions, and reviews contain valuable information to capture substitution 
\citep{compiani2025demand, han2025copyright, lee2025generative, caouiEstimatingConsumerPreferences2025}. Consumer surveys and product recommendations on platforms  may also provide useful measures of product differentiation \citep{Magnolfi_et_al_2022, mcclure2025_recommendations}. To use these high-dimensional, unstructured data in demand models, practitioners transform them into lower-dimensional numerical variables, or \emph{embeddings}, using  machine learning (ML) methods. Second, even numeric attributes could be imperfect proxies \citep[e.g.,][]{nevo2001measuring, allcott2014gasoline}, or could be high-dimensional and collinear, requiring dimension-reduction \citep[e.g.,][]{Backus_et_al_2021}. In all of these cases, the variables used as inputs in the demand model are \emph{proxies} for the true attributes that drive consumer choices. It is essential that these proxies adequately capture the true dimensions of differentiation: poor proxies can lead to biased estimates of demand model parameters and biased counterfactuals. 

In this paper, we propose a simple, post-estimation \emph{bias correction} for counterfactuals. We take the \emph{naive estimator} that treats the proxies as if they were the true dimensions of differentiation (as is implicitly done in practice) and add a correction term designed to achieve two goals. First, it removes the asymptotic bias arising from using imperfect proxies. Second, it makes the bias-corrected estimator \emph{efficient}, meaning it has the smallest possible asymptotic variance within a broad class of estimators, when proxies may be imperfect. This is particularly important, as it is well known that estimating substitution can be data demanding. The intuition for the bias correction is simple: when proxies are imperfect, the model fit will tend to suffer in a way that correlates with the bias in the counterfactual estimator. Our approach leverages this correlation, mapping the discrepancy between the model and the data into a correction for the counterfactual.
We also provide closed-form standard errors, making it easy to perform valid inference on counterfactuals. 
 In addition, we show how two simple \emph{diagnostics} can be used to assess the adequacy of different proxies in capturing substitution. These diagnostics help guide the choice of \emph{how many} and  \emph{which} proxies/attributes should be included---questions that practitioners need to answer in any instance. 

We develop these methods for two widely-used empirical frameworks. The first follows \cite{BLP_1995,BLP_micro_2004}: prices vary across markets, instrumental variables are used to address price endogeneity, and market-level data may be supplemented with individual choice data. Many papers in industrial organization (IO) and fields using IO tools fit in this category. The second consists of models estimated on individual choice data with product-level fixed effects, which are common in marketing applications (see \cite{dube2019handbook} for a review). While we illustrate our approach for workhorse specifications (e.g., mixed logit with normal random coefficients), the methods apply to a broad class of parametric functional forms. 

The bias corrections and diagnostics are computationally light and integrate easily into the standard demand estimation workflow. The bias corrections take the naive parameter estimates produced by standard packages (e.g., PyBLP or xlogit) as inputs, and require neither bootstrapping nor additional optimization. Similarly, the diagnostics are Lagrange Multiplier (LM) statistics evaluated at the naive estimates. 

Our methods are based on two key insights. The first is that the demand model is misspecified when proxies are imperfect. Importantly, \emph{this is not a measurement error problem:} the proxies are at the product level whereas the unit of observation is the market and/or individual. By contrast, mismeasurement is at the observation level in a standard measurement error problem. The second insight is to reparameterize the model with a composite parameter that captures how proxies interact with structural parameters to affect utilities. The benefit of this framing is that it allows us to then address the misspecification problem using standard two-step methods. We develop formal theory to justify this approach. Importantly, the usual workflow that treats the proxies as the true attributes rests on assumptions that are stronger than the key conditions needed for our theory. Thus, our approach gains robustness to proxy error at little extra cost. Further, our diagnostics can be used to validate the key assumptions of our theory in empirical settings.

A further advantage of this approach is that it allows us to be agnostic about the nature of the proxy error.
This is particularly valuable because proxies are often obtained via black-box ML models, making it difficult to justify specific assumptions on the error.
This approach also accommodates data-dependent proxies, such as those obtained by \emph{fine-tuning}. For instance, practitioners may update the embeddings produced by off-the-shelf LLMs or neural networks to better fit the choice data. Moreover, this approach does not require taking a stand on the units of the proxies and/or true dimensions of differentiation. This is especially important for hard-to-quantify characteristics like visual design or user friendliness that lack natural units.

Two simulations calibrated to a standard dataset \citep{nevo2001measuring} and to the empirical application below  confirm that the bias correction improves performance for a range of levels of proxy error. Specifically, when bias in the naive estimator is material, the corrected estimator has lower bias and lower RMSE. Our formulas for standard errors also deliver confidence intervals that exhibit good coverage of the true counterfactual even when the naive estimator is substantially biased.

Finally, using the experimental data from \cite{compiani2025demand} (\citetalias{compiani2025demand} hereafter), we show that the bias correction meaningfully improves the model's ability to predict counterfactual choices following product removals. To this end, we leverage the fact that the data features both consumers' first and second choices. We estimate model parameters on the first choice data, then compare how well the naive and bias-corrected estimators predict second choices. This provides a ground truth to validate the performance of our approach in predicting counterfactuals. The bias correction improves the model's ability to predict a product's closest substitute from 40\% (with the naive estimator) to 70\%. Further, our diagnostics correctly identify the set of proxies that perform best at the counterfactual prediction task, indicating that they can be valuable tools for practitioners.

The mechanism behind the improvement is informative. The predictions are based on a mixed logit model with random coefficients on the embeddings. As documented in our simulations, when embeddings are poor proxies they are effectively discarded and the naive predictions revert to those of a plain logit model. This is what we see in the application: the naive predictions are biased in the direction of a plain logit model, whereas the bias correction recovers the true second-choice behavior.

Our approach is also helpful for practitioners using standard numeric attributes, as proxy error may be present even in this case, especially when dimension-reduction methods (e.g., PCA) are used to shrink the attribute set. Further, the choice of which numeric attributes to include is generally ad hoc, and our diagnostics can help practitioners in making these decisions. Whether or not proxy error is a concern, an additional contribution of this paper is to provide easy-to-compute, efficient estimators and standard errors for counterfactuals across many empirical settings, including combined market-level and microdata \citep[e.g.,][and many subsequent works]{petrin2002quantifying, BLP_micro_2004}. To the best of our knowledge, these contributions are new.\footnote{\cite{grieco2025optimal} study efficient estimation of model parameters in mixed logit models with combined market-level and microdata. Our focus is instead on efficient estimation of counterfactuals. For counterfactuals that depend on data moments in addition to model parameters (e.g., average diversions and average welfare),  efficient estimators of model parameters do not necessarily lead to efficient estimators of counterfactuals \citep{brown1998efficient,ai2012semiparametric}.}

Our approach is related to double/debiased ML (DML).\footnote{A recent application of DML in single-product demand estimation is \cite{bach2024adventures}. Unlike our approach, they assume their embeddings perfectly capture the true product attributes and use DML to correct for the first-stage estimation of nuisance functions, not to correct proxy error.} Both approaches aim to estimate a target parameter in the presence of nuisance parameters. In our setting, the target is the counterfactual and the nuisances are both the latent dimensions of differentiation, which are ``estimated'' using proxies, and the demand model parameters, which are themselves estimated using the proxies as inputs. This multi-step dependence on the proxies combined with the fact that the proxies are often outputs of black-box ML models trained on data to which one might have no access makes our setting quite different. Nevertheless, both approaches rely on standard orthogonalization ideas from the semiparametrics literature.

A recent literature recognizes that naively treating ML-generated variables as data leads to measurement-error bias and develops corrections for it. But as noted above, the problem we study is one of model misspecification rather than measurement error. Moreover, almost all strategies in this literature rely on validation data linking ML-generated variables and their ground-truth values.\footnote{See, e.g., \cite{fong2021machine,allon2023machine,angelopoulos2023prediction,egami2023using,zhang2023debiasing,carlson2025unifying} and references therein. These works build on an earlier literature on auxiliary data \citep{chen2005measurement,chen2008semiparametric}. One exception is \cite{battaglia2024inference} who develop analytical bias corrections for linear regression without validation data.} In our setting, however, the true dimensions of differentiation are latent, rendering these methods inapplicable.

The remainder of the paper is structured as follows. Section~\ref{sec:blp} presents our bias corrections and diagnostics for BLP-type models, while Section~\ref{sec: exog prices} does the same for models with individual-level choice data and product fixed effects. Both sections conclude with a practitioner's guide detailing the steps involved and giving practical recommendations. Simulations and the empirical application are presented in Sections~\ref{sec:simulations} and~\ref{sec:application}, respectively. Section~\ref{sec:theory} presents all theoretical results with all proofs deferred to Appendix~\ref{app:proofs}. Section~\ref{sec:conclusion} concludes.

\section{Case 1: Endogenous Prices}
\label{sec:blp}

We first consider a setting where prices vary at the market level and identification is achieved through instruments.

\subsection{Model and Data}

Following an established literature \citep{berry2014identification, freyberger2015asymptotic}, we assume that one has data on a large number $T$ of markets in which (subsets of) $J$ goods are sold.\footnote{For simplicity, we assume that $J$ is fixed. It is straightforward to extend our approach to asymptotic thought experiments where $J$ grows slowly with the number of markets $T$.} 
In addition to the outside option (denoted by 0), each market $t$ features products $\mathcal{J}_t \subseteq \left\{1, \ldots, J \right\} $, for which prices $ p_t =  (p_{jt})_{j \in \mathcal{J}_t}$, exogenous product attributes $ x_t =   (x_{jt})_{j \in \mathcal{J}_t}$, and market shares $s_t = (s_{jt})_{j \in \mathcal{J}_t}$ are observed. 
Consumer choices are also driven by unobservables $\xi_t = (\xi_{jt})_{j \in \mathcal{J}_t}$. The model predicts market shares as a function of $p_t$, $\xi_t$, $x_t$, and a parameter vector $\theta$:
\begin{equation}
	\label{eq: choice probab blp}
	s_{jt} = \sigma_j(p_t, \xi_t, x_t; \theta), \hspace{10pt} j \in \mathcal{J}_t.
\end{equation}
Prices are endogenous and  may be correlated with the unobservables. To address this, we rely on a vector of instrumental variables $z_t = (z_{jt})_{j \in \mathcal J_t}$ that satisfy\footnote{For ease of exposition, we focus only on demand-side moments but our approach extends naturally to supply-side moments as well.}
\begin{equation}
	\label{eq:instrument}
	\E[\xi_{jt} | z_{jt}] = 0, \hspace{10pt} j\in \mathcal{J}_t,  \hspace{10pt} t=1, \ldots, T.
\end{equation}
We also accommodate ``micro BLP'' settings \citep{BLP_micro_2004,berry2024nonparametric} where, in addition, individual-level data are available in a subset of markets. The microdata consists of choice indicators $d_{it} = (d_{ijt})_{j \in \mathcal J_t}$ taking the value $1$ if $i$ chose $j$ in market $t$ and $0$ otherwise, and demographics $y_{it}$ that vary at the consumer level, such as income, and/or $\bar y_{ijt}$ that vary at the product-consumer level, such as distance between a household's home and a school or a hospital. We assume the microdata are available for a fixed set of markets, labeled $t=1, \ldots, \tau$ without loss.\footnote{The case with microdata for all $T$ markets is simpler. Derivations are available upon request.} We treat the microdata as repeated cross sections of size $N_1,\ldots,N_\tau$.

This model subsumes many empirical specifications used in the literature. We provide two simple examples below to fix ideas.

\begin{example}[BLP]
 	\label{ex:blp}
 	The utility that individual $i$ derives from good $j$ in market $t$ is
 	\begin{equation}
 		\label{eq: utility BLP}
 	\begin{aligned}
 		u_{ijt} & =  \beta'_{i} x_{jt} -\alpha_i p_{jt}  + \xi_{jt} + \varepsilon_{ijt} , & & &  j \in \mathcal{J}_t , \\
 		u_{i0t} & = \varepsilon_{i0t} ,
 	\end{aligned}
 	\end{equation}
 	where the $\varepsilon_{ijt}$ are iid type 1 extreme value random variables. Market shares are 
 	\begin{equation}
 	\sigma_j(p_t, \xi_t, x_t; \theta)
	=  \int \frac{e^{ \beta' x_{jt} -\alpha p_{jt}  + \xi_{jt} }}{1+\sum_{k \in \mathcal J_t} e^{ \beta' x_{kt} -\alpha p_{kt}  + \xi_{kt}}} dF(\alpha, \beta; \theta), \hspace{10pt} j\in \mathcal{J}_t, \label{eq: sigma blp parametric}
 	\end{equation}
	for some parametric distribution $F$.
 \end{example}
 
 \begin{example}[Micro BLP]
 \label{ex:micro blp}
 	The utility is specified as:
 	 	\begin{equation}
 		\label{eq: utility micro BLP}
 		\begin{aligned}
 			u_{ijt} & =  \beta'_{i} x_{jt} -\alpha_i p_{jt} + y_{it}' \Pi (x_{jt},  p_{jt}) + \pi ' \bar y_{ijt} + \xi_{jt} + \varepsilon_{ijt} , & & & j \in \mathcal{J}_t, \\
 			u_{i0t} & = \varepsilon_{i0t} ,
 		\end{aligned}
 	\end{equation}
 Micro moments can be computed using the following expression for choice probabilities:
 	\begin{multline}
	\label{eq: choice probabilities micro BLP}
	Pr(d_{ijt}=1|y_{it},\bar y_{it}, p_t, \xi_t, x_t; \theta) \\
	= \int \frac{e^{ \beta' x_{jt} -\alpha p_{jt} + y_{it}' \Pi (x_{jt},  p_{jt}) +  \pi' \bar y_{ijt} + \xi_{jt} }}{1+\sum_{k \in \mathcal J_t} e^{ \beta' x_{kt} -\alpha p_{kt} + y_{it}' \Pi (x_{kt},  p_{kt}) +  \pi' \bar y_{ikt}  + \xi_{kt}}} dF(\alpha, \beta; \theta).
 	\end{multline}
     Market shares are obtained by integrating \eqref{eq: choice probabilities micro BLP} over the (known) distribution of $(y_{it},\bar y_{it})$ for $\bar y_{it} = (\bar y_{ijt})_{j \in \mathcal J_t}$.
 \end{example}

 \bigskip
 
So far the model is standard. Our point of departure is to partition 
\[
 x_{jt} \equiv (\bar x_{jt},e_j),
\]
where $\bar x_{jt}$ is a vector of conventional observed product attributes, such as product size, and $e_j$ is an $r$-vector of product attributes, such as visual design or user friendliness, that are difficult to capture using standard numeric data. Accordingly, we treat $e = (e_j')_{j=1}^J$ as known to the consumer but not observed in the data. Instead, the practitioner observes proxies $\tilde e = (\tilde e_j')_{j=1}^J$ for the true underlying $e$. Note that $e$ does not vary across markets, consistent with the fact that difficult-to-quantify characteristics such as visual design or user friendliness are often fixed product characteristics.

Where might $\tilde e$ come from? We consider two leading cases:

\begin{enumerate}
    \item \textbf{Unstructured data.}
    A growing literature computes low-dimensional representations (or \emph{embeddings}) $\tilde e_j$ of unstructured data via ML methods. To maximize generality, we stay agnostic on the ML method and the type of unstructured data. It could be text (e.g., product descriptions and reviews), images, audio/video, a combination thereof \citep{compiani2025demand,han2025copyright}, or data from consumer surveys \citep{Magnolfi_et_al_2022}. 
    \item \textbf{Standard quantitative attributes.}
    Our approach may equally be used to correct bias when included product attributes that do not vary across markets fail to correctly capture the dimensions of differentiation (e.g., the ``mushiness'' of cereal hand-coded by \cite{nevo2001measuring}). In these scenarios, the $\tilde e_j$ are proxies for the true latent attributes $e_j$, and $\bar x_{jt}$ represents the remaining attributes that are assumed to be perfectly observed.
\end{enumerate}

Unlike prior work which implicitly treats the $\tilde e_j$ as the ground truth, we account for the fact that proxies $\tilde e_j$ are only approximations to the true latent $e_j$. When the $\tilde e_j$ are poor proxies for $e_j$, standard workflows can deliver biased estimates of counterfactuals and invalid inference. Our first main goal is to develop estimators that are immune to this bias. Another goal is to shed light on what a ``good'' proxy might look like, which can guide the choice of unstructured data source and ML model. This objective is fundamentally different from the standard problem of choosing proxies for a prediction problem, since in our case the counterfactual is not observed in the data.

\subsection{Bias Correction for Counterfactuals}
\label{sec: counterfactual case 1}

We consider a broad class of counterfactuals that can be written as
\begin{equation}
	\label{eq: cf blp}
\kappa = \E[ k(p_t, \xi_t, \bar x_t, e; \theta)], 
\end{equation}
where the expectation is over the distribution of $(p_t, \xi_t, \bar x_t)$ across markets $t$.  For instance, $\kappa$ might represent an average price elasticity, average equilibrium price or consumer welfare measure (possibly after a counterfactual change on the supply side). Counterfactuals for a specific market, such as the price elasticity at a given $(p, \xi, \bar x)$, are also subsumed. In this case, $k$  is a deterministic function of $\theta$ and the expectation becomes redundant. As we discuss below, $\kappa$ could also represent certain elements of $\theta$, such as the average price coefficient.

In the standard workflow, given the observed aggregate data and a candidate set of proxies $\tilde e$, the model parameters $\theta$ are first estimated by GMM based on the moment\footnote{With slight abuse of notation, we now write $\sigma_j$ as functions of $\bar x_t$ and $e$, with the understanding that $\sigma_j$ depends on $e$ only through $(e_j)_{j \in \mathcal J_t}$, and similarly for $\hat \xi_t$.}
\[
\frac 1T \sum_{t=1}^T Z_t \hat \xi_{t}(s_t,p_t, \bar x_t, \tilde e; \theta),
\]
where $\hat \xi_{t}(s_t,p_t, \bar x_t, e; \theta) = (\hat \xi_{j}(s_t,p_t, \bar x_t, e; \theta))_{j \in \mathcal J_t}$ is defined implicitly via
\begin{equation} \label{eq:xi}
	s_{jt} = \sigma_j( p_t, \hat \xi_t, \bar x_t, e; \theta), \quad j \in \mathcal{J}_t,
\end{equation}
and $Z_t$ is a $\dim(z) \times |\mathcal J_t|$ matrix whose columns contain the original instruments $z_{jt}$ in~(\ref{eq:instrument}) or transformations thereof, such as optimal instruments (see Remark~\ref{rmk:optimal} below). 
With microdata, the GMM criterion is combined with a minimum-distance criterion based on the micro moments \citep[as in, e.g.,][]{conlon2025incorporating}.

Given an estimate $\hat \theta$ of $\theta$, the counterfactual $\kappa$ is usually estimated as
\begin{equation}
	\label{eq: kappa naive blp}
	\hat \kappa = \frac 1T \sum_{t=1}^T k(p_t, \hat \xi_t(s_t, p_t, \bar x_t, \tilde e; \hat \theta), \bar x_t, \tilde e; \hat \theta). 
\end{equation}
We call this the \emph{naive estimator} of $\kappa$ since it does not account for the fact that the proxies $\tilde e$ might differ from the true latent attributes $e$. This \emph{proxy error} has the potential to affect the estimator via two channels: (i) directly, since $\tilde e$ is an argument of $k$, and (ii) indirectly through both $\hat \theta$ and $\hat \xi_t$. 

We now introduce our bias correction procedure. We restrict attention to models in which the attributes $e$ and model parameters $\theta$ enter choice probabilities~(\ref{eq: sigma blp parametric}) via a \emph{composite parameter} 
\[ \gamma \equiv \gamma(\theta, e).
\]
Many models feature this property. We illustrate it in the BLP example.

\medskip

\begin{continuance}{ex:blp}
	We partition $\beta_i = (\beta_{\bar x, i}, \beta_{e,i})$ and write the utilities  as:
	\[
	u_{ijt}  =  \beta'_{\bar x, i} \bar x_{jt}  + \beta'_{e, i} e_{j} -\alpha_i p_{jt} + \xi_{jt} + \varepsilon_{ijt} ,\hspace{10pt} j\in \mathcal{J}_t.
	\]
	Suppose $\alpha_i \sim N(\bar \alpha, \sigma^2_\alpha)$, $\beta_{\bar x,i} \sim N(\bar \beta_{\bar x}, \Sigma_{\bar x})$, $\beta_{e,i} \sim N(\bar \beta_e, \Sigma_{e})$, and $\alpha_i$, $\beta_{\bar x,i}$ and $\beta_{e,i}$ are independent. Then, 
	\[
	\theta = \left(\bar \alpha, \sigma_\alpha, \bar \beta_{\bar x}, \bar \beta_e, l(\Sigma_{\bar x}), l(\Sigma_e) \right),
\] where $l(\Sigma)$ stacks the lower-triangular entries of the Cholesky factor of $\Sigma$ into a vector.\footnote{If $\Sigma_{\bar x}$ and/or $\Sigma_e$ are diagonal, then let $l(\Sigma_{\bar x})$ and/or $l(\Sigma_e)$ denote their diagonal entries.} Note that  $e_j$ only enters via $\beta'_{e,i} e_j$. 
Collecting $\beta'_{e,i} e_j$ across products, we have $e \beta_{e, i} \sim N(e \bar \beta_e, e \Sigma_e e')$, where $e \Sigma_e e'$ has rank $r \leq J$ because $e$ is $J \times r$. Hence,
	\[
	\gamma(\theta,e) = \left(\bar \alpha, \sigma_\alpha, \bar \beta_{\bar x}, e \bar \beta_e, l(\Sigma_{\bar x}), l_r(e \Sigma_{e}e')\right),
	\]
	where $l_r$ stacks the lower-triangular entries of the rank-$r$ Cholesky factor of $e \Sigma_e e'$.\footnote{As $e \Sigma_e e'$ is $J \times J$ with rank $r$, $l_r(e \Sigma_{e}e')$ is the unique $J \times r$ matrix $L$ whose above-diagonal entries are all zeros and whose diagonal entries are all positive, such that $LL' = e \Sigma_e e'$.}
    For instance, when both $\bar x_j$ and $e_j$ are scalars and $J=2$, we have
    \[\gamma(\theta, e) = \left(\bar \alpha, \sigma_\alpha, \bar \beta_{\bar x}, e_1 \bar \beta_e, e_2 \bar \beta_e, \sigma_{\bar x}, \sigma_e |e_1|, \sigma_e e_2 \mathrm{sign}(e_1) \right),\]
    where $\sigma_{\bar x}$ and $\sigma_e$ are the standard deviations of the random coefficients on $\bar x_j$ and $e_j$.
\end{continuance}

\bigskip

As can be seen from this example, parameters that do not interact with $e$, e.g., the average price coefficient $\bar \alpha$, are left unchanged in $\gamma$. The remaining components of $\gamma$  capture how $e$ and $\theta$ interact to jointly affect utilities. A similar reparameterization for Example~\ref{ex:micro blp} is provided in Appendix~\ref{app:micro BLP}.

This reparameterization allows us to simplify notation as follows. First, we note that the right-hand side of (\ref{eq:xi}) depends on $(\theta, e)$ only via $\gamma(\theta,e)$. Thus, we write $\hat \xi_{jt}(\gamma(\theta, e)) = \hat \xi_{j}(s_t,p_t, \bar x_t, e; \theta)$ for $j \in \mathcal J_t$ (suppressing dependence on $s_t,p_t, \bar x_t$) and let $\hat \xi_t(\gamma(\theta, e)) = (\hat \xi_{jt}(\gamma(\theta, e)))_{j \in \mathcal J_t}$. 
We similarly restrict attention to counterfactuals that depend on $(\theta, e)$ via $\gamma(\theta, e)$ and write $k_t(\gamma(\theta, e)) = 	k(p_t, \hat \xi_t(\gamma(\theta, e)), \bar x_t, e; \theta)$. This includes many counterfactuals, such as elasticities with respect to prices or $\bar x$, equilibrium prices, and welfare changes associated with changes in prices or $\bar x$. It precludes quantifying the effect of changes in $e$ or measuring heterogeneity in preferences for  $e$, but these have little meaning when $e$ has no natural scale or interpretation.

Without microdata, the \emph{bias-corrected estimator} is
\begin{equation}
\label{eq: kappa_bc blp linear no micro}
 \hat \kappa_{bc} = \frac 1T \sum_{t=1}^T \big( k_t(\hat \gamma) - \hat c' Z_t \hat \xi_t(\hat \gamma) \big) ,
\end{equation}
where $\hat \gamma = \gamma(\hat \theta, \tilde e)$ denotes the value of $\gamma$ at the estimated structural parameters $\hat \theta$ and the candidate proxies $\tilde e$, and $\hat c$ is a $\dim(z) \times 1$ vector of weights. We give a closed-form expression for $\hat c$ below. This bias correction is easy to implement: it simply takes the naive estimator $\frac 1T \sum_{t=1}^T k_t(\hat \gamma)$ and adds a weighted average of the estimation moments. Thus, it requires minimal computation beyond what is needed to estimate model parameters $\theta$ in the first place.

With microdata, the bias correction also depends on micro moments. Let $\bar m_t = \frac{1}{N_t} \sum_{i=1}^{N_t}  m_{it}$, where $m_{it} = m(y_{it},\bar y_{it}, d_{it})$ is a known function of demographics and choice data for individual $i$ in market $t$. Let $m(p_t, \xi_t, \bar x_t, e; \theta)$ denote the model-implied expectation of $m_{it}$ conditional on the market-level data:
\[
 m(p_t, \xi_t, \bar x_t, e; \theta) = \E[m_{it} | p_t, \xi_t, \bar x_t, e].
\]
As the choice probabilities~(\ref{eq: choice probabilities micro BLP}) depend on $(\theta, e)$ only via $\gamma(\theta, e)$, we write $m_t(\gamma(\theta, e)) = m(p_t, \hat \xi_t(\gamma(\theta, e)), \bar x_t, e; \theta)$. With this notation, the bias-corrected estimator is
\begin{equation}
\label{eq: kappa_bc blp linear}
\hat \kappa_{bc} = \frac 1T \sum_{t=1}^T \big( k_t(\hat \gamma) - \hat c' Z_t \hat \xi_t(\hat \gamma) \big) + \sum_{t = 1}^\tau \hat d_t' \big( \bar m_t - m_t(\hat \gamma) \big),
\end{equation}
where $\hat c$ and $\hat \gamma$ are as above, and $\hat d_1,\ldots,\hat d_{\tau}$ are $\dim(m) \times 1$ vectors of weights for the micro moments, given below. 

The idea behind (\ref{eq: kappa_bc blp linear no micro}) and (\ref{eq: kappa_bc blp linear}) is to choose the weights so that $\hat \kappa_{bc}$ does not depend on $\hat \gamma$ to first order.\footnote{There is a long tradition of using corrections such as these in two-step estimation. See, e.g., \cite{andrews1994asymptotics} and \cite{newey1994asymptotic}. Of course, similar debiasing ideas underlie the DML literature.} This means that, to first order, $\hat \kappa_{bc}$ behaves like the right-hand side of \eqref{eq: kappa_bc blp linear no micro} or \eqref{eq: kappa_bc blp linear} with $\hat \gamma$ replaced by the true value $\gamma_0 = \gamma(\theta_0, e_0)$, where $\theta_0$ are the true structural parameters and $e_0$ are the true latent attributes. Section~\ref{sec:theory blp} formally shows that  $\hat \kappa_{bc}$ is asymptotically unbiased and its distribution does not depend on $\hat \gamma$. This has two important implications. First, $\hat \kappa_{bc}$ is immune to bias due to proxying $e$ with $\tilde e$. Second, standard errors do not need to be corrected when $\tilde e$ is chosen in a data-dependent way, e.g., by fine-tuning an ML model on choice data. 

For the intuition, consider the case without microdata. A Taylor expansion of the naive estimator yields 
\[
 \hat \kappa \approx \frac{1}{T} \sum_{t=1}^T \left( k_t(\gamma_0) + \frac{\partial k_t(\hat \gamma)}{\partial \gamma'}(\hat \gamma - \gamma_0)\right).
\]
We wish to eliminate the second problematic term depending on $\hat \gamma$. We replicate this term using the estimation moments, by choosing $\hat c$ so that
\[
 \frac{1}{T} \sum_{t=1}^T \frac{\partial k_t(\hat \gamma)}{\partial \gamma'} = \hat c' \left( \frac{1}{T} \sum_{t=1}^T  Z_t \frac{\partial \hat \xi_t(\hat \gamma)}{\partial \gamma'} \right).
\]
Substituting in the previous display and ``undoing'' the Taylor expansion, we get
\[
 \begin{aligned}
 \hat \kappa & \approx \frac{1}{T} \sum_{t=1}^T \left( k_t(\gamma_0) + \hat c' Z_t \frac{\partial \hat \xi_t(\hat \gamma)}{\partial \gamma'}(\hat \gamma - \gamma_0) \right) \\
 & \approx \frac{1}{T} \sum_{t=1}^T \left( k_t(\gamma_0) + \hat c' Z_t \hat \xi_t(\hat \gamma) - \hat c' Z_t \hat \xi_t(\gamma_0) \right).
 \end{aligned}
\]
This suggests that to correct bias we want to adjust the naive estimator by subtracting $\frac{1}{T}\sum_{t=1}^T ( \hat c' Z_t \hat \xi_t(\hat \gamma) - \hat c' Z_t \hat \xi_t(\gamma_0) )$. The final (infeasible) term depending on $\gamma_0$ has mean zero by virtue of~\eqref{eq:instrument}, so we drop it, leading to the corrected estimator~\eqref{eq: kappa_bc blp linear no micro}. This correction therefore ensures that, to first order, $\hat \kappa_{bc}$ depends only on $\gamma_0$.

What assumptions are needed for this result? Besides standard regularity conditions, we require that the discrepancy between $\hat \gamma$ and $\gamma_0$ not be too large relative to sampling error (see Section~\ref{sec:theory blp}), which is a standard condition in two-step estimation (e.g., \cite{newey1994asymptotic}). A few points on this condition. First, given the nonlinear nature of multi-product demand models, a condition on the discrepancy between $\hat \gamma$ and $\gamma_0$ seems unavoidable. Second, the usual demand estimation workflow implicitly assumes that $\tilde e$ is the true $e$. This, in turn, already implies that the discrepancy between $\hat \gamma$ and $\gamma_0$ is much smaller than required by our theory (see Remark~\ref{rmk:vicinity}). Thus, our approach gains robustness to proxy error and its theoretical guarantees hold under weaker conditions than are already assumed by the usual workflow. Third, we provide diagnostics to validate this condition in empirical settings. 

While this condition is made to justify our asymptotic theory, asymptotics are only useful insofar as they deliver accurate approximations to the finite-sample distribution of $\hat \kappa_{bc}$ encountered in practice. In any finite sample, this condition requires that $\hat \gamma$ be within a vicinity of $\gamma_0$ that is roughly double the order of sampling uncertainty. This seems plausible, because $\hat \gamma = \gamma(\hat \theta, \tilde e)$, where $\hat \theta$ (and possibly $\tilde e$, when fine-tuning) is estimated on the choice data. In simulations calibrated to the \cite{nevo2001measuring} data, Figure~\ref{fig:bias rmse sim case 1} shows that the bias correction is effective even when the proxy error is sufficiently large that the naive estimator is badly biased.

We also require that the $e_j$ and $\tilde e_j$ have the same dimension. This condition is again implicitly assumed in the usual workflow and can be validated with our diagnostics.

To introduce the expressions for the weights $\hat c$ and $\hat d_1,\ldots,\hat d_\tau$, let 
\[
\begin{aligned}
	\hat V & = \frac 1T \sum_{t=1}^T (Z_t \hat \xi_t(\hat \gamma))(Z_t \hat \xi_t(\hat \gamma))' - \bar g \bar g' , \\
	\hat V_t & = \frac{T}{N_t} \left( \frac{1}{N_t} \sum_{i=1}^{N_t} m_{it} m_{it}' - \bar m_t \bar m_t' \right), \quad t = 1,\ldots,\tau,
\end{aligned}
\]
denote the sample variance of the estimation moments, where $\bar g = \frac 1T \sum_{t=1}^T Z_t \hat \xi_t(\hat \gamma)$. Define 
\begin{equation}\label{eq:h blp}
\begin{aligned}
	\hat h & = \hat k - \hat G' \hat V^{-1} \hat K,  & & & 
	\hat H & =  \hat G' \hat V^{-1}  \hat G  + \sum_{t=1}^\tau \hat M_t' \hat V_t^{-1} \hat M_t , 
\end{aligned}
\end{equation}
where $\hat k = \frac 1T \sum_{t=1}^T \dot k_t(\hat \gamma)$ is $\dim(\gamma) \times 1$, $\hat K = \frac 1T \sum_{t=1}^T k_t(\hat \gamma) Z_t \hat \xi_t(\hat \gamma) - \hat \kappa \bar g$ is $\dim(z) \times 1$, $\hat G = \frac 1T \sum_{t=1}^T Z_t \dot \xi_t(\hat \gamma)$ is $\dim(z) \times \dim(\gamma)$, $\hat M_t = \dot m_t(\hat \gamma)$ is $\dim(m) \times \dim(\gamma)$, 
and
$\dot k_t(\gamma) = \frac{\partial k_t(\gamma)}{\partial \gamma}$, $\dot \xi_{t}(\gamma)' = \frac{\partial \hat \xi_{t}(\gamma)'}{\partial \gamma}$, and $\dot m_{t}(\gamma)' = \frac{\partial m_{t}(\gamma)'}{\partial \gamma}$ are $\dim(\gamma) \times 1$, $\dim(\gamma) \times J_t$, and $\dim(\gamma) \times \dim(m)$, respectively.
The weights to plug into \eqref{eq: kappa_bc blp linear} are
\begin{equation}
\label{eq: c^hat blp}
	\hat c  = \hat V^{-1} ( \hat K  +\hat G \hat H^{-1} \hat h ) , 
\end{equation}
and
\begin{equation}
\label{eq: d^hat blp}
	\hat d_t  = \hat V_t^{-1} \hat M_t \hat H^{-1} \hat h, \quad t = 1,\ldots,\tau.
\end{equation}
Without microdata, $\hat d_1 = \ldots = \hat d_\tau = 0$ and $\hat H = \hat G' \hat V^{-1}  \hat G$.

We defer formal statements of the results sketched out above to Section~\ref{sec:theory blp}, and instead highlight a few key properties of the corrected estimator.
	
\begin{remark}[Easy to compute standard errors]
\label{rmk:std errors}
The asymptotic variance of the bias-corrected estimator can be estimated as follows:
\begin{equation}
\label{eq:standard errors blp}
\hat V_{bc} = \hat s_k^2 + \hat c' \hat V \hat c - 2 \hat c' \hat K + \sum_{t=1}^\tau \hat d_t' \hat V_t \hat d_t,
\end{equation}
where $\hat s_k^2 = \frac 1T \sum_{t=1}^T k_t(\hat \gamma)^2 - \hat \kappa^2$ is the sample variance of $k_t(\hat \gamma)$. Without microdata, the expression simplifies to 
\begin{equation}
\label{eq:standard errors blp no micro data}
\hat V_{bc} = \hat s_k^2 + \hat c' \hat V \hat c - 2 \hat c' \hat K. 
\end{equation}
In either case, standard errors for $\hat \kappa_{bc}$ are $\sqrt{\hat V_{bc}/T}$.
Again, these are closed-form expressions involving quantities that are easy to compute given $\hat \theta$.
\end{remark}
	
\begin{remark}[Efficiency]
While there are many different choices of weights $\hat c$ and $\hat d_1,\ldots,\hat d_\tau$ that correct bias, the weights given above are chosen so that the asymptotic variance of $\hat \kappa_{bc}$ is as small as possible given the instruments. See Proposition~\ref{prop:efficiency} for a formal statement. Importantly, $\hat \kappa_{bc}$ remains efficient even if $\hat \theta$ is inefficient. 
\end{remark}
	
\begin{remark}[Fine-Tuning]
The bias correction and standard error formulas allow the proxies $\tilde e$ to be sample-dependent. This accommodates scenarios where an off-the-shelf algorithm has been fine-tuned to obtain embeddings that better fit the choice data. Thus, there is no need to correct the standard errors for fine-tuning. 
\end{remark}

\begin{remark}[Optimal Instruments]\label{rmk:optimal}
We recommend transforming the instruments $z_t$ in~(\ref{eq:instrument}) into ``optimal instruments'' for $\gamma$. An advantage of this approach is that the dimension of $Z_t$ is always compatible with that of $\gamma$. That is, no additional instruments are needed to implement the bias correction: one just takes different transformations of the original instruments $z_t$ in~(\ref{eq:instrument}). 

To construct optimal instruments for $\gamma$, we adapt the approach of \cite{conlon2020best} with a demand side only. We regress each $p_{jt}$ on the full set of instruments $z_t$ and let $\hat p_{jt}$ denote the fitted values. We then compute market shares $\hat s_{jt} = \sigma_j(\hat p_t, \xi_t, \bar x_t, \tilde e; \hat \theta)$ at the predicted prices $\hat p_t = (\hat p_{jt})_{j \in \mathcal J_t}$ and $\xi_t = 0$. Writing $\tilde \xi_{jt}(\gamma(\theta, e)) = \hat \xi_j(\hat s_t, \hat p_t, \bar x_t, e; \theta)$ with $\hat s_t = (\hat s_{jt})_{j \in \mathcal{J}_t}$, we then compute 
\[
 \tilde z_{jt} = \frac{\partial \tilde{\xi}_{jt}(\hat \gamma)}{\partial \gamma}, \quad j \in \mathcal J_t.
\]
Finally, we concatenate the $\tilde z_{jt}$ across products to form the $\dim(\gamma) \times |\mathcal{J}_t|$ matrix $Z_t$. 
\end{remark}

\begin{remark}[Product or Market Fixed Effects]\label{rmk:fes}
Our method accommodates additive product and/or market fixed effects  in utilities by de-meaning $Z_t$ and $\hat \xi_t$ at the product and/or market level before forming moments (see, e.g., \cite{somainiAlgorithmEstimateTwoWay2016} and \cite{conlon2020best}). All methods can be implemented as described in this section based on moments formed from the de-meaned variables. The theory developed in Section~\ref{sec:theory blp} carries over, as discussed further in Appendix~\ref{app:fes}.
\end{remark}

\subsection{Diagnostics}
\label{sec:blp diagnostics}

Next, we propose two diagnostics that practitioners can use to assess the suitability of a candidate set of proxies $\tilde e$. The first speaks to whether the $\hat \gamma \equiv \gamma(\hat \theta, \tilde e)$ induced by $\tilde e$ is sufficiently close to the truth $\gamma_0 \equiv \gamma(\theta_0, e_0)$. The second addresses the question of whether the dimension of $\tilde e$ matches that of $e_0$. Both diagnostics are based on LM statistics evaluated at $\hat \gamma$, so they require minimal additional computation. We view these diagnostics as useful tools to make the standard workflow more rigorous and transparent. Developing an inferential theory that fully integrates the diagnostics as pre-tests might be an interesting topic for future research.

\subsubsection{Diagnostic 1: Is $\hat \gamma$ Close To $\gamma_0$?}
\label{sec:goodness of fit blp}

The bias correction is based on linearization, and its theory developed in Proposition~\ref{prop:kappa.endogenous} below requires the unavoidable assumption that the discrepancy between $\hat \gamma$ and $\gamma_0$ not be too large relative to sampling error, as discussed above.
Here we show that a simple LM statistic can be used to validate this condition. This diagnostic also helps guide the choice of proxies (e.g., embeddings from various data sources or ML methods).

The diagnostic is
\begin{equation}
\label{eq: LM1 blp}
LM_1 = \|\sqrt T \hat H^{-1/2} \hat S \|^2,
\end{equation}
where $\hat S = \hat G' \hat V^{-1} ( \frac 1T \sum_{t=1}^T Z_t \hat \xi_t(\hat \gamma)) + \sum_{t=1}^\tau \hat M_t' \hat V_t^{-1}  (m_t(\hat \gamma) - \bar m_t)$ represents the ``score'' at $\hat \gamma$ and $\hat H$ is given in (\ref{eq:h blp}). This diagnostic can be interpreted as an LM statistic for testing the null hypothesis that $\gamma_0$ is in the set of composite parameters spanned by the candidate proxies $\tilde e$, ignoring the fact that $\tilde e$ is possibly stochastic.\footnote{Formally, a test of $\mathbb{H}_0: \gamma_0 \in \Gamma(\tilde e) := \{\gamma(\theta,\tilde e) : \theta \in \Theta\}$ against the alternative $\mathbb{H}_1:  \gamma_0 \in \Gamma \setminus \Gamma(\tilde e)$, where $\Theta$ and $\Gamma$ are the parameter spaces for $\theta$ and $\gamma$, respectively.}

Proposition~\ref{prop:lm.fit.gamma.2} below shows that $LM_1$ behaves asymptotically like $\|\sqrt T (\hat \gamma - \gamma_0)\|^2$. This allows one to validate the assumption that the discrepancy between $\hat \gamma$ and $\gamma_0$ is sufficiently small, as needed in Proposition~\ref{prop:kappa.endogenous}, by checking whether $LM_1$ is below a 
threshold. See the practitioner's guide at the end of this section for a concrete recommendation, and Section~\ref{sec:theory blp} for a formal discussion.
It can also be shown under a slight strengthening of the conditions of Proposition~\ref{prop:lm.fit.gamma.2} that $LM_1$ can be used to bound the distance between $\tilde e$ and the true latent attributes driving consumer choices (see Proposition~\ref{prop:lm.fit.e} for case 2). This is especially useful as the true attributes can never be observed. As a result, $LM_1$ also serves as a criterion to target when choosing among proxies and/or fine-tuning.

\subsubsection{Diagnostic 2: Is The Dimension of $\tilde e$ Correct?}
\label{sec:rank blp}

Practitioners also have to choose the dimension of $\tilde e$, i.e., how many proxies to include per product. This is particularly delicate when the proxies lack a natural economic interpretation, e.g., when they are obtained from black-box algorithms or by applying PCA to many numeric attributes. For example, in the application of Section~\ref{sec:application}, we use PCA to reduce the dimension of proxies obtained from pre-trained algorithms and need to choose how many principal components to include.

We guide this choice by providing a second diagnostic, denoted $LM_2$, again based on an LM statistic. The idea is to augment $\tilde e$ with a vector $\eta \in \mathbb R^J$ representing some excluded but potentially important product attributes and augment $\theta$ with an additional component $\psi \in \Psi$ representing coefficients on $\eta$. The diagnostic $LM_2$ is then an LM statistic for the null that $\psi = 0$. As with $LM_1$, it depends on $\hat \theta$ only and therefore requires minimal additional computation. Full details are deferred to Appendix~\ref{app: LM2 case 1}.

\subsection{Practitioner's Guide}

Given a counterfactual of interest $\kappa$ and a candidate set of proxies $\tilde e$:
\begin{enumerate}
    \item Compute the model parameter estimates $\hat \theta$ as usual, treating $\tilde e$ as the truth.
    \item Compute weights $\hat c$ in \eqref{eq: c^hat blp} and, if microdata is available, weights $\hat d_t$ in \eqref{eq: d^hat blp}.
    \item Plug the weights into \eqref{eq: kappa_bc blp linear} to compute the bias-corrected estimator $\hat \kappa_{bc}$. If only aggregate data is available, use \eqref{eq: kappa_bc blp linear no micro} instead.
    \item Compute standard errors for $\hat \kappa_{bc}$ using the formulas in Remark \ref{rmk:std errors}.
    \item To check whether a given set of proxies $\tilde e$ is adequate:
    \begin{enumerate}
        \item Compute the $LM_1$ statistic in \eqref{eq: LM1 blp}. If it is below a threshold $C_T^2$, conclude that the $\hat \gamma$ induced by $\tilde e$ is sufficiently close to $\gamma_0$. In Section~\ref{sec:goodness of fit blp.theory}, we motivate a threshold of $C_T^2 = \chi^2_{\dim(\gamma), 0.95} \log T$.
        \item Compute the $LM_2$ statistic in \eqref{eq: LM2 blp}. If it is below a threshold $\hat \xi^*_{0.95}$, conclude that $\tilde e$ is of adequate dimension. The bootstrap method in Appendix~\ref{app: LM2 case 1} can be used to compute $\hat \xi^*_{0.95}$.
    \end{enumerate}
\end{enumerate}

\subsection{What About Models with Standard Numeric Attributes?}

Our approach also provides a way to robustly and efficiently estimate counterfactuals and validate some of the assumptions implicitly made in the demand estimation literature in contexts where only standard numeric attributes are available.
The typical workflow implicitly assumes that the product attributes included in the model perfectly capture the dimensions of differentiation. Our bias correction allows practitioners to relax this assumption. To do so, the bias-corrected estimator can be implemented as above, where now $\tilde e$ simply represents the numeric attributes that may be imperfect proxies. 
Similarly, our diagnostics may be used to choose which and how many attributes to include. 

Even in the special case that proxy error is not a concern, our bias-corrected estimator and standard error formulas can be used as above with $\gamma \equiv \theta$ to easily perform efficient inference on counterfactuals. This approach offers a few advantages: (i) it yields efficient estimates of counterfactuals (given the instruments) even when $\hat \theta$ is inefficient, (ii) standard errors for the counterfactual are available in closed form; and (iii) it allows for combined market-level and microdata.

\section{Case 2: Individual-Level Price Variation}
\label{sec: exog prices}

Next, we consider settings with individual-level price variation and individual choice data. Following an established literature \citep[as reviewed in][]{dube2019handbook}, we include product fixed effects to account for systematic differences across products and rule out any remaining price endogeneity. 

\subsection{Model and Data}

We assume that one has data on a large number $n$ of consumers in a single market\footnote{For ease of exposition, we present results for a single market, though our approach extends easily to settings with multiple markets.} in which $J$ goods are sold,\footnote{As before, we assume that $J$ is fixed but it is straightforward to extend our analysis to asymptotic thought experiments where $J$ grows slowly with $n$.} and identify the outside option with $j = 0$. For each consumer $i$, we observe individual choices $d_i = (d_{ij})_{j=1}^J$ where $d_{ij} = 1$ if $i$ chooses good $j$ and $0$ otherwise, $p_i = (p_{ij})_{j=1}^J$ which collects prices and other variables (e.g., rankings on the results page) that vary across consumers, and a vector of demographics $y_i$. For each product $j$, the data  may also contain attributes $x_j$ that are common across all consumers. The model predicts choice probabilities as a function of $p_i$, $y_i$, $x = (x_j)_{j=1}^J$, and a parameter vector $\theta$:
\begin{equation}\label{eq:choice probs exogenous}
 Pr(d_{ij} = 1|p_i, y_i, x; \theta) = \sigma_j(p_i, y_i, x; \theta), \quad j =1,\ldots,J.
\end{equation}
This setting covers many empirical examples. Here we give just one standard model.

\begin{example}[Mixed Logit with Fixed Effects]
\label{ex:mixed logit}
    The utility that individual $i$ derives from good $j$ is of mixed logit form with microdata:
    \[
    \begin{aligned}
    	u_{ij} & = \alpha_i' p_{ij} + \beta_i' x_j + y_i' \Pi x_j + \xi_j + \varepsilon_{ij} , & & & j = 1,\ldots,J , \\
    	u_{i0} & = \varepsilon_{i0} ,
    \end{aligned}
    \]
    where $\xi_j$ is a product fixed effect and the $\varepsilon_{ij}$ are iid type 1 extreme value random variables. The vector $x_j$ collects characteristics with random coefficients and/or interactions with $y_i$; remaining characteristics are absorbed into the product fixed effects. Similarly, the means of the random coefficients $\beta_i$ can be normalized to zero due to the presence of the fixed effects. Choice probabilities are 
    \[
    \sigma_j(p_i, y_i, x; \theta) = \int \frac{e^{ \alpha' p_{ij} + \beta' x_j + y_i' \Pi x_j + \xi_j }}{1+\sum_{k =1}^J e^{ \alpha' p_{ik} + \beta' x_k + y_i' \Pi x_k + \xi_k }} dF(\alpha, \beta; \theta), \hspace{10pt} j = 1,\ldots,J, 
    \]
	where $F$ is a parametric distribution and $\theta$ contains $(\xi_j)_{j=1}^J$ and other parameters.
\end{example}

\bigskip

Our point of departure is again to partition $x_j \equiv (\bar x_j,e_j)$, where $\bar x_j$ collects quantitative product attributes observed by the practitioner, such as product size, and $e_j$ is an $r$-vector of product attributes that are not observed  in the data. Instead, one has proxies $\tilde e = (\tilde e_j')_{j=1}^J$ for the true underlying $e = (e_j')_{j=1}^J$. We stay agnostic on the form that $\tilde e$ takes. A leading case is when $\tilde e$ are embeddings computed to represent unstructured data, though our approach may equally be used when $\tilde e$ represents any product attributes that fail to correctly capture the true dimensions of differentiation.

\subsection{Bias Correction for Counterfactuals}
\label{sec:bc}

We again consider a broad class of counterfactuals that can be written as 
\[
\kappa = \E[ k(p_i, y_i, \bar x, e; \theta )],
\]
where the expectation is over the distribution of $(p_i,y_i)$, and $\bar x = (\bar x_j)_{j=1}^J$. Section~\ref{sec: counterfactual case 1} lists several counterfactuals of interest that are covered by this class. 

In the usual workflow, model parameters are estimated using the observed data and a candidate set of proxies $\tilde e$. For instance, one could use maximum likelihood  
estimation based on
\[
\sum_{i=1}^n \sum_{j = 0}^J d_{ij} \log \sigma_{j}(p_i,y_i,\bar x, \tilde e; \theta),
\]
where $\sigma_0 = 1-\sum_{j=1}^J \sigma_j$ is the probability of the outside option and $d_{i0} = 1 - \sum_{j=1}^J d_{ij}$.
Given an estimate $\hat \theta$ of $\theta$, the counterfactual $\kappa$ is usually estimated as 
\begin{equation}
\label{eq:kappa.naive}
	\hat \kappa = \frac 1n \sum_{i=1}^n k(p_i, y_i, \bar x, \tilde e; \hat \theta ).
\end{equation}
As before, we refer to this as the \emph{naive estimator} of $\kappa$ since it does not account for the fact that the proxies $\tilde e$ might differ from the true latent attributes. This proxy error affects the naive estimator directly, as $\tilde e$ is an argument of $k$, and indirectly through the first-stage estimate $\hat \theta$, as $\tilde e$ enters the likelihood. 

We now introduce our bias-correction. Again, we consider models in which $e$ and $\theta$ enter choice probabilities (\ref{eq:choice probs exogenous}) via a composite parameter
\[ 
\gamma \equiv \gamma(\theta, e).
\]
Many common models have this property. We illustrate it in our leading example.

\medskip

\begin{continuance}{ex:mixed logit}
We partition $\beta_i = (\beta_{\bar x, i}, \beta_{e, i})$ and $\Pi = [\Pi_{\bar x} \; \Pi_e]$ and write
\[
u_{ij} = \alpha_i' p_{ij} +  y_i' \Pi_{\bar x} \bar x_j + y_i' \Pi_e e_j + \xi_j + \beta_{\bar x,i}' \bar x_j + \beta_{e,i}' e_j + \varepsilon_{ij}, \quad j = 1,\ldots,J,
\]
Suppose $\alpha_i \sim N(\bar \alpha, \Sigma_{\alpha})$, $\beta_{\bar x, i} \sim N(0, \Sigma_{\bar x})$, and $\beta_{e, i} \sim N(0, \Sigma_e)$, and $\alpha_i$, $\beta_{\bar x, i}$, and $\beta_{e, i}$ are independent. Recall the notation $l$ and $l_r$ from Example~\ref{ex:blp}. Then,
\[
 \theta = (\bar \alpha, \xi, v(\Pi_{\bar x}), v(\Pi_e), l(\Sigma_{\alpha}), l(\Sigma_{\bar x}), l(\Sigma_{e})),
\]
where $\xi = (\xi_j)_{j=1}^J$, and $v(\Pi)$ stacks the entries of $\Pi$ into a vector. Collecting $\beta_{e,i}' e_j$ across products, we have $e \beta_{e, i} \sim N(0, e \Sigma_e e')$, where $e \Sigma_e e'$ has rank $r \leq J$. Hence,
\[
 \gamma(\theta, e) = (\bar \alpha, \xi, v(\Pi_{\bar x}), v(e \Pi_{e}'), l(\Sigma_{\alpha}), l(\Sigma_{\bar x}), l_r(e\Sigma_{e}e')).
\]
As before, parameters that do not interact with $e$, such as $\bar \alpha$ and $\xi$, are left unchanged in $\gamma$. The remaining components of $\gamma$ capture how $e$ and $\theta$ interact to jointly affect utilities.
\end{continuance}

\bigskip

We assume in what follows that choice probabilities $\sigma$ and the counterfactual function $k$ depend on $(\theta, e)$ via $\gamma(\theta, e)$, and write 
\[
\begin{aligned}
	\sigma_{ij}(\gamma(\theta, e)) & = \sigma_j(p_i, y_i, \bar x, e; \theta) , \quad j = 1,\ldots,J, \\
    k_i(\gamma(\theta, e)) & = k(p_i, y_i, \bar x, e; \theta) .
\end{aligned}
\]
We then define the \emph{bias-corrected estimator}
\begin{equation}\label{eq:kappa.bc}
	\hat \kappa_{bc} = \frac 1n \sum_{i=1}^n \left( k_i(\hat \gamma) + \hat c_i'(d_i - \sigma_i(\hat \gamma)) \right),
\end{equation}
where $\hat \gamma = \gamma(\hat \theta, \tilde e)$ is the value of $\gamma$ pinned down by $\tilde e$ and the first-stage estimate $\hat \theta$, and $\hat c_i$ is a $J$-vector of weights given by 
\begin{equation}\label{eq:c.hat}
	\hat c_i = V_i (\hat \gamma)^{-1}  \dot \sigma_i(\hat \gamma) \hat H^{-1} \hat k,
\end{equation}
$d_i = (d_{ij})_{j=1}^J$ and $\sigma_i(\gamma) = (\sigma_{ij}(\gamma))_{j=1}^J$ are $J$-vectors, $V_i(\gamma) = \mathrm{diag} (\sigma_i(\gamma)) - \sigma_i(\gamma) \sigma_i(\gamma)'$ is  $J \times J$, $\hat H= \frac 1n \sum_{i=1}^n \dot \sigma_i(\hat \gamma)' V_i (\hat \gamma)^{-1} \dot \sigma_i(\hat \gamma)$ is  $\dim(\gamma) \times \dim(\gamma)$, $\hat k = \frac 1n \sum_{i=1}^n \dot k_i(\hat \gamma)$ and $\dot k_i(\gamma) = \frac{\partial k_i(\gamma)}{\partial \gamma}$ are $\dim(\gamma) \times 1$, and $\dot \sigma_i(\gamma)' = \frac{\partial \sigma_i(\gamma)'}{\partial \gamma}$ is $\dim(\gamma) \times J$. Importantly, $\hat \kappa_{bc}$ requires minimal additional computation, as it involves closed-form expressions of objects that can be easily calculated given $\hat \theta$.

As before, $\hat \kappa_{bc}$ takes the naive estimator and adds an adjustment term that purges the effect of $\hat \gamma$ on $\hat \kappa_{bc}$ to first order. 
Proposition \ref{prop:kappa.exogenous} below shows that $\hat \kappa_{bc}$ is asymptotically unbiased and its distribution does not depend on $\hat \gamma$. We highlight a few key implications of this result.

\begin{remark}[Easy to compute standard errors] 
\label{rmk:standard errors MLE model}
	The asymptotic variance of $\hat \kappa_{bc}$ can be easily estimated using
	\begin{equation}
    \label{eq:standard errors}
	\hat V_{bc} = \hat s_k^2 + \frac 1n \sum_{i=1}^n \hat c_i ' V_i(\hat \gamma) \hat c_i, 
	\end{equation}
	where $\hat s_k^2$ is the sample variance of $k_i(\hat \gamma)$. Standard errors for $\hat \kappa_{bc}$ are then $\sqrt{\hat V_{bc}/n}$.
\end{remark}

\begin{remark}[Fine-Tuning]
The bias correction in \eqref{eq:kappa.bc} allows the proxies $\tilde e$ to be sample-dependent, for instance because an ML algorithm has been fine-tuned on the choice data to provide a better fit. As before, there is no need to correct the standard errors: one can use $\sqrt{\hat V_{bc}/n}$ with $\hat V_{bc}$ as above whether or not $\tilde e$ is data-dependent.
\end{remark}

\begin{remark}[Semiparametric Efficiency]
We may view model (\ref{eq:choice probs exogenous}) as a conditional moment model based on the moment condition
\begin{equation} \label{eq:moment}
    \sigma_i(\gamma_0) = \E[d_i|p_i,y_i,\bar x].
\end{equation}
In Section~\ref{sec:theory model 2}, we show that the asymptotic variance of $\hat \kappa_{bc}$ is the semiparametric efficiency bound for estimating $\kappa$ in model (\ref{eq:moment}) \citep{brown1998efficient,ai2012semiparametric}. Thus, there do not exist regular estimators of $\kappa$ in model (\ref{eq:moment}) with smaller asymptotic variance than the bias-corrected estimator $\hat \kappa_{bc}$.
\end{remark}
	
\subsection{Diagnostics}

Here we provide variants of the two diagnostics introduced in Section~\ref{sec:blp diagnostics} that can be used to assess the suitability of a candidate set of proxies $\tilde e$. We illustrate the use and effectiveness of these diagnostics in the empirical application.
	
\subsubsection{Diagnostic 1: Is $\hat \gamma$ Close To $\gamma_0$?}
\label{sec:goodness of fit}
	
As before, a simple LM statistic can be used to validate whether $\hat \gamma$ is sufficiently close to $\gamma_0$, as required by the theory for $\hat \kappa_{bc}$ in Proposition~\ref{prop:kappa.exogenous}. This diagnostic also helps guide the choice of proxies (e.g., embeddings from various data sources or ML models). 
The first diagnostic is
\begin{equation}
\label{eq:LM1}
LM_1 = \|\sqrt n \hat H^{-1/2} \hat S \|^2,
\end{equation}
where 
$\hat S = \frac 1n \sum_{i=1}^n \sum_{j=0}^J \frac{d_{ij}}{\sigma_{ij}(\hat \gamma)} \dot \sigma_{ij}(\hat \gamma)$ is the score and $\hat H$ is the (expected) Hessian, defined below (\ref{eq:c.hat}). Proposition~\ref{prop:lm.fit.gamma} below shows that $LM_1$ behaves like $\|\sqrt n (\hat \gamma - \gamma_0)\|^2$ as $n$ grows large. This allows researchers to validate the assumption that the discrepancy between $\hat \gamma$ and $\gamma_0$ is sufficiently small, as needed in Proposition~\ref{prop:kappa.exogenous}, by checking whether $LM_1$ is below a threshold. See the practitioner's guide below for a concrete recommendation and Section~\ref{sec:theory model 2} for a formal discussion. Further, Proposition~\ref{prop:lm.fit.e} shows that $LM_1$ can be used to bound the distance between $\tilde e$ and the true latent attributes. Thus, $LM_1$ also serves as a criterion to target when choosing among proxies and/or fine-tuning.
	
\subsubsection{Diagnostic 2: Is The Dimension of $\tilde e$ Correct?}
\label{sec:rank}

We again provide guidance on how many proxies to include per product with a second diagnostic, denoted $LM_2$. The diagnostic is based on an LM statistic for testing whether $\tilde e$ is of sufficient dimension, and its construction follows similar ideas to Section~\ref{sec:rank blp}. As with $LM_1$, it depends only on $\hat \theta$ and therefore requires minimal additional computation. Full details are deferred to Appendix~\ref{app: LM2 case 2}.

\subsection{Practitioner's Guide}

Given a counterfactual of interest $\kappa$ and a candidate set of proxies $\tilde e$:
\begin{enumerate}
    \item Compute the model parameter estimates $\hat \theta$ as usual, treating $\tilde e$ as the truth.
    \item Compute weights $\hat c$ in \eqref{eq:c.hat}.
    \item Plug the weights into \eqref{eq:kappa.bc} to compute the bias-corrected estimator $\hat \kappa_{bc}$.
    \item Compute standard errors for $\hat \kappa_{bc}$ using the formula in Remark \ref{rmk:standard errors MLE model}.
    \item To check whether a given set of proxies $\tilde e$ is adequate:
    \begin{enumerate}
        \item Compute the $LM_1$ statistic in \eqref{eq:LM1}. If it is below a threshold $C_n^2$, conclude that the $\hat \gamma$ induced by $\tilde e$ is sufficiently close to $\gamma_0$. In Section~\ref{sec:goodness.of.fit.theory}, we motivate a threshold of $C_n^2 = \chi^2_{\dim(\gamma), 0.95}\log n$.
        \item Compute the $LM_2$ statistic in \eqref{eq:lm.rank}. If it is below a threshold $\hat \xi^*_{0.95}$, conclude that $\tilde e$ is of adequate dimension. The bootstrap method in Appendix~\ref{app: LM2 case 2} can be used to compute $\hat \xi^*_{0.95}$.
    \end{enumerate}
\end{enumerate}

\section{Simulations}
\label{sec:simulations}

We first illustrate our approach in simulations, considering each of the two models discussed in Sections~\ref{sec:blp} and~\ref{sec: exog prices} in turn.

\subsection{Case 1: Endogenous Prices}
\label{sec:simulations case 1}

We start by considering the BLP-type model of Section~\ref{sec:blp}. As proof of concept, we use a DGP calibrated to the standard \cite{nevo2001measuring} ready-to-eat cereal market data, based on the replication data in the PyBLP tutorial \citep{conlon2020best} (see also \cite{nevoPractitionersGuideEstimation2000}). For each market, the data contain the shares of $J=24$ products, their prices, their sugar content and a dummy variable capturing whether a product is ``mushy'' or not. First, we fit a mixed logit model to the data using PyBLP.\footnote{We use the third model estimated in the PyBLP tutorial, which has random coefficients on price, sugar content, the mushy dummy and the constant for the inside products, with a diagonal covariance matrix. Product fixed effects are included.} For $1{,}000$ simulations, we draw prices and exogenous variables for $T=1{,}200$ markets\footnote{We choose $T=1{,}200$ since that is roughly the sample size ($T=1{,}124$) in \cite{nevo2001measuring}.} from a distribution calibrated to match their mean and variance in the data, and generate market shares using parameters similar to those estimated in the first step (full details of the DGP are in Appendix~\ref{app:simulations design}).\footnote{As shown in the PyBLP tutorial, the instruments provided with the dataset are not enough to identify the standard deviations on the random coefficients well. To address this, we form additional differentiation IVs following the approach of \cite{gandhiMeasuringSubstitutionPatterns2019}; see Appendix~\ref{app:simulations}.} We consider the possibility that the mushy variable is an imperfect proxy, which seems plausible given that the variable was hand-coded by the author based on his own judgment. For each simulated dataset and each product, we generate data treating the binary mushy variable in the original data as the truth ($e_j$), and estimate model parameters and counterfactuals based on a corrupted version given by $\tilde e_j \sim \mathrm{Beta}(\rho, 2-\rho)$ if $e_j = 0$, and $\tilde e_j \sim \mathrm{Beta}(2-\rho, \rho)$ if $e_j = 1$ for $\rho = 0.1, 0.2, \ldots, 0.6$. The case $\rho = 0$ corresponds to $e_j = \tilde e_j$ (no proxy error) while larger values of $\rho$ correspond to more severe error.\footnote{The $\tilde e_j$ have mean $\rho/2$ and $1-\rho/2$ if $e_j = 0$ and $e_j = 1$, respectively. Their distributions approach a uniform distribution on $[0,1]$ as $\rho \to 1$.} This design matches the scale of one of the canonical applications of the BLP model, allowing us to illustrate the performance of our approach on standard datasets.

We focus on estimation of diversion, i.e., the fraction of consumers that switch from one group of products to another one when the former is removed from the choice set. Diversions are a key metric to understand the nature of competition in a market \citep{conlon2021empirical}. Specifically, we consider two counterfactuals: (i) the diversion from \emph{all} products of firm 1 to \emph{all} products of firm 2; (ii) the diversion from all \emph{mushy} products of firm 1 to all \emph{mushy} products of firm 2.\footnote{We focus on firms 1 and 2 as they have the largest assortments of products in the dataset.} 

\begin{figure}[t]
	\centering
	\begin{subfigure}[t]{0.48\textwidth}
		\centering
		\includegraphics[width=\linewidth]{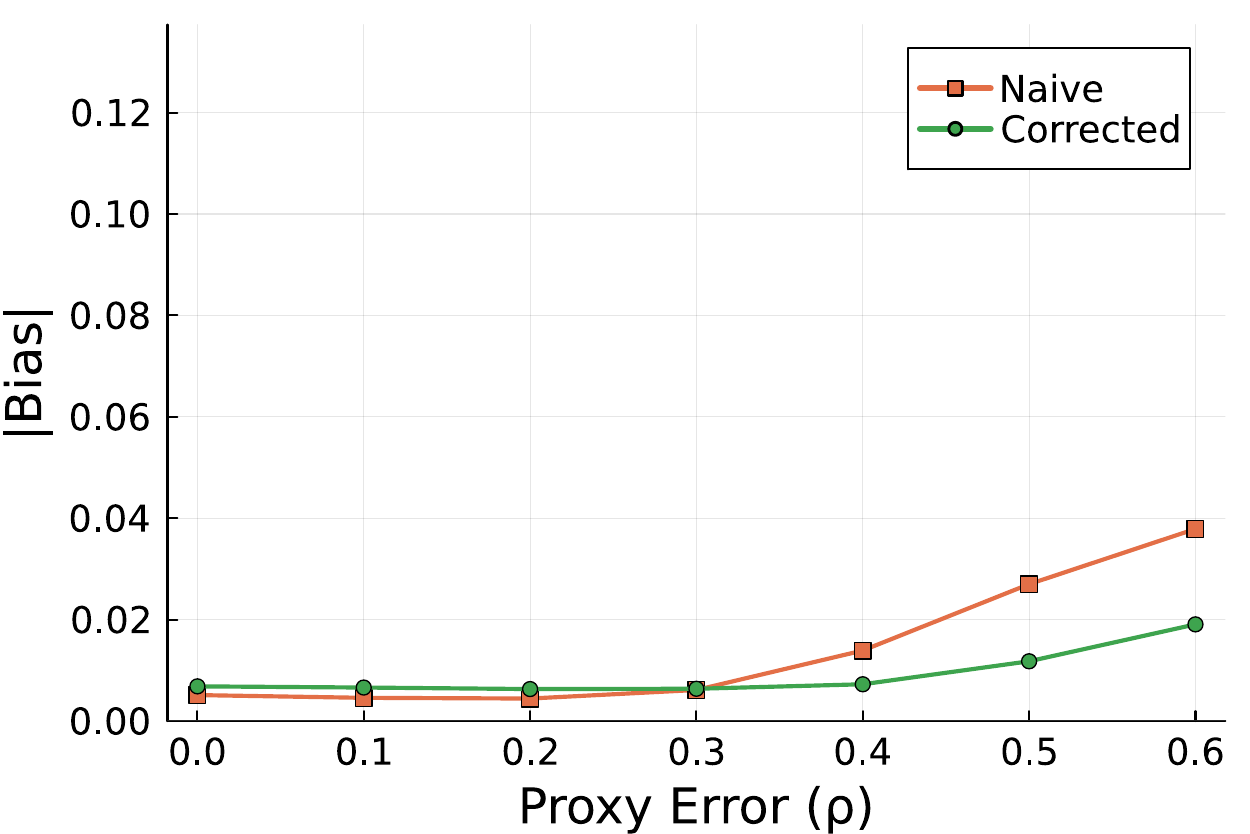}
		\caption{Bias for the ``all'' counterfactual}
	\end{subfigure}
	\hfill 
	\begin{subfigure}[t]{0.48\textwidth}
	\centering
	\includegraphics[width=\linewidth]{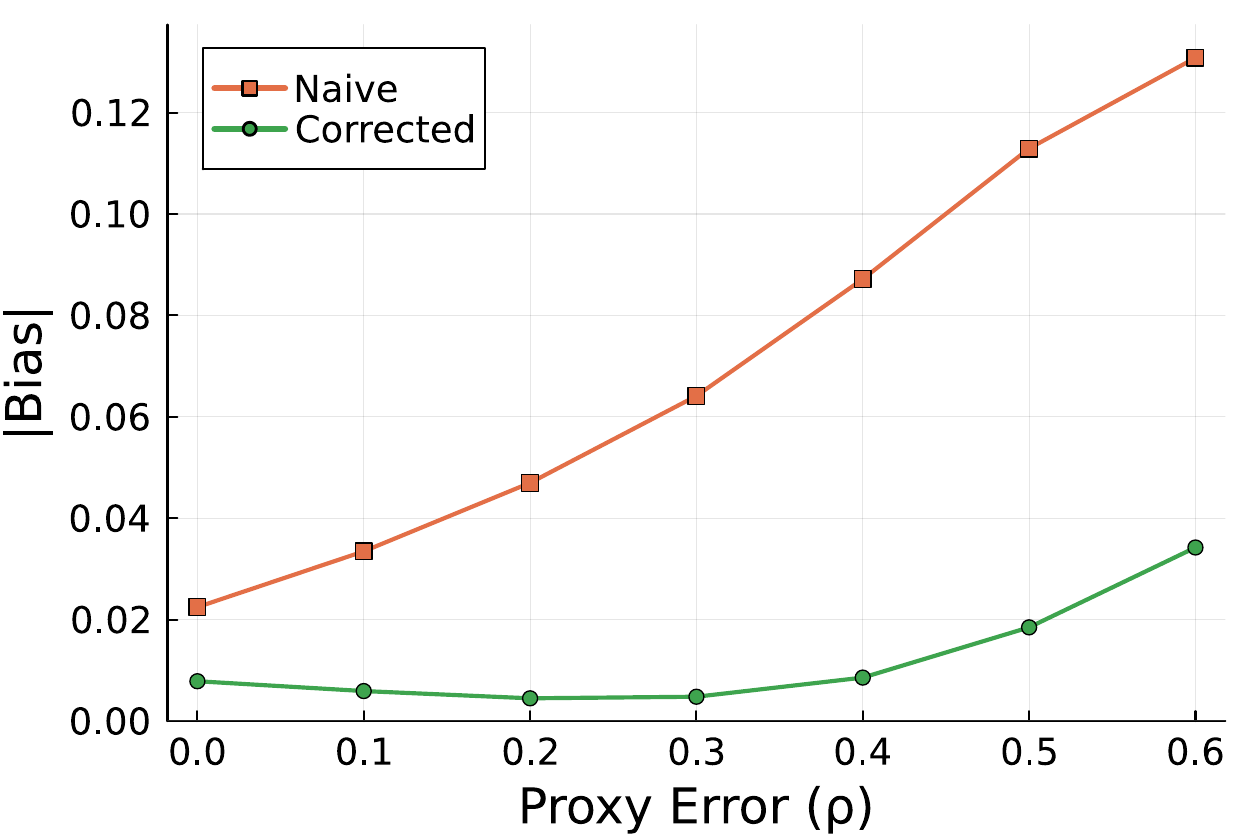}
	\caption{Bias for the ``mushy'' counterfactual}
\end{subfigure}
	
	\par\vspace{1em} 
	
		\begin{subfigure}[t]{0.48\textwidth}
		\centering
		\includegraphics[width=\linewidth]{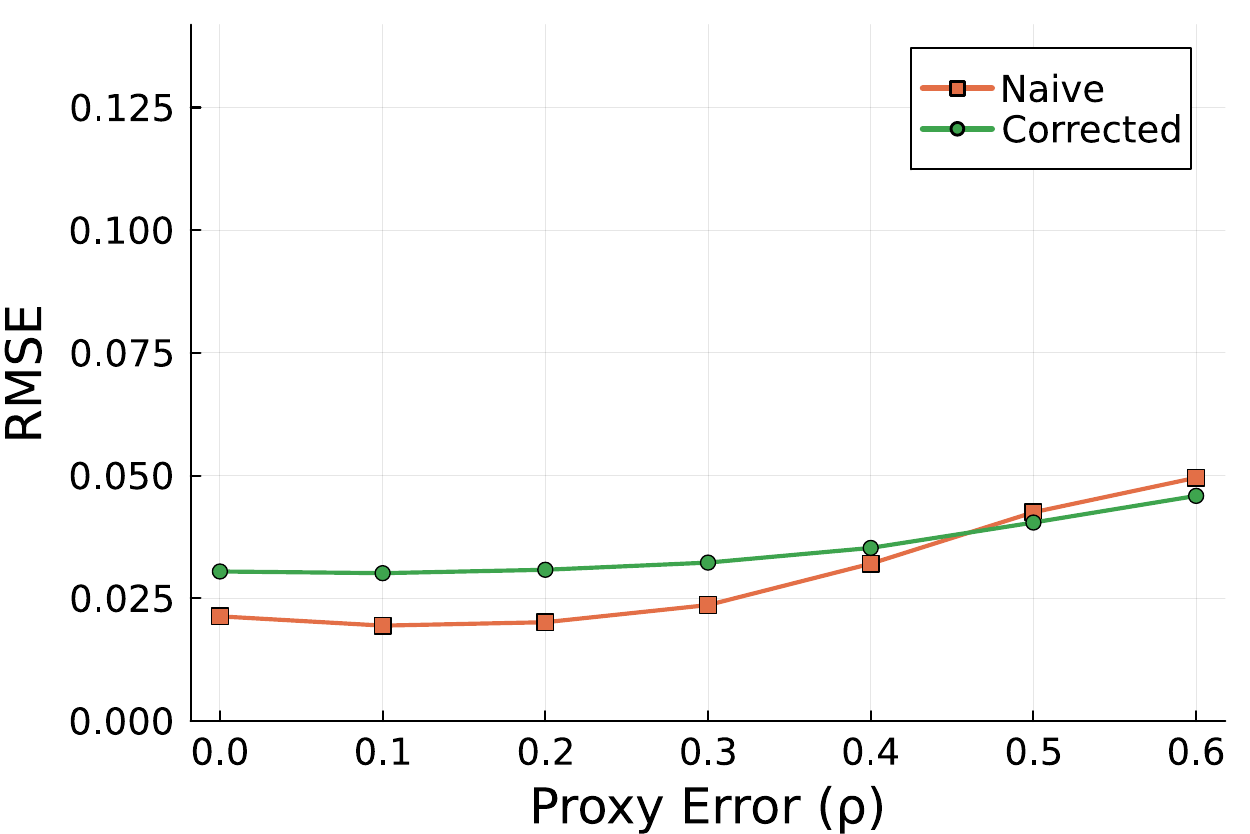}
		\caption{RMSE for the ``all'' counterfactual}
	\end{subfigure}
	\hfill 
	\begin{subfigure}[t]{0.48\textwidth}
	\centering
	\includegraphics[width=\linewidth]{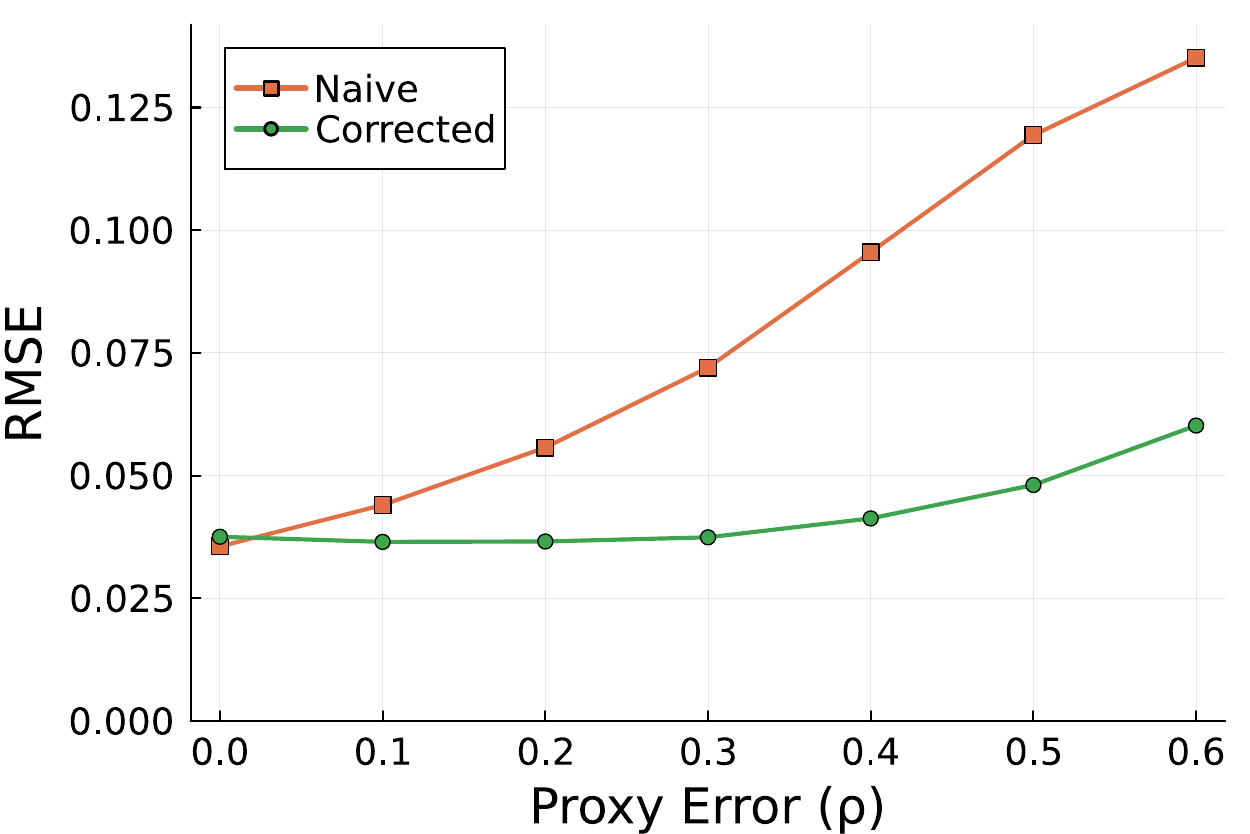}
	\caption{RMSE for the ``mushy'' counterfactual}
\end{subfigure}

	\caption{Bias and RMSE as a function of proxy error, \cite{nevo2001measuring} design}
	\label{fig:bias rmse sim case 1}

\vskip 0.5em

\begin{minipage}{\textwidth}
    \small \textit{Note:} Figures show the absolute bias and RMSE across simulations of the naive estimator $\hat \kappa$ in~\eqref{eq: kappa naive blp} and bias-corrected estimator $\hat \kappa_{bc}$ in~\eqref{eq: kappa_bc blp linear no micro} for the ``all'' and ``mushy'' counterfactuals.
\end{minipage}
  
\end{figure}

\begin{figure}[t]
  \centering
  \caption{Distribution of the naive and bias-corrected estimators, \cite{nevo2001measuring} design}
  \label{fig:est distribution sim case 1} 
  
  \vspace{5pt}
  \begin{subfigure}[t]{0.48\textwidth}
    \centering
    \includegraphics[width=\linewidth]{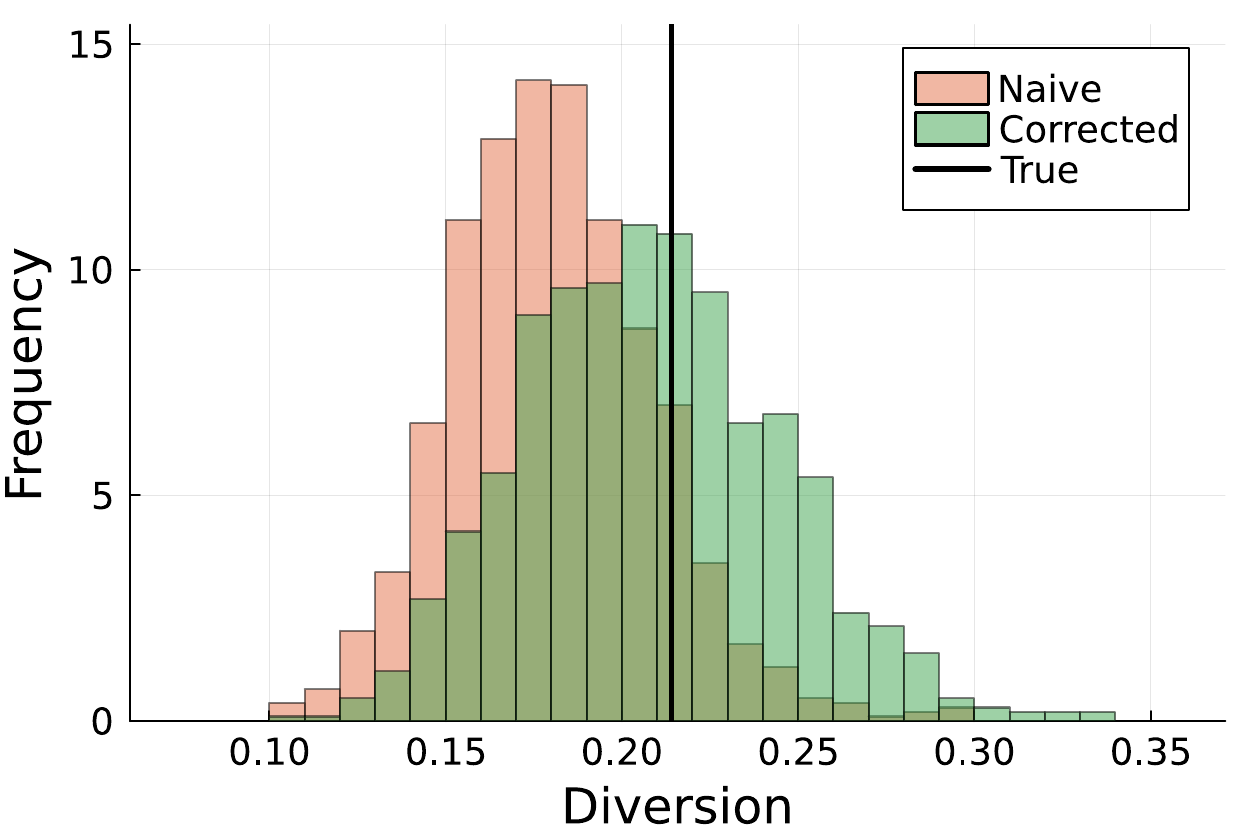}
    \caption{Proxy error $\rho=0.1$}
  \end{subfigure}
  \hfill 
  \begin{subfigure}[t]{0.48\textwidth}
    \centering
    \includegraphics[width=\linewidth]{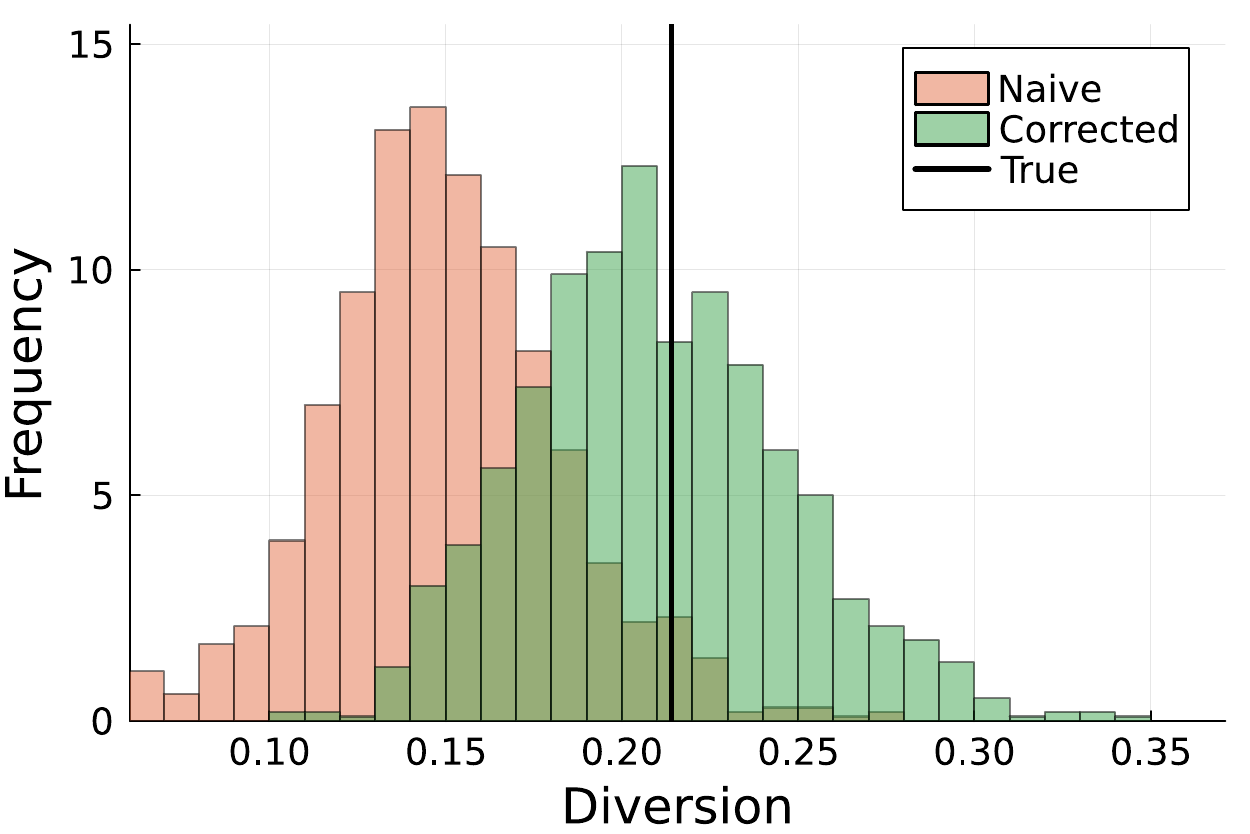}
    \caption{Proxy error $\rho=0.3$}
  \end{subfigure}
    
  \begin{minipage}{\textwidth}
    \small \textit{Note:} Histograms show the distribution across simulations of the naive estimator $\hat \kappa$ in~\eqref{eq: kappa naive blp} and bias-corrected estimator $\hat \kappa_{bc}$ in~\eqref{eq: kappa_bc blp linear no micro} for the ``mushy'' counterfactual.
  \end{minipage}
\end{figure}

We estimate the model parameters $\theta$ using PyBLP then compute the naive estimator $\hat \kappa$ in~\eqref{eq: kappa naive blp} and bias-corrected estimator $\hat \kappa_{bc}$ in~\eqref{eq: kappa_bc blp linear no micro}  of the diversions.
The left two panels of Figure~\ref{fig:bias rmse sim case 1} show the absolute bias and RMSE of the naive and bias-corrected estimators as a function of the level of proxy error $\rho$ for the ``all'' counterfactual. As firms~1 and~2 sell both mushy and non-mushy cereal, this counterfactual involves aggregating across the two types of products, resulting in small bias except for large values of $\rho$, where the correction reduces bias by around 50\%. 

When $\tilde e$ is a perfect proxy, both the naive and bias-corrected estimators have very low bias. Our estimator has a marginally higher RMSE, indicating that its variance is slightly higher than that of the naive estimator. This is intuitive: since the naive estimator leverages the assumption that the proxies $\tilde e$ are correct and ours does not, we obtain slightly less precise estimates when that assumption happens to be correct. However, for larger values of $\rho$ this variance increase is more than offset by the bias reduction, and the bias-corrected estimator has lower RMSE than the naive estimator. 

The right two panels of Figure~\ref{fig:bias rmse sim case 1} show the results for the ``mushy'' counterfactual. Here, the naive estimator has much larger bias than the bias-corrected estimator across all levels of proxy error. Indeed,  the bias-corrected estimator is essentially unbiased for $\rho = 0, 0.1, \ldots, 0.4$, while for larger $\rho$ the bias is only a fraction of that of the naive estimator. To see this in more detail, histograms of the naive and bias-corrected estimators for $\rho = 0.1$ and $\rho = 0.3$ are shown in Figure~\ref{fig:est distribution sim case 1}. The distribution of the bias-corrected estimator is stable and centered around the truth, while the distribution of the naive estimator drifts to the left for larger $\rho$. The RMSE of the corrected estimator is uniformly smaller, except in the knife-edge case of no proxy error where it is essentially on par with the naive estimator. 

Finally, Table~\ref{tab:coverage case 1} displays the coverage of 95\% confidence intervals for the counterfactuals based on the bias-corrected estimator and standard errors computed using the formula~\eqref{eq:standard errors blp no micro data}. Coverage is close to the 95\% target for $\rho = 0, 0.1, \ldots, 0.4$, while for larger values of $\rho$ the bias eventually distorts coverage. This is still a substantial improvement over the naive estimator. For instance, Figure~\ref{fig:est distribution sim case 1} shows that the coverage of confidence intervals based on the naive estimator would be well below 50\% for $\rho$ as low as $0.1$, and near zero for $\rho = 0.3$. Overall, our approach delivers easy-to-implement inference on counterfactuals and is robust to degrees of proxy error that severely distort inference in the standard workflow.

\begin{table}[t]
    \centering
    \begin{tabular}{ccc} \hline \hline 
    Proxy error $\rho$ & \phantom{111}Coverage: ``all''\phantom{111} & Coverage: ``mushy'' \\  \hline \hline 
    0.00 & 0.956 & 0.967 \\
    0.10 & 0.963 & 0.972 \\
    0.20 & 0.967 & 0.977 \\
    0.30 & 0.962 & 0.971 \\
    0.40 & 0.954 & 0.943 \\
    0.50 & 0.924 & 0.880 \\
    0.60 & 0.897 & 0.773 \\  \hline \hline
    \end{tabular}
    \caption{Coverage of 95\% confidence intervals for counterfactuals, \cite{nevo2001measuring} design}
    \label{tab:coverage case 1}

    \vskip 0.5em

  \begin{minipage}{\textwidth}
    \small \textit{Note:} Fraction of simulations in which confidence intervals for counterfactuals contain the true value. Confidence intervals are computed as $\hat \kappa_{bc} \pm 1.96 (\hat V_{bc}/T)^{1/2}$ with $\hat \kappa_{bc}$ in~\eqref{eq: kappa_bc blp linear no micro} and $\hat V_{bc}$ as in Remark~\ref{rmk:std errors}.
  \end{minipage}
\end{table}

Interestingly, we obtain very similar results when we include additional differentiation IVs based on the distance between products in the space of (possibly mismeasured) $\tilde e_j$. See Appendix~\ref{app: sims e^tilde IVs} for additional details and results. While our theory does not explicitly cover this case, this finding suggests that researchers  might be able to continue using standard BLP-type instruments even when the non-price attributes are imperfect proxies.

\subsection{Case 2: Individual-Level Price Variation}
\label{sec:simulations case 2}

Next, we consider the model in Section \ref{sec: exog prices}. In the simulation design, we set $J = 10$ products and $n = 10{,}000$ consumers, in line with the empirical application in Section~\ref{sec:application}. We model utility as a function of price and two-dimensional latent attributes $e$ (in addition to idiosyncratic shocks). The latent attributes $e_j$ are drawn iid $N(0,1)$ across products. Individual-level prices are drawn iid $N(5,1)$ and vary across simulations. The random coefficient on price is $N(-1,0.3^2)$ and the coefficients on the latent attributes are independent $N(0,0.75^2)$. As before, we hold $e$ fixed in the data generating process and vary the amount of error in $\tilde e$. Specifically, for every $j$, we let
\begin{equation}
	\label{eq:e tilde sim}
	\tilde e_j = (1-\rho/2) e_j + \sqrt{1-(1-\rho/2)^2} \eta_j,
\end{equation}
where $\eta_j$ are iid $N(0,1)$ across simulations. The parameter $\rho$ determines the amount of error in $\tilde e$. When $\rho=0$, the  proxies exactly match the latent attributes; as $\rho$ increases, the proxies are increasingly uninformative. We note that this simulation design imposes very few restrictions on the form of proxy error: each element of $\tilde e_j$ could be smaller or larger than the corresponding element of $e_j$, and this varies freely across goods $j$.\footnote{Note that \eqref{eq:e tilde sim} is such  that $\tilde e_j$ has roughly the same amount of variation across goods $j$ as $e_j$ does. This allows us to isolate the effect of proxy error in a way that is not confounded by changes in the scale of the proxies $\tilde e_j$ used in estimation.}$^,$\footnote{Similar results were obtained in a design where true attributes $e_j$ are computed by PCA on a large set of variables, and proxies $\tilde e_j$ are computed by PCA on a corrupted version of the variables.}
Finally, we include product fixed effects in estimation but set these to zero when generating data. Because of the fixed effects, the only channel through which the proxies enter the model is via the covariance of the random component of utilities. Thus, the model will discard the proxies if the variances of their random coefficients collapse to zero.

We focus on diversion from product 3 to 4 as this pair has the smallest diversion at the true parameters and latent attributes, so correcting bias is (potentially) most critical. Figure \ref{fig:est distribution sim} plots histograms of the naive estimator~\eqref{eq:kappa.naive} and our bias-corrected estimator~\eqref{eq:kappa.bc} across $1{,}000$ simulations. As the level of proxy error increases, the distribution of the naive estimator moves away from the true value of the counterfactual (roughly 0.05), whereas the corrected estimator remains centered around the true value. Interestingly, for larger levels of proxy error, the distribution of the naive estimator ends up centered at the counterfactual prediction of a logit model with no random coefficients on proxies. This is intuitive: as the proxies $\tilde e$ become increasingly noisy, they capture less of the substitution patterns in the data, and the estimated variance of their random coefficients shrinks toward zero. As a result, the model collapses to one without random coefficients. This finding also serves as a warning that proxy error can defeat the purpose of estimating a random coefficients model in the first place: if proxy error is not properly accounted for, the model  may revert  to the restrictive substitution patterns that it was specifically intended to relax.

\begin{figure}[t]
	\centering
    	\caption{Distribution of the naive and bias-corrected estimators, \citetalias{compiani2025demand} design}
        	\label{fig:est distribution sim}
	\begin{subfigure}[t]{0.48\textwidth}
		\centering
		\includegraphics[width=\linewidth]{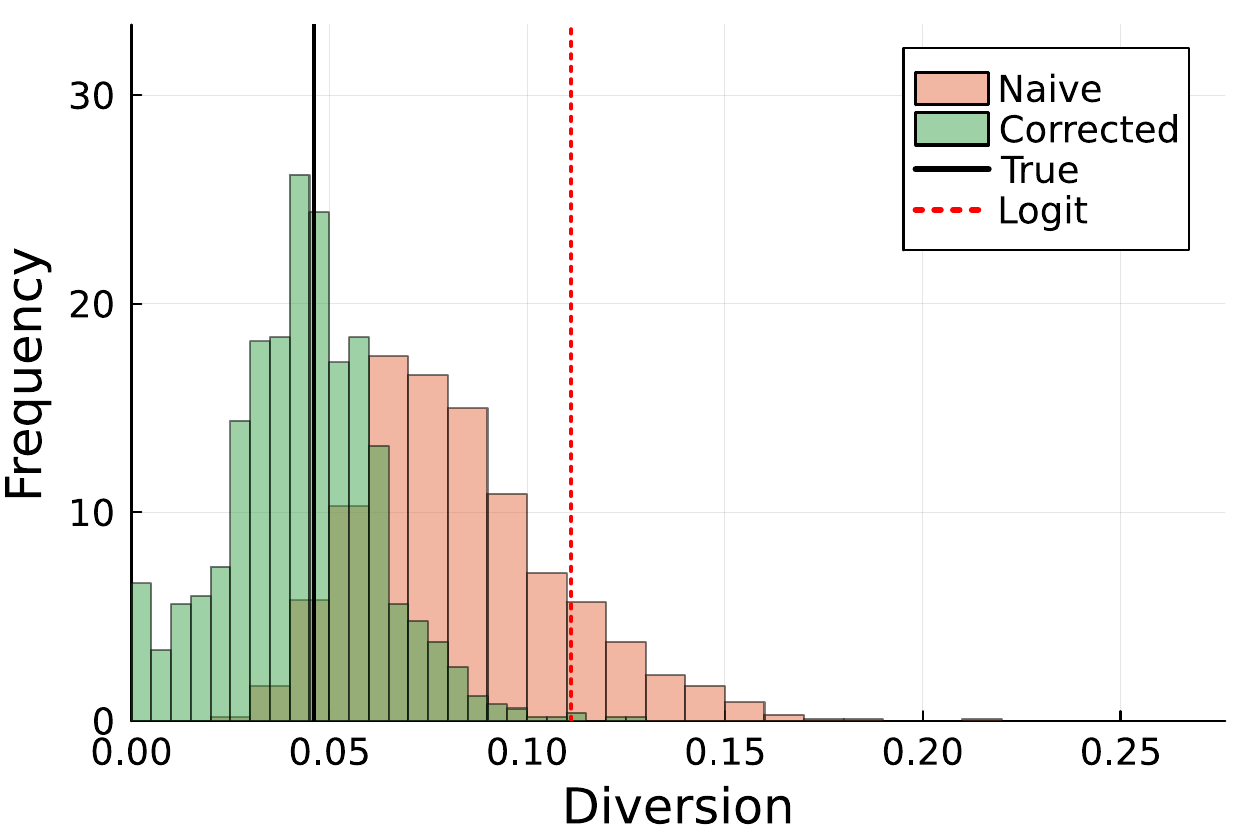}
		\caption{Proxy error $\rho = 0.3$}
	\end{subfigure}
	\hfill 
	\begin{subfigure}[t]{0.48\textwidth}
	\centering
	\includegraphics[width=\linewidth]{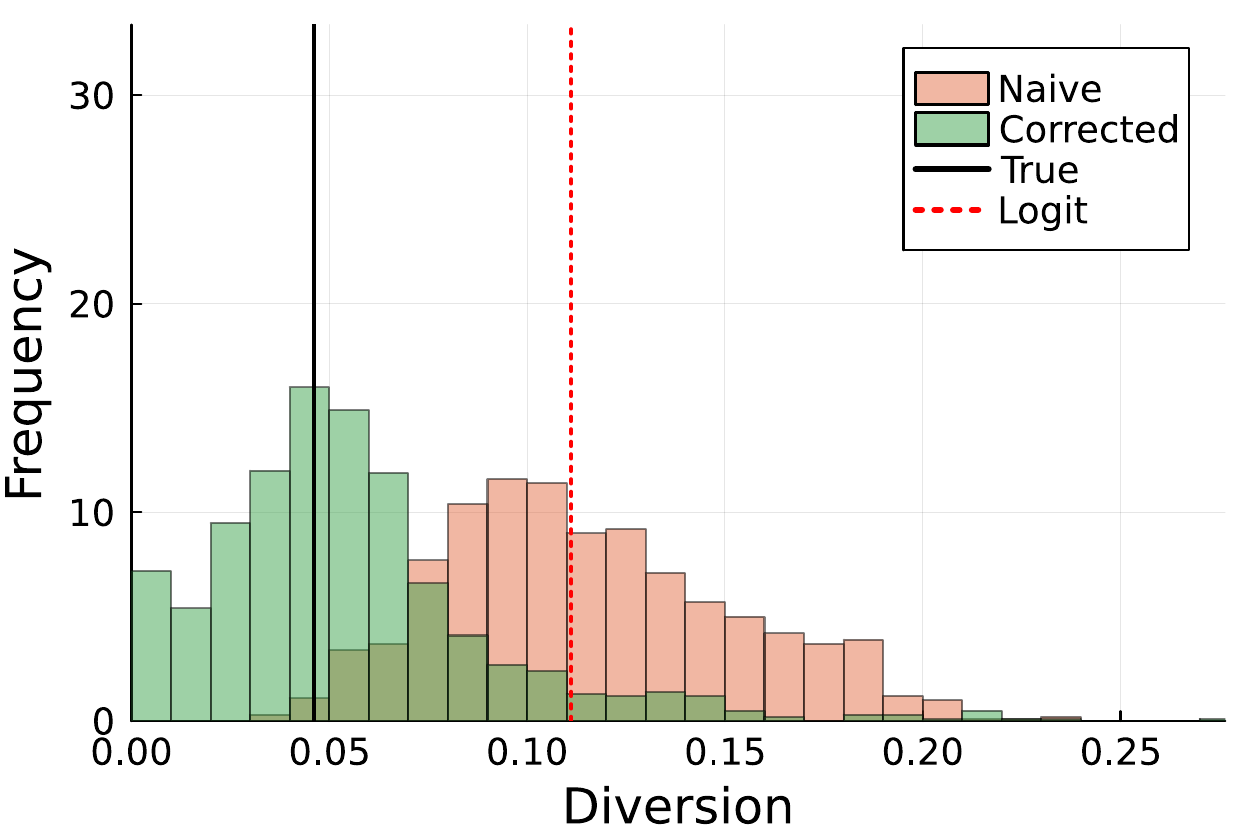}
	\caption{Proxy error $\rho = 0.6$}
\end{subfigure}

\begin{minipage}{\textwidth}
    \small \textit{Note:} Histograms show the distribution across simulations of the naive estimator $\hat \kappa$ in~\eqref{eq:kappa.naive} and bias-corrected estimator $\hat \kappa_{bc}$ in~\eqref{eq:kappa.bc} for the diversion from product 3 to 4.
  \end{minipage}
\end{figure}

To better assess the trade-offs involved in our bias correction, Figure \ref{fig:bias RMSE sim} shows how the bias and RMSE of the two estimators vary with $\rho$, and these are consistent with the findings for Case~1. In particular, in the knife-edge case in which $\tilde e$ is a perfect proxy, bias is not a concern and the correction marginally increases variance. Outside this knife-edge case,  the corrected estimator consistently achieves lower bias and RMSE than the naive estimator. 

Finally, Table~\ref{tab:coverage case 2} presents coverage of 95\% confidence intervals formed from our bias-corrected estimator~\eqref{eq:kappa.bc} with standard errors computed using the formula~\eqref{eq:standard errors}. These have coverage close to the 95\% target for $\rho$ up to $0.3$, while coverage inevitably breaks down when the proxies become very noisy. Again, these results show that our formulas deliver easy-to-implement, valid inference on counterfactuals and are robust to noisy proxies that substantially distort inference in the standard workflow.

\begin{table}[t]
    \centering
    \begin{tabular}{cc} \hline \hline 
    Proxy error  $\rho$ & Coverage \\  \hline \hline 
    0.00 & 0.957 \\
    0.10 & 0.957 \\
    0.20 & 0.929 \\
    0.30 & 0.910 \\
    0.40 & 0.893 \\
    0.50 & 0.863 \\
    0.60 & 0.816 \\ \hline \hline
    \end{tabular}
    \caption{Coverage of 95\% confidence intervals for counterfactuals, \citetalias{compiani2025demand} design}
    \label{tab:coverage case 2}

\vskip 0.5em

  \begin{minipage}{\textwidth}
    \small \textit{Note:} Fraction of simulations in which confidence intervals for the counterfactual contain the true value. Confidence intervals are computed as $\hat \kappa_{bc} \pm 1.96 (\hat V_{bc}/n)^{1/2}$ with $\hat \kappa_{bc}$ in~\eqref{eq:kappa.bc} and $\hat V_{bc}$ as in Remark~\ref{rmk:standard errors MLE model}.
  \end{minipage}
  
\end{table}

\begin{figure}[t]
	\centering
	\begin{subfigure}[t]{0.48\textwidth}
		\centering
		\includegraphics[width=\linewidth]{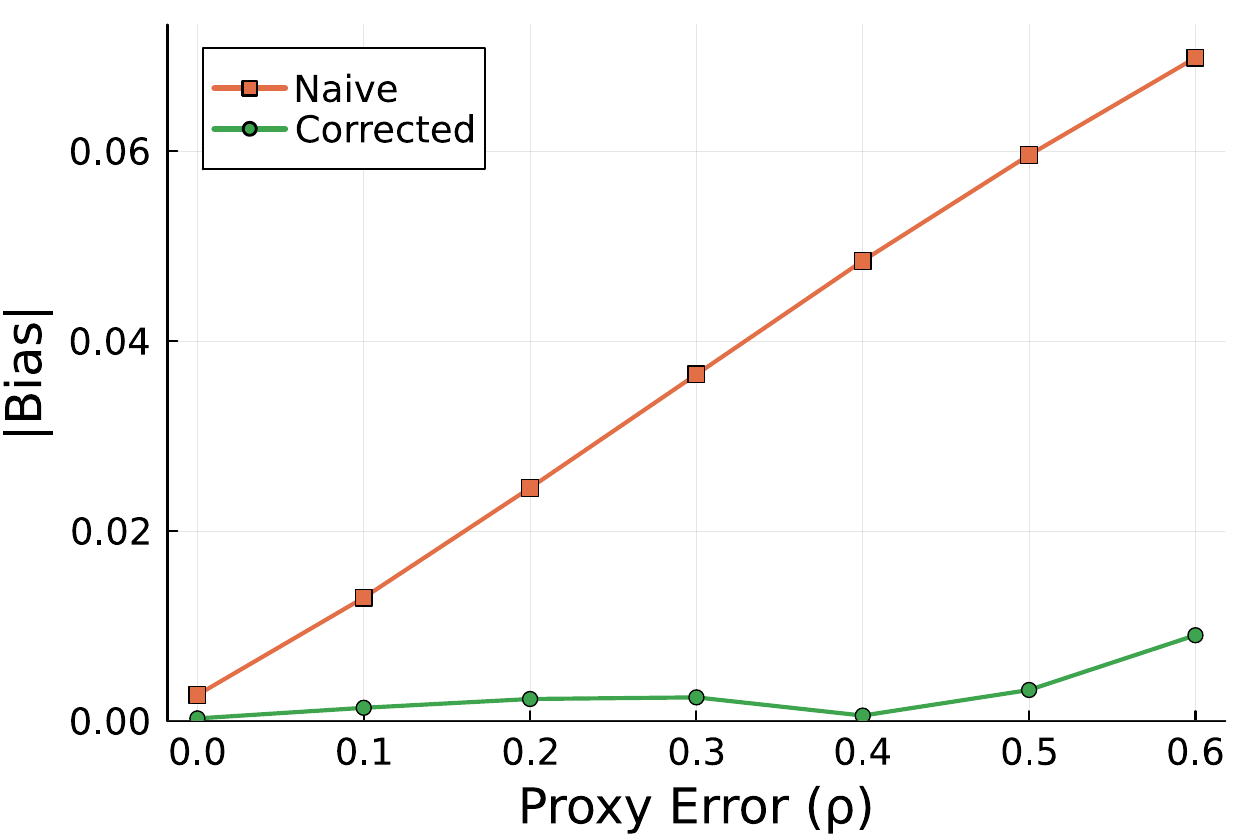}
	\end{subfigure}
	\hfill 
	\begin{subfigure}[t]{0.48\textwidth}
		\centering
		\includegraphics[width=\linewidth]{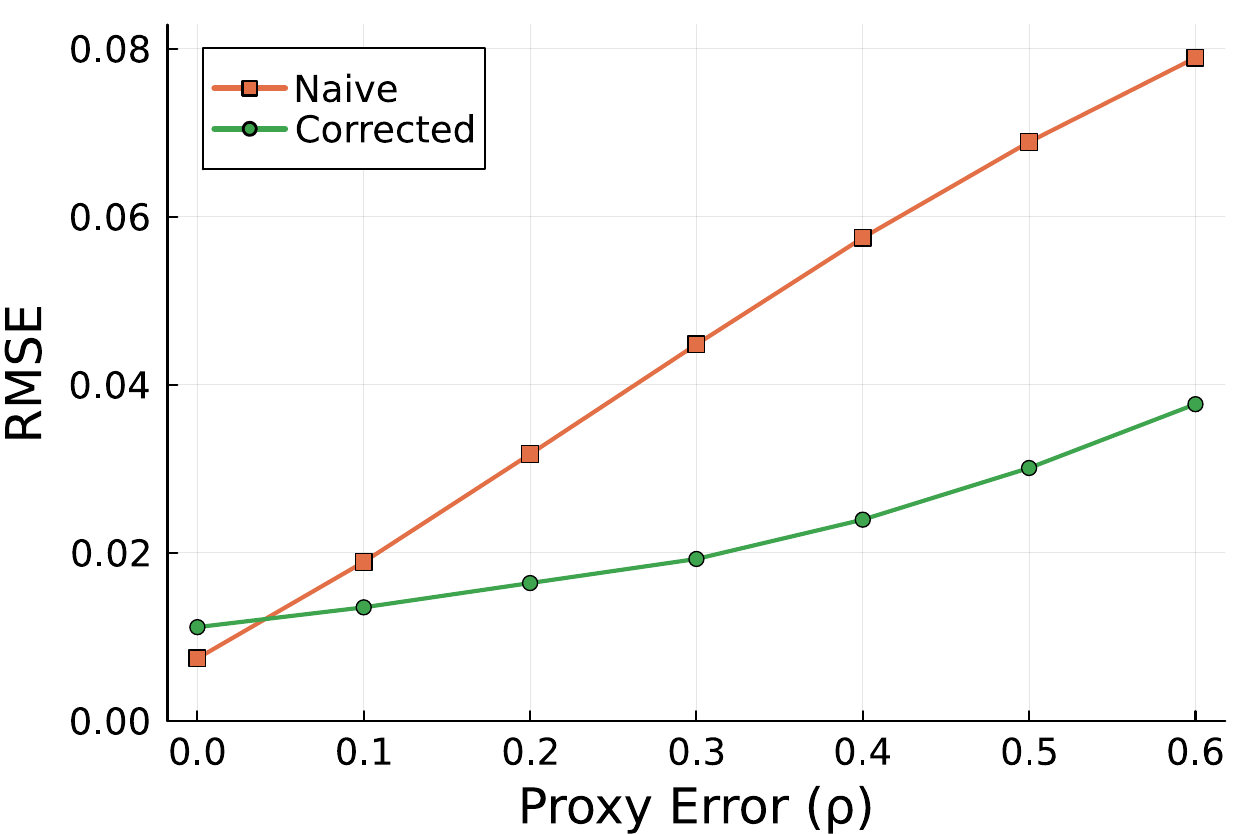}
	\end{subfigure}

	\caption{Bias  and RMSE of the naive and bias-corrected estimators, \citetalias{compiani2025demand} design}
\label{fig:bias RMSE sim}

\vskip 0.5em

\begin{minipage}{\textwidth}
    \small \textit{Note:} Figures show the absolute bias and RMSE across simulations of the naive estimator $\hat \kappa$ in~\eqref{eq:kappa.naive} and bias-corrected estimator $\hat \kappa_{bc}$ in~\eqref{eq:kappa.bc}.
\end{minipage}

\end{figure}

\section{Empirical Application}
\label{sec:application}

We now apply our method to the experimental data from \citetalias{compiani2025demand}. 

\subsection{Setting}

The empirical setting is subsumed by the model in Case 2. The data contain choices of $n = 9{,}265$ participants among $J = 10$ books. Participants performed two choice tasks. In the first task, they were asked to choose their preferred book based on information displayed to them, including (randomized) prices and page rank, standard attributes (author, year of publication, genre and number of pages), and unstructured information (cover images, titles, plot descriptions, and reviews) from Amazon. In the second task, each participant's first choice was removed, and they were asked to choose again from the remaining nine books. Using the first choice data, \citetalias{compiani2025demand} estimate a range of mixed logit models with product fixed effects featuring  proxies computed with different ML methods from the various unstructured data sources. They then evaluate the different models' performance in predicting second choices. This gives a direct measure of how well different models capture counterfactual substitution patterns. The key finding is that unstructured data, particularly book reviews processed with transformer-based text models, can substantially improve predictions of counterfactual substitution.

The results in \citetalias{compiani2025demand} treat the proxies as if they were correctly specified. However, there are good reasons to believe that proxy error might play an important role. First, the unstructured data are processed using pre-trained ML models that are not targeted at predicting substitution patterns and thus may not perfectly capture the underlying attributes that drive consumers' choices.\footnote{Specifically, images are processed via classification models trained to assign each image to one of many classes; text is processed using bag-of-words and transformer models: the former simply capture word frequency, whereas the latter are trained to predict the next word in a text.} Second, the dimension of the raw embeddings is further reduced via PCA before inclusion in the demand model, which may introduce additional proxy error. This PCA step also raises the question of how many principal components should be included in the model.

Here we investigate whether applying our bias correction method and diagnostics  helps better capture substitution patterns. We use the approach from Section \ref{sec: exog prices} since we have individual-level data and prices are randomized, so endogeneity is not a concern. We focus on the ability of different models to correctly predict the closest substitute for any given book. The second choice data give us a direct measure of this: for a given book A, its closest substitute is the book that most people switch to when A is removed from the choice set.

By looking at substitution patterns in response to product removals that are not part of the estimation data, this exercise provides a direct test of the model's ability to predict counterfactuals. Further, unlike in simulations, the demand model might be misspecified even if the proxies perfectly capture the dimensions of differentiation. For instance, the model assumes a normal distribution for the random coefficients but preference heterogeneity may take a different form. As a result, this exercise tests whether our approach works well even in cases where all assumptions needed for the theoretical results might not hold. This sets a high bar for our approach.

\subsection{Results}

Figure \ref{fig:hit rate experiment} shows the results using proxies extracted from book reviews with four ML models.\footnote{We focus on book reviews since \citetalias{compiani2025demand} show that reviews are the most predictive of substitution. Appendix \ref{app:empirical} presents results using proxies extracted from the other data sources.} For each model, we report the fraction of the ten books for which the model correctly identifies the closest substitute (as measured by the second choice data).\footnote{Here, $\kappa$ corresponds to the average (across $p_i$) probability that $B$ is a consumer's second choice conditional on $A$ being their first choice. We compute this across all $(A,B)$ pairs and say that the model predicts product $B$ to be the closest substitute for $A$ if $B$ has the highest second-choice probability when $A$ is the first choice.} The lighter orange bars show the performance of the naive approach based on~\eqref{eq:kappa.naive}, whereas the darker green bars show the performance of the bias-corrected approach. Both use the same $\hat \theta$ from \citetalias{compiani2025demand}. 

The $LM_2$ diagnostic rules out two sets of proxies, indicating that they fail to capture all dimensions of differentiation. Interestingly, these proxies are extracted using first-generation language models: \emph{Count} is a bag-of-words model that represents text as fixed-length vectors based
on simple word counts, whereas \emph{TF-IDF} (Term Frequency-Inverse Document Frequency) is a variation of the bag-of-words approach that assigns more weight to words that are frequent in a specific text but rare in general. By contrast, the proxies that survive the $LM_2$ diagnostic (\emph{ST} for BERT Sentence Transformer, and \emph{USE} for Universal Sentence Encoder) are extracted using more sophisticated models based on the transformer architecture that underpins LLMs. Both \emph{ST} and \emph{USE} have $LM_1$ diagnostics that are below the threshold in the practitioner's guide, with \emph{USE} achieving the smallest value. 

Bias correction substantially improves the model's ability to identify closest substitutes: for both \emph{ST} and \emph{USE}, the fraction of correctly predicted substitutes goes from 40\% without the bias correction to 70\% with the bias correction. For comparison, a coin flip would achieve a hit rate of 11\%.
Overall, these results confirm that our approach is able to guide researchers toward the most informative proxies and meaningfully improve counterfactual predictions.

\begin{figure}[t]
		\centering
		\includegraphics[width=0.7\linewidth]{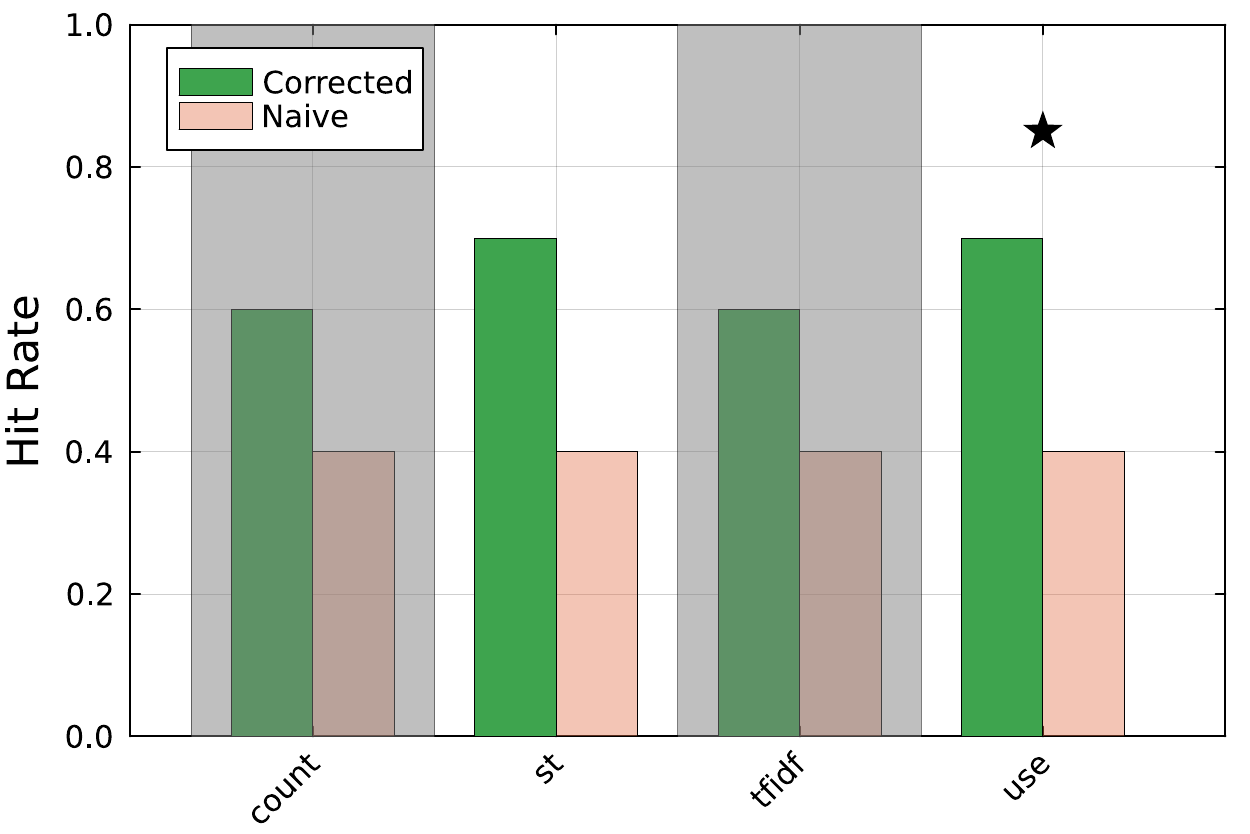}
	\caption{Rates of correct closest-substitute predictions.}
	\label{fig:hit rate experiment}

    \vskip 0.5em
    
    \parbox[]{\textwidth}{\small \emph{Notes: }Darker green bars show the fraction of books for which the bias-corrected estimator correctly identifies the closest substitute. Lighter orange bars show the corresponding fractions for the naive estimator. Each pair of bars corresponds to an ML model used to extract the proxies from the book reviews. The specifications ruled out by the $LM_2$ diagnostic are grayed out. A star flags the specification with the smallest  $LM_1$ diagnostic.}
\end{figure}

\begin{figure}[t]
\centering
 \scalebox{0.8}{%
\begin{tikzpicture}[
    rowlabel/.style={
        align=center
    },
    booktitle/.style={
        font=\rmfamily\scriptsize,
        align=center,
        text width=3.1cm
    },
    orangefill/.style={
    fill=JuliaOrange!18
    },
    greenfill/.style={
    fill=JuliaGreen!18
    }
]

\def\labelW{3.2}
\def\cellW{3.6}
\def\rowH{3.25}

\def\nCols{3}
\def\nRows{4}

\pgfmathsetmacro{\gridW}{\nCols*\cellW}
\pgfmathsetmacro{\totalW}{\labelW+\gridW}
\pgfmathsetmacro{\totalH}{\nRows*\rowH}

\node[rowlabel] at ({\labelW/2},{-\rowH/2})
    {Focal Book};

\node[rowlabel] at ({\labelW/2},{-1.5*\rowH})
    {Empirical};

\node[rowlabel] at ({\labelW/2},{-2.5*\rowH})
    {Naive\\(Review USE)};

\node[rowlabel] at ({\labelW/2},{-3.5*\rowH})
    {Corrected\\(Review USE)};

%
\newcommand{\BookCell}[4]{%
    \node at
        ({\labelW + (#1+0.5)*\cellW},
         {-(#2)*\rowH - 0.95})
        {\includegraphics[height=1.45cm]{#3}};

    \node[booktitle] at
        ({\labelW + (#1+0.5)*\cellW},
         {-(#2)*\rowH - 2.4})
        {#4};
}

\foreach \c in {0,1,2} {
    \fill[orangefill]
        ({\labelW+\c*\cellW},{-2*\rowH})
        rectangle
        ({\labelW+(\c+1)*\cellW},{-3*\rowH});
}

\foreach \c in {0,1,2} {
    \fill[greenfill]
        ({\labelW+\c*\cellW},{-3*\rowH})
        rectangle
        ({\labelW+(\c+1)*\cellW},{-4*\rowH});
}

\draw[line width=0.5pt]
    (\labelW,0) -- (\labelW,-\totalH);

\draw[line width=0.5pt]
    (0,-\rowH) -- (\totalW,-\rowH);

\BookCell{0}{0}{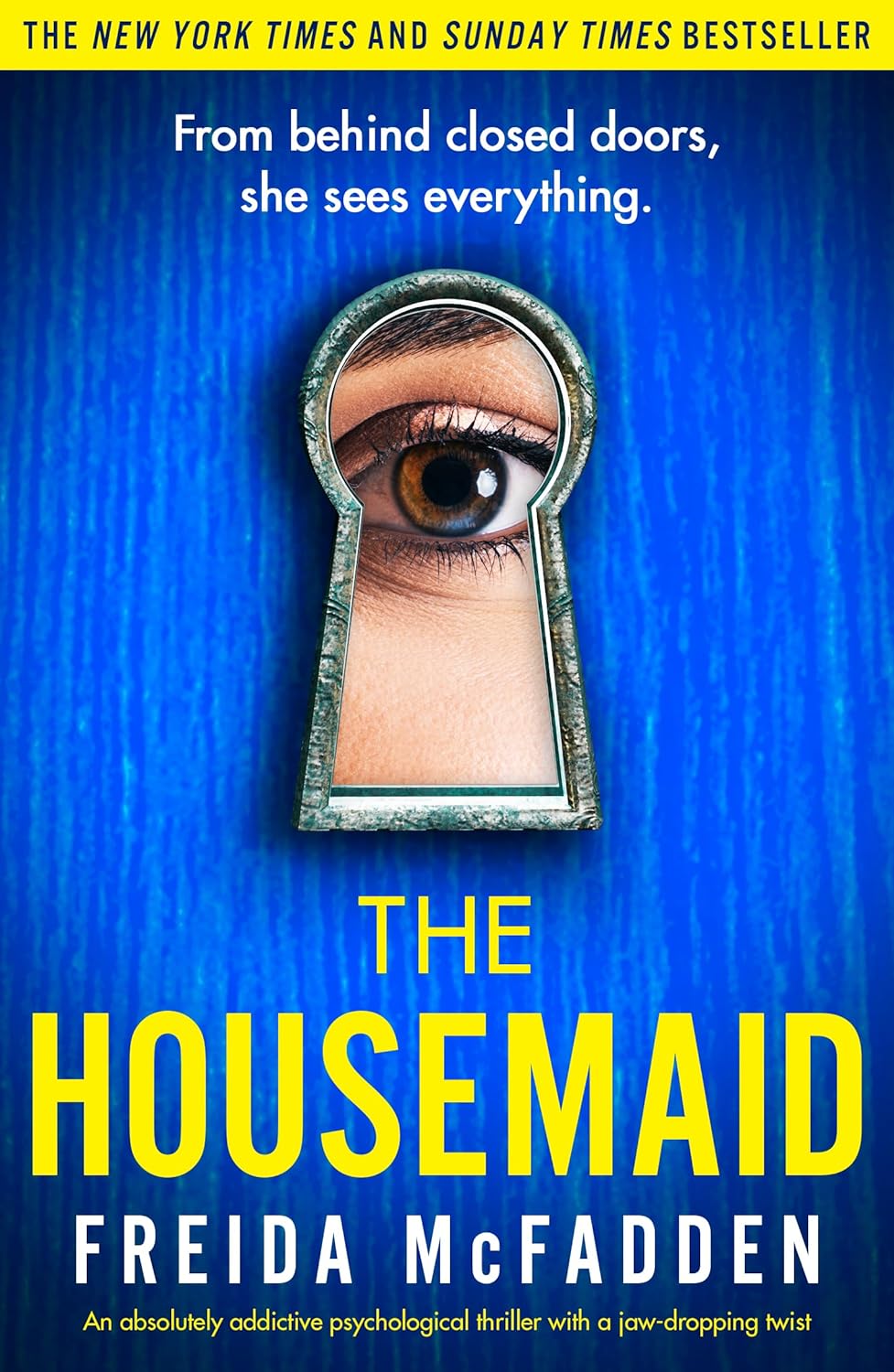}
    {The Housemaid}

\BookCell{1}{0}{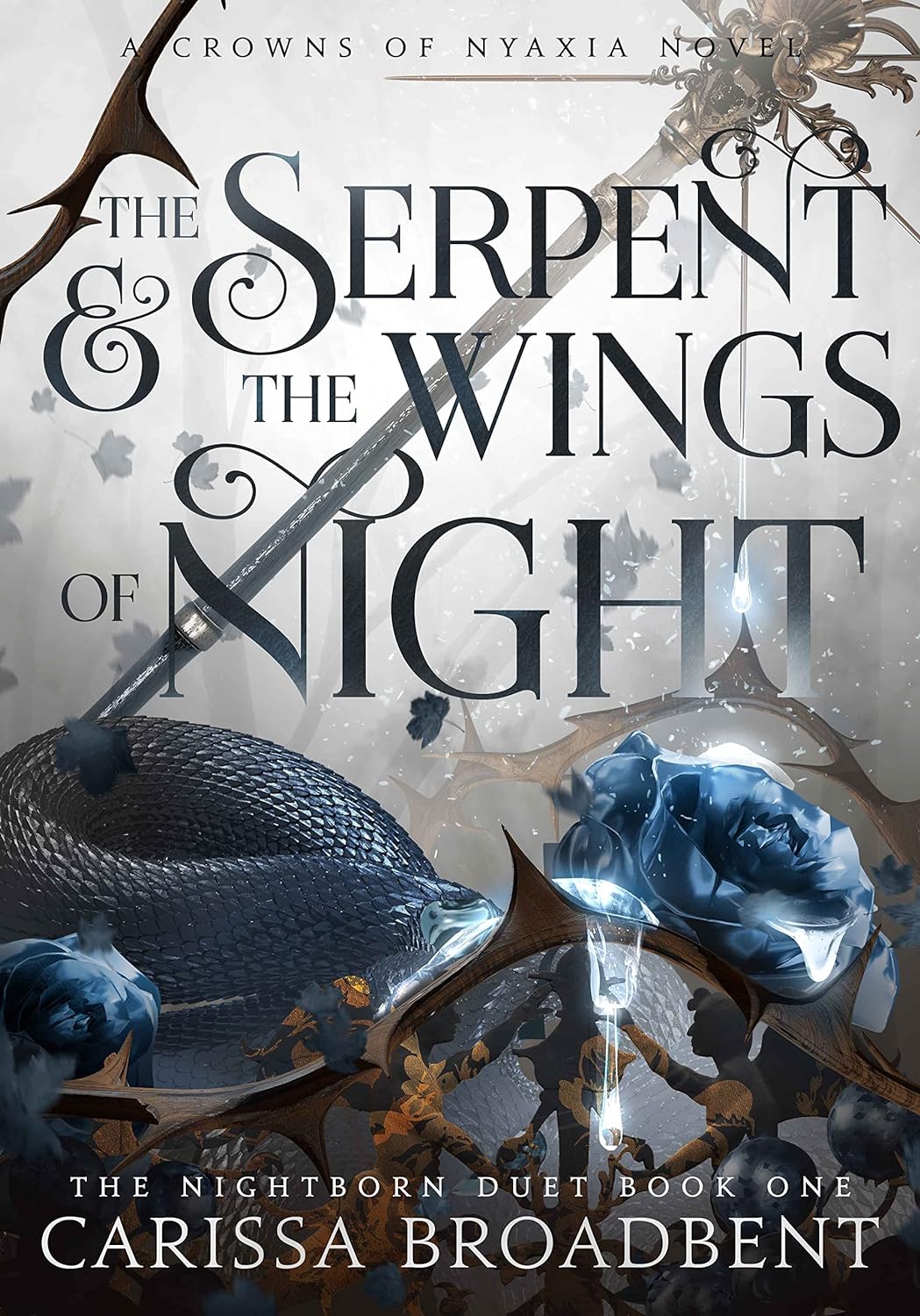}
    {The Serpent \&\\The Wings of\\Night}

\BookCell{2}{0}{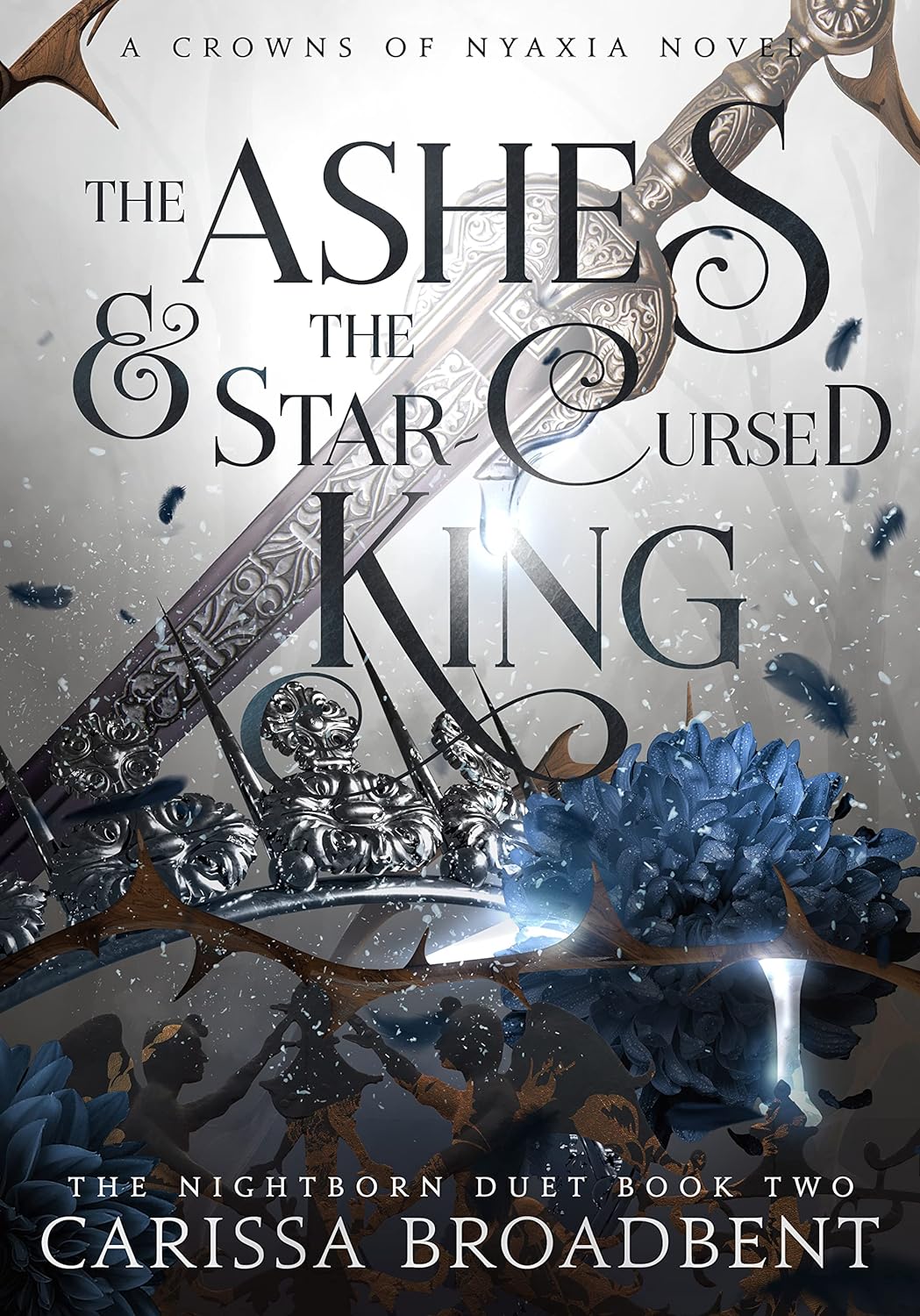}
    {The Ashes \&\\The Star\\Cursed King}

\BookCell{0}{1}{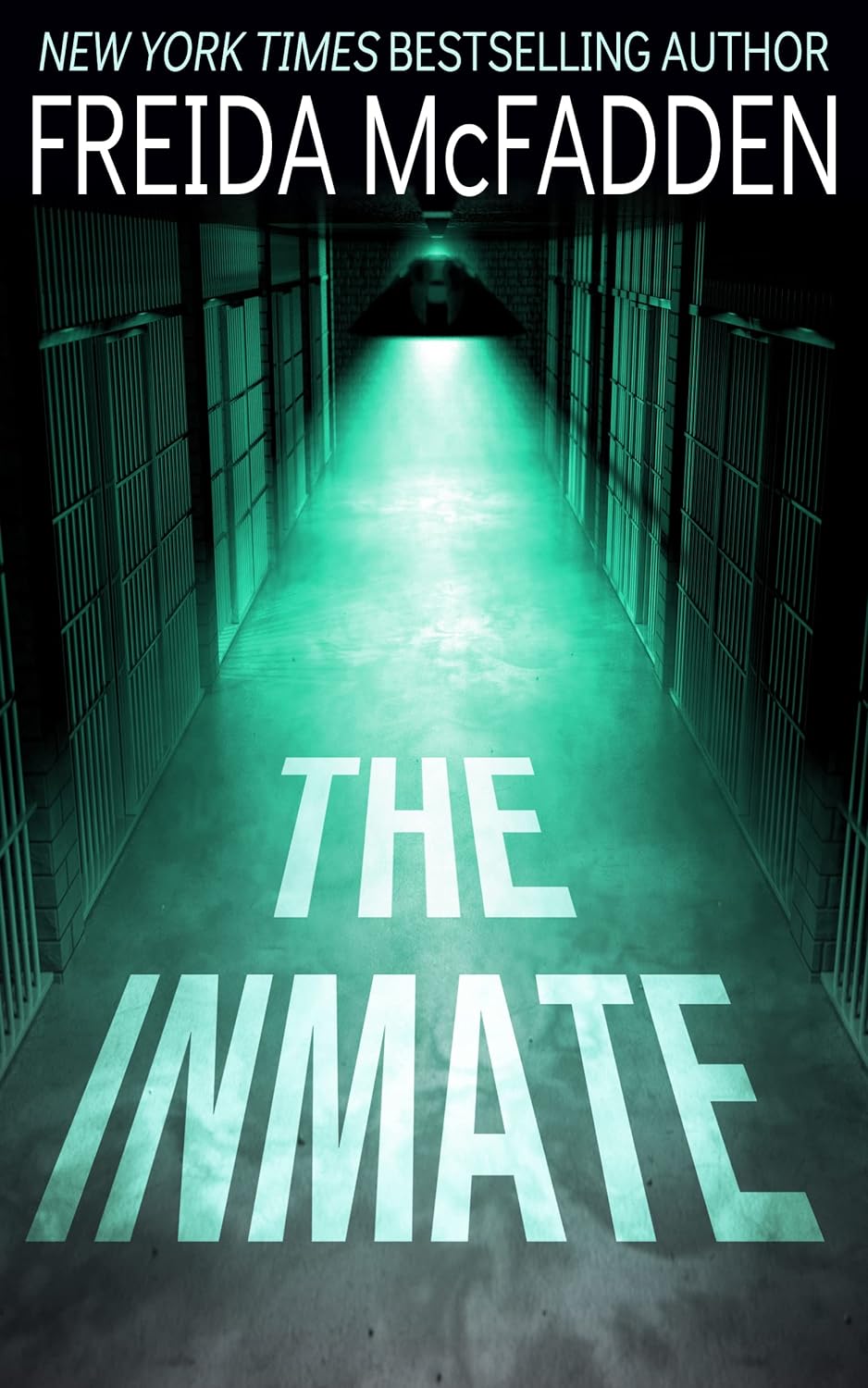}
    {The Inmate}

\BookCell{1}{1}{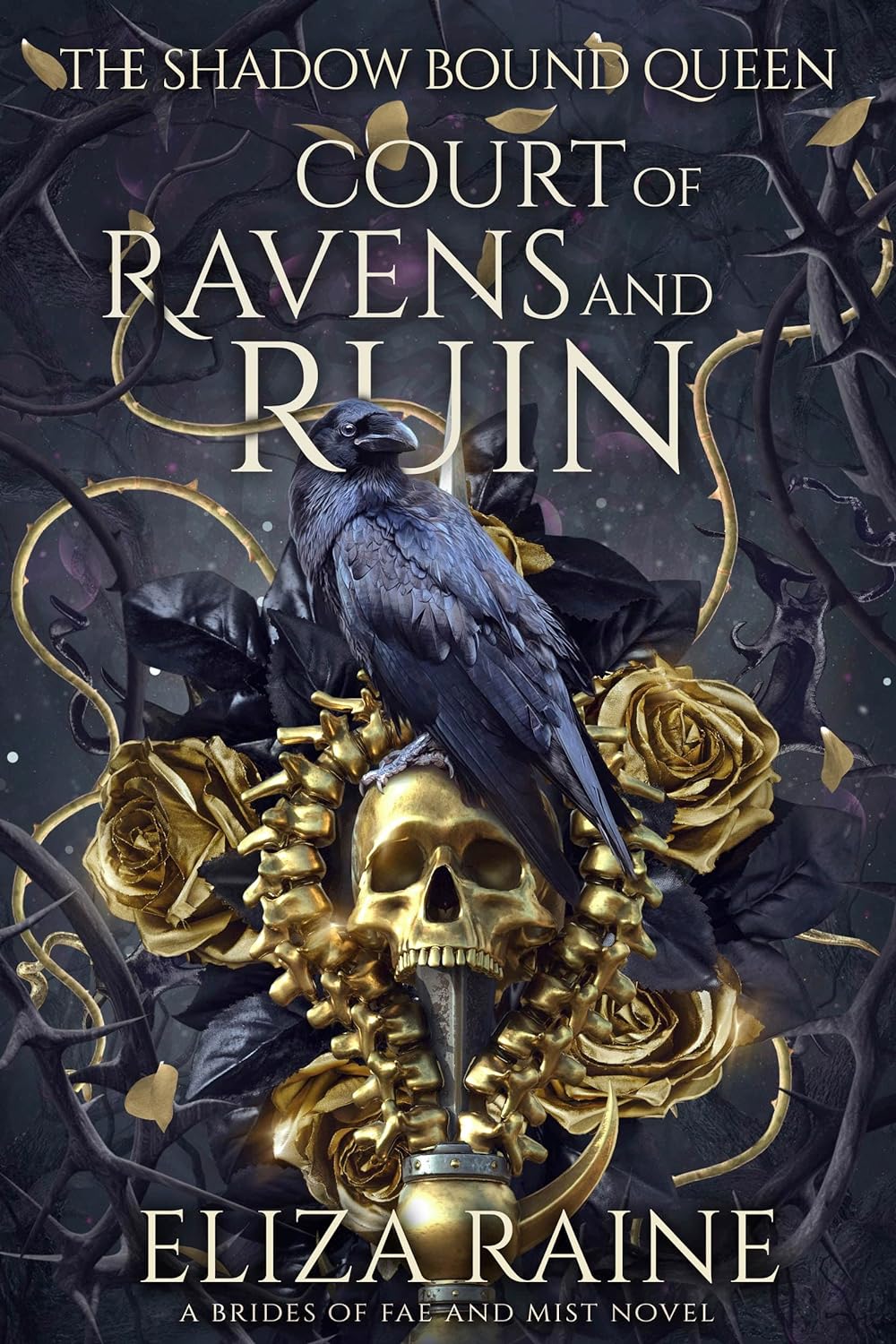}
    {Court of\\Ravens and\\Ruin}

\BookCell{2}{1}{B09WRJJKXC.jpg}
    {The Serpent \&\\The Wings of\\Night}

\BookCell{0}{2}{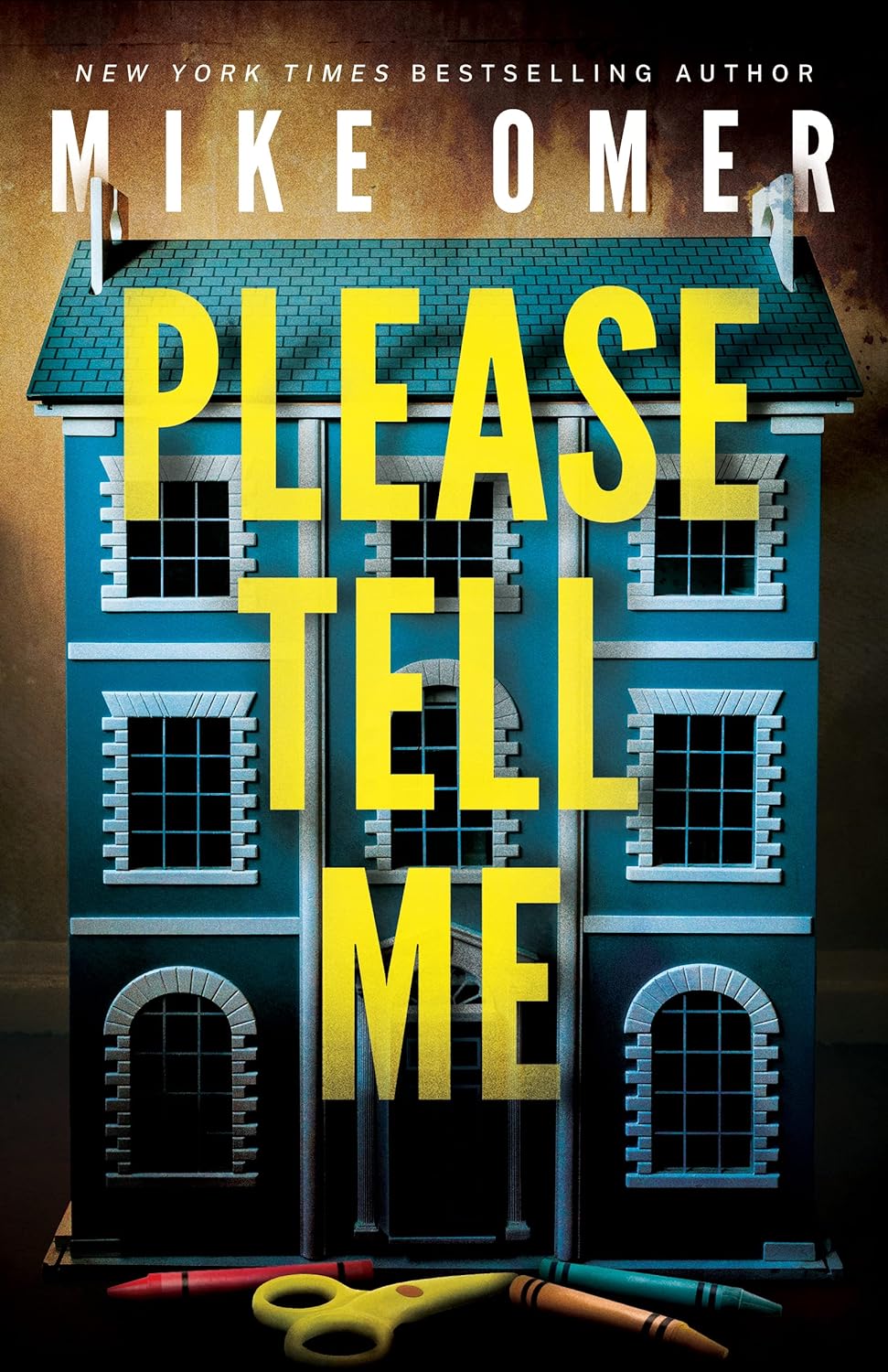}
    {Please Tell\\Me}

\BookCell{1}{2}{B0BWNW9G8X.jpg}
    {Please Tell\\Me}

\BookCell{2}{2}{B0BWNW9G8X.jpg}
    {Please Tell\\Me}

\BookCell{0}{3}{B0B1XQTZ1R.jpg}
    {The Inmate}

\BookCell{1}{3}{B0BFFJRB2S.jpg}
    {Court of\\Ravens and\\Ruin}

\BookCell{2}{3}{B09WRJJKXC.jpg}
    {The Serpent \&\\The Wings of\\Night}

\end{tikzpicture}
}
\vskip 0.5em
\caption{Closest substitute predictions for the three books where the naive and corrected models disagree.}
\label{fig:disagree}
	\parbox[]{\textwidth}{\small \emph{Notes:} The three columns correspond to the focal books for which the naive and bias-corrected estimators (based on the \emph{Review USE} specification) make different closest-substitute predictions. The row labeled ``Empirical'' shows the closest substitute to the focal book revealed by the second-choice data.}
\end{figure}

\subsection{Why Does the Correction Help?}
\label{sec:iia}

To shed more light on the mechanism underlying the improved performance of the corrected estimator, we take the preferred \emph{USE} proxies and focus on the three books for which the bias correction identifies the closest substitute whereas the naive estimator does not. For each of these cases, Figure \ref{fig:disagree} shows the closest substitutes predicted by the naive and corrected estimators, along with the ground truth revealed by the second-choice data.

The naive estimator incorrectly predicts the same book, \emph{Please Tell Me}, as the closest substitute to all three focal books. Importantly, \emph{Please Tell Me} is the most popular book in the first-choice data. This is a typical Independence of Irrelevant Alternatives (IIA) pattern: the model mechanically predicts substitution to the remaining books in proportion to their market shares, irrespective of their similarity to the focal book \citep{mcfadden1974conditional}. Why does this happen? As we saw in the simulations, when the proxies are noisy measures of the true dimensions of differentiation, the mixed logit model will tend to discard the proxies\footnote{Because the model includes product fixed effects, the only channel through which proxies can affect choice is via the covariance of the random component of utilities. This channel is shut down when the variances of the random coefficients on the proxies collapse to zero.} and revert toward the IIA-like substitution patterns of a logit model without random coefficients.
By contrast, the bias correction breaks this pattern: it redirects substitution toward more similar books and correctly identifies the closest substitute.

This illustration offers what we think is a valuable message for practitioners: using imperfect proxies can defeat the very purpose of estimating a mixed logit model by pushing its predictions toward IIA. Correcting the bias induced by proxy error alleviates this and, in both this application and our simulations, leads to a substantial improvement in the model's counterfactual predictions.

\section{Theory}
\label{sec:theory}

In this section, we present the main theoretical results. All proofs are in Appendix~\ref{app:proofs}.

\subsection{Case 1: Endogenous Prices}
\label{sec:theory blp}

As is standard, we treat the aggregate variables $(p_t,s_t,\bar x_t, \xi_t,z_t)$ as independent across markets, and the micro-level variables $(d_{it},y_{it},\bar y_{it})$ as independent within markets conditional on market-level variables. Let $\Gamma \subseteq \mathbb R^{\dim(\gamma)}$ denote the parameter space for $\gamma$. We implicitly assume that $\Gamma$ is convex and open. This is true for the $\gamma(\theta, e)$ given in Examples~\ref{ex:blp} and \ref{ex:micro blp}, for which $\Gamma$ is the product of copies of $\mathbb R$ and $(0,\infty)$ and is therefore convex and open.\footnote{\label{fn:cholesky} For instance, the operation $l$ stacking the lower-triangular entries of the Cholesky factor maps the manifold of symmetric positive definite matrices into the product of copies of $\mathbb R$ and $(0,\infty)$. A similar result holds for the reduced-rank Cholesky decomposition $l_r$ \citep{neuman2023restricted}. Whenever $e$ has full column rank $r$, one can always relabel the indices so that the first $r$ rows of $e$ are linearly independent, so that $l_r(e \Sigma_e e')$ is well defined. If this is not the case, $e$ contains collinear embeddings and the covariance matrix of utilities can be represented with a lower $r$.} In Section~\ref{sec:blp}, the counterfactual function $k_t$, $\hat \xi_t$ from (\ref{eq:xi}), and micro moments $m_t$ depended on $(\theta, e)$ only through the value of $\gamma(\theta, e)$. We can therefore view $k_t$, $\hat \xi_t$, and $m_t$ as random functions defined on $\Gamma$.

\subsubsection{Theory for Bias Correction}

We first give results for the case of combined market-level data and microdata. We let $\chi_t = (p_t, \xi_t, \bar x_t, e)$ and let $\mathcal M$ denote the $\sigma$-algebra generated by $\chi_1,\ldots,\chi_\tau$. Let $V = \mathrm{Var} ( Z_t \hat \xi_t(\gamma_0) )$ be the $\dim(z) \times \dim(z)$ covariance matrix of the aggregate moments at the true parameters, and let $V_t = \mathrm{Var} \left( m_{it} | \chi_t \right)$ be the $\dim(m) \times \dim(m)$ covariance matrix of the micro moments for market $t$ conditional on $\chi_t$. Let 
\begin{equation} \label{eq:H def}
 H = G'V^{-1}G + \sum_{t=1}^\tau M_t' (r_t V_t)^{-1} M_t,
\end{equation}
where $r_1,\ldots,r_\tau$ are defined in Assumption~\ref{a.2} below, 
and let $h = \E[  \dot k_t(\gamma_0) ] - G' V^{-1} K$, where $K = \E [ k_t(\gamma_0) Z_t \hat \xi_t(\gamma_0) ]$ is $\dim(z) \times 1$, $G = \E[Z_t \dot \xi_t(\gamma_0)]$ is $\dim(z) \times \dim(\gamma)$, and $M_t = \dot m_t(\gamma_0)$ is $\dim(m) \times \dim(\gamma)$. Here $V$ and $h$ are deterministic whereas $V_1, \ldots, V_\tau$ and $H$ are $\mathcal M$-measurable random matrices. Let $N$ be a neighborhood of $\gamma_0$. 

\medskip

\begin{assumption}\label{a.2}
	Let the following hold:
	\begin{enumerate}[label={(\roman*)}, nosep]
		\item \label{a.2.1} $k_t(\cdot)$, $m_t(\cdot)$, and $Z_t\hat \xi_t(\cdot)$ are twice continuously differentiable in $\gamma$ on $N$ (almost surely), and elements of the functions and their first and second derivatives are uniformly (for $\gamma \in N$) bounded by a random variable $D_t$ with finite fourth moment;
		\item \label{a.2.2} $\E[\|m_{it}\|^{2+\delta}|\mathcal M] \leq C$ (almost surely) for some $0 < \delta, C < \infty$;
		\item \label{a.2.3} $V$ is positive definite and $\lambda_{\min}(V_1),\ldots,\lambda_{\min}(V_\tau),\lambda_{\min}(H) \geq \epsilon$ (almost surely) for some $\epsilon > 0$;
		\item \label{a.2.4} $T/N_t \to r_t \in (0,\infty)$ for $1 \leq t \leq \tau$.
	\end{enumerate}
\end{assumption}

\bigskip

Parts~\ref{a.2.1}-\ref{a.2.3} of Assumption~\ref{a.2} are standard smoothness, moment, and rank conditions, respectively. Assumption~\ref{a.2}\ref{a.2.4} requires that the aggregate data and microdata have comparable sample sizes. This is designed to accommodate common empirical scenarios where $T$ is in the high tens or hundreds and $N_t$ is in the hundreds or thousands for each market (e.g., \cite{petrin2002quantifying} and \cite{grieco2024evolution}).

The next result shows that the bias-corrected estimator $\hat \kappa_{bc}$ from~\eqref{eq: kappa_bc blp linear} is asymptotically centered at the true counterfactual $\kappa_0 = \E[k_t(\gamma_0)]$ and its asymptotic variance is independent of $\hat \gamma$, $\hat \theta$, and $\tilde e$. Because the effect of market-level variables in markets for which there is microdata persists in the limit, we use the notion of stable convergence. We say a sequence of random variables $Z_T$ converges in distribution to $Z$ ($\mathcal M$-stably) if $\lim_{T \to \infty} \Pr( Z_T \leq z, A) = \Pr(Z \leq z, A)$ for all continuity points $z$ of the distribution of $Z$ and all $\mathcal M$-measurable events $A$. Convergence in distribution is a special case corresponding to replacing $\mathcal M$ with the trivial $\sigma$-algebra $\{\emptyset, \Omega\}$.

\medskip

\begin{proposition}\label{prop:kappa.endogenous}
	Let Assumption~\ref{a.2} hold and $\hat \gamma = \gamma_0 + o_p(T^{-1/4})$. Then $\sqrt T(\hat \kappa_{bc} - \kappa_0)$ converges in distribution ($\mathcal M$-stably) as $T \to \infty$ to a mixed Gaussian random variable with mean zero and $\mathcal M$-measurable variance
	\begin{equation} \label{eq:var.endogenous}
		V_{bc} = \mathrm{Var}\left(k_t(\gamma_0) \right) + h'H^{-1} h - K' V^{-1} K.
	\end{equation} 
\end{proposition}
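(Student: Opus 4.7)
\emph{Proof plan.} The plan is to linearize $\hat \kappa_{bc}$ around $\gamma_0$, verify by direct algebra that the weights $\hat c$ and $\hat d_t$ in~\eqref{eq: c^hat blp}--\eqref{eq: d^hat blp} are tailored to annihilate the leading term in $\hat \gamma - \gamma_0$, bound the quadratic remainder using the rate $\hat \gamma - \gamma_0 = o_p(T^{-1/4})$, and then apply a stable central limit theorem to the resulting linear functional of the estimation moments.

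A second-order Taylor expansion around $\gamma_0$ yields
\[
\hat \kappa_{bc} = \frac{1}{T}\sum_{t=1}^T \bigl(k_t(\gamma_0) - \hat c' Z_t \hat \xi_t(\gamma_0)\bigr) + \sum_{t=1}^\tau \hat d_t'(\bar m_t - m_t(\gamma_0)) + A_T'(\hat \gamma - \gamma_0) + R_T,
\]
where $A_T = \hat k - \hat G' \hat c - \sum_{t=1}^\tau \hat M_t' \hat d_t$. Assumption~\ref{a.2}\ref{a.2.1} provides an envelope on the second derivatives, so $|R_T| = O_p(\|\hat \gamma - \gamma_0\|^2) = o_p(T^{-1/2})$. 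Plugging in the closed forms of $\hat c$ and $\hat d_t$,
\[
\hat c' \hat G + \sum_{t=1}^\tau \hat d_t' \hat M_t = \hat K' \hat V^{-1} \hat G + \hat h' \hat H^{-1}\Bigl(\hat G' \hat V^{-1}\hat G + \sum_{t=1}^\tau \hat M_t' \hat V_t^{-1}\hat M_t\Bigr) = \hat K' \hat V^{-1}\hat G + \hat h',
\]
and substituting $\hat h' = \hat k' - \hat K'\hat V^{-1}\hat G$ shows $A_T = 0$ exactly. The bias correction thus purges, by construction, the first-order dependence on $\hat \gamma$.

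Standard uniform laws together with Assumption~\ref{a.2}\ref{a.2.2}--\ref{a.2.4} give $\hat c \to^p c \equiv V^{-1}(K + G H^{-1} h)$ and $\hat d_t \to^p d_t \equiv (r_t V_t)^{-1} M_t H^{-1} h$, so it suffices to establish a stable CLT for
\[
\frac{1}{\sqrt T}\sum_{t=1}^T \bigl(k_t(\gamma_0) - \kappa_0 - c' Z_t \hat \xi_t(\gamma_0)\bigr) + \sqrt T \sum_{t=1}^\tau d_t'(\bar m_t - m_t(\gamma_0)).
\]
Splitting the aggregate sum at $t = \tau$, the $t \leq \tau$ piece is $\mathcal M$-measurable and contributes $O_p(T^{-1/2})$; the $t > \tau$ piece is iid and independent of $\mathcal M$, so a standard CLT gives $\mathcal M$-stable convergence to a centered Gaussian with variance $\mathrm{Var}(k_t(\gamma_0) - c' Z_t \hat \xi_t(\gamma_0))$. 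Conditional on $\mathcal M$, the within-market averages $\sqrt T(\bar m_t - m_t(\gamma_0))$ for $t=1,\ldots,\tau$ are independent $\sqrt{T/N_t}$-scaled sums of iid conditionally mean-zero terms with conditional variance $V_t$; a conditional Lindeberg CLT (valid under Assumption~\ref{a.2}\ref{a.2.2}) then yields joint conditional Gaussian limits with conditional variances $r_t V_t$, independent of the aggregate component given $\mathcal M$.

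A final algebraic step matches the asymptotic variance to $V_{bc}$. Using $\mathrm{Cov}(k_t(\gamma_0), Z_t \hat \xi_t(\gamma_0)) = K$ (by~\eqref{eq:instrument}, $\E[Z_t \hat\xi_t(\gamma_0)] = 0$), the aggregate contribution simplifies to $\mathrm{Var}(k_t(\gamma_0)) - K'V^{-1}K + h' H^{-1} G'V^{-1} G H^{-1}h$; the micro contribution is $\sum_{t=1}^\tau d_t'(r_t V_t)d_t = h'H^{-1}(H - G'V^{-1}G)H^{-1}h$, using the definition~\eqref{eq:H def} of $H$. Summing and canceling the cross term gives $V_{bc}$ as in~\eqref{eq:var.endogenous}. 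The main obstacle will be the stable CLT step: keeping track of the $\mathcal M$-measurable aggregate pieces from the overlap $t\leq\tau$ and combining the iid aggregate CLT with a conditional Lindeberg CLT for the micro moments in a way that cleanly absorbs the data-dependent weights $\hat c$ and $\hat d_t$.
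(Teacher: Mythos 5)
Your proposal is correct and follows essentially the same route as the paper's proof: expand around $\gamma_0$, use the in-sample orthogonality identity $\hat k = \hat G'\hat c + \sum_{t}\hat M_t'\hat d_t$ (which your algebra verifies exactly) to annihilate the linear term, absorb the quadratic remainder via $\hat \gamma - \gamma_0 = o_p(T^{-1/4})$, pass from $(\hat c, \hat d_t)$ to their $\mathcal M$-measurable limits, and conclude with a joint stable CLT and the same variance algebra (the paper invokes an off-the-shelf joint stable CLT rather than assembling the conditional Lindeberg argument by hand, and notes the limit of the aggregate piece is mixed Gaussian since $c$ is $\mathcal M$-measurable and random). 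The one loose point is that a Taylor expansion around $\gamma_0$ yields a linear coefficient evaluated at $\gamma_0$ rather than your $A_T$ evaluated at $\hat \gamma$; the paper bridges this with a second mean-value expansion of the intermediate point around $\hat \gamma$, and the discrepancy is $O_p(\|\hat\gamma - \gamma_0\|^2) = o_p(T^{-1/2})$ under the same second-derivative envelope, so your conclusion stands.
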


\medskip

The (random) asymptotic variance $V_{bc}$ can easily be estimated using $\hat V_{bc}$  in equation~(\ref{eq:standard errors blp}).  Standard errors $\mathrm{s.e.}(\hat \kappa_{bc}) = \sqrt{\hat V_{bc}/T}$ are consistent under the conditions of Proposition~\ref{prop:kappa.endogenous}. Inference can be performed based on $t$-statistics $(\hat \kappa_{bc} - \kappa_0)/\mathrm{s.e.}(\hat \kappa_{bc})$ using the standard $N(0,1)$ critical values.

\medskip

\begin{remark}\label{rmk:vicinity}
Proposition~\ref{prop:kappa.endogenous} requires $\hat \gamma = \gamma_0 + o_p(T^{-1/4})$, which is a completely standard condition in two-step estimation \cite[e.g.,][Assumption~5.1(ii)]{newey1994asymptotic}. Given the nonlinear nature of multi-product demand models, a neighborhood condition on $\hat \gamma$ seems unavoidable. Moreover, as discussed below, our $LM_1$ diagnostic can be used to validate this condition in empirical settings. 

It is also important to emphasize that this condition holds under the implicit assumptions of the usual workflow which treats $\tilde e$ as the true $e$. In that case, standard regularity conditions imply that $\hat \theta$ is $\sqrt{T}$-consistent and $\theta \mapsto \gamma(\theta, \tilde e)$ is continuously differentiable at $\theta_0$. This, in turn, implies that $\hat \gamma = \gamma_0 + O_p(T^{-1/2})$, a faster rate on the remainder than we require. Thus, our approach gains robustness to proxy error and its theoretical guarantees hold under weaker conditions than are already assumed by the usual workflow.
\end{remark}

We next provide a sense in which $\hat \kappa_{bc}$ is efficient. The proof of Proposition~\ref{prop:kappa.endogenous} shows that $\hat \kappa_{bc}$ belongs to the class $\mathcal K$ of estimators $\hat \kappa$ of $\kappa_0$ that satisfy
\begin{multline} \label{eq:kappa linearize}
	\sqrt T(\hat \kappa - \kappa_0) = \frac{1}{\sqrt T} \sum_{t=1}^T \left( k_t(\gamma_0) - \kappa_0 - c' Z_t \hat \xi_t(\gamma_0) \right) \\ + \sqrt T \sum_{t = 1}^\tau d_t' \left( \bar m_t - m_t(\gamma_0) \right) + o_p(1),
\end{multline}
where $c$ and $d_1,\ldots,d_\tau$ are any $\mathcal M$-measurable random vectors that satisfy 
\begin{equation}\label{eq:orthog.blp.micro}
	\E[  \dot k_t(\gamma_0) ] - G' c - \sum_{t=1}^\tau M_t' d_t = 0,
\end{equation}
(almost surely). Similar arguments to the proof of Proposition \ref{prop:kappa.endogenous} show that any $\hat \kappa$ obtained by plugging-in any  $\hat \gamma = \gamma_0 + o_p(T^{-1/4})$ into (\ref{eq: kappa_bc blp linear}) for some arbitrary weights $\hat c$ and $\hat d_1,\ldots,\hat d_\tau$ converging to $c$ and $d_1,\ldots,d_\tau$ belongs to this class. Condition \eqref{eq:orthog.blp.micro} typically ensures that such an estimator satisfies \eqref{eq:kappa linearize} uniformly for $\gamma$ local to $\gamma_0$. In general, there are many different weights $\hat c$ and $\hat d_1,\ldots,\hat d_\tau$ whose probability limits $c$ and $d_1,\ldots,d_\tau$ will correspond to different (random) asymptotic variances. The following result shows that $\hat \kappa_{bc}$ has the smallest asymptotic variance among this class of estimators of $\kappa_0$.

\medskip

\begin{proposition}\label{prop:efficiency}
Let Assumption~\ref{a.2} hold and $\hat \gamma = \gamma_0 + o_p(T^{-1/4})$. Then $\hat \kappa_{bc}$ has the smallest asymptotic variance among the class of estimators $\mathcal K$ of $\kappa_0$.
\end{proposition}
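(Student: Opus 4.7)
The plan is to reduce the efficiency claim to a constrained quadratic minimization. The proof of Proposition~\ref{prop:kappa.endogenous} already shows that any $\hat\kappa\in\mathcal K$ admits the linearization (\ref{eq:kappa linearize}) whenever the weights satisfy the orthogonality condition (\ref{eq:orthog.blp.micro}). Thus, efficiency in $\mathcal K$ is equivalent to minimizing the asymptotic variance of the right-hand side of (\ref{eq:kappa linearize}) over $\mathcal M$-measurable $(c,d_1,\ldots,d_\tau)$ subject to (\ref{eq:orthog.blp.micro}), and then checking that the minimizers coincide with the probability limits of $\hat c$ and $\hat d_t$ from (\ref{eq: c^hat blp})--(\ref{eq: d^hat blp}).

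First I would compute the asymptotic variance of a generic element of $\mathcal K$. The aggregate piece $\frac{1}{\sqrt T}\sum_{t=1}^T\bigl(k_t(\gamma_0)-\kappa_0-c'Z_t\hat\xi_t(\gamma_0)\bigr)$ is asymptotically Gaussian with variance $\mathrm{Var}(k_t(\gamma_0))-2c'K+c'Vc$, using (\ref{eq:instrument}) and the definitions of $V$ and $K$. The micro piece $\sqrt T\sum_{t=1}^\tau d_t'(\bar m_t-m_t(\gamma_0))$ is, conditional on $\mathcal M$, asymptotically mixed Gaussian with $\mathcal M$-measurable variance $\sum_{t=1}^\tau r_t\,d_t'V_t d_t$ by Assumption~\ref{a.2}. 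Because only a fixed $\tau$ of the $T$ markets carry microdata, the two pieces are asymptotically independent ($\mathcal M$-stably), so the total asymptotic variance of $\sqrt T(\hat\kappa-\kappa_0)$ is
\begin{equation*}
 V_{bc}(c,d_1,\ldots,d_\tau)=\mathrm{Var}(k_t(\gamma_0))-2c'K+c'Vc+\sum_{t=1}^\tau r_t\,d_t'V_t d_t.
\end{equation*}
Minimizing this quadratic subject to (\ref{eq:orthog.blp.micro}) via a Lagrange multiplier $\mu\in\mathbb R^{\dim(\gamma)}$ gives the first-order conditions $c^{*}=V^{-1}(K+G\mu)$ and $d_t^{*}=(r_tV_t)^{-1}M_t\mu$. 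Substituting into the constraint and using the definition of $H$ in (\ref{eq:H def}) yields $\mu=H^{-1}h$. Plugging $c^{*}$ and $d_t^{*}$ back and collecting terms (the cross terms $\pm 2h'H^{-1}G'V^{-1}K$ cancel, and $G'V^{-1}G+\sum_t M_t'(r_tV_t)^{-1}M_t=H$ reassembles the quadratic) delivers the minimum value $\mathrm{Var}(k_t(\gamma_0))-K'V^{-1}K+h'H^{-1}h$, matching (\ref{eq:var.endogenous}). Finally, since $\hat V\to V$, $\hat V_t\to r_tV_t$, $\hat G\to G$, $\hat M_t\to M_t$, $\hat h\to h$ and $\hat H\to H$ in probability under Assumption~\ref{a.2}, the weights $\hat c$ and $\hat d_t$ in (\ref{eq: c^hat blp})--(\ref{eq: d^hat blp}) are exactly the sample analogs of $c^{*}$ and $d_t^{*}$, so the corresponding $\hat\kappa_{bc}$ attains the lower bound.

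The main obstacle will be the careful treatment of the joint limit distribution: the micro-moment term is normalized by $\sqrt T$ but driven by only $N_t$ observations in each of $\tau$ fixed markets, so $\mathcal M$-stable convergence is needed to legitimately add the two variance contributions and to permit $\mathcal M$-measurable (hence random) weights $d_t$. A secondary subtlety is that the minimization takes place over the affine set defined by (\ref{eq:orthog.blp.micro}); one must invoke the characterization of $\mathcal K$ from the proof of Proposition~\ref{prop:kappa.endogenous} to rule out estimators outside this family that might otherwise achieve a smaller variance, which is what restricts the Lagrangian analysis to this constraint in the first place.
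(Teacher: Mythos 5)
Your proposal is correct and follows essentially the same route as the paper's proof: both compute the asymptotic variance $\mathrm{Var}(k_t(\gamma_0)) - 2c'K + c'Vc + \sum_{t=1}^\tau r_t d_t'V_t d_t$ of a generic element of $\mathcal K$ and then minimize it, conditional on $\mathcal M$, subject to the orthogonality constraint (\ref{eq:orthog.blp.micro}), recovering the weights in (\ref{eq:blp weights}) and the minimum value (\ref{eq:var.endogenous}). Your version merely spells out the Lagrangian computation that the paper leaves implicit.
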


\medskip

An important practical take-away from Proposition~\ref{prop:efficiency} is that microdata should be used, if available, to (weakly) improve the efficiency of estimators of counterfactuals.

Of course, in many scenarios microdata may not be available. Here we state a simpler version of Proposition~\ref{prop:kappa.endogenous} tailored to this case. Recall that the bias-corrected estimator in this case is given in \eqref{eq: kappa_bc blp linear no micro}, where $\hat c$ is given in \eqref{eq: c^hat blp} with $\hat H = \hat G' \hat V^{-1} \hat G$. To introduce the assumptions, let $V$ and $G$ be as above, and let $H = G' V^{-1} G$.

\medskip

\begin{assumption}\label{a.3}
	Let the following hold:
	\begin{enumerate}[label={(\roman*)}, nosep]
		\item \label{a.3.1} $k_t(\cdot)$ and $Z_t \hat \xi_t(\cdot)$ are twice continuously differentiable in $\gamma$ on $N$ (almost surely), and the functions and all elements of their first and second derivatives are uniformly (for $\gamma \in N$) bounded by a random variable $D_t$ with finite fourth moment;
		\item \label{a.3.2} $V$ and $H$ are positive definite.
	\end{enumerate}
\end{assumption}

\bigskip

The next result is a special case of Proposition~\ref{prop:kappa.endogenous} and is stated without proof.

\medskip

\begin{proposition}\label{prop:kappa.endogenous.simple}
	Let Assumption~\ref{a.3} hold and $\hat \gamma = \gamma_0 + o_p(T^{-1/4})$. Then, 
	\[
	\sqrt T(\hat \kappa_{bc} - \kappa_0) \to_d N(0,V_{bc})
	\]
	with $V_{bc}$ as in \eqref{eq:var.endogenous} with $H = G'V^{-1}G$.
\end{proposition}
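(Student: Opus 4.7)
The plan is to obtain Proposition~\ref{prop:kappa.endogenous.simple} as a direct specialization of Proposition~\ref{prop:kappa.endogenous} to the case with no microdata, i.e., $\tau = 0$. First I would verify that Assumption~\ref{a.3} supplies exactly the hypotheses of Assumption~\ref{a.2} that survive when the micro-moments are absent: \ref{a.3.1} is \ref{a.2.1} restricted to $k_t$ and $Z_t \hat \xi_t(\cdot)$, the positive-definiteness of $V$ and $H = G'V^{-1}G$ in \ref{a.3.2} is what remains of \ref{a.2.3}, and conditions \ref{a.2.2} and \ref{a.2.4} become vacuous because every sum $\sum_{t=1}^{\tau}$ is empty.

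Next I would trace through the construction of $\hat \kappa_{bc}$. With $\tau = 0$, the micro-moment weights $\hat d_t$ and Jacobians $\hat M_t$ in \eqref{eq:h blp} and \eqref{eq: d^hat blp} drop out, so $\hat H$ collapses to $\hat G'\hat V^{-1}\hat G$, $\hat c$ to $\hat V^{-1}(\hat K + \hat G \hat H^{-1}\hat h)$, and \eqref{eq: kappa_bc blp linear} to \eqref{eq: kappa_bc blp linear no micro}. In parallel, on the population side the matrix $H$ in \eqref{eq:H def} reduces to $G'V^{-1}G$ and the vector $h$ to $\E[\dot k_t(\gamma_0)] - G'V^{-1}K$, so the variance formula \eqref{eq:var.endogenous} automatically specializes to the expression advertised in the statement.

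For the limit, the $\sigma$-algebra $\mathcal M$ generated by market-level variables in markets with microdata becomes trivial (namely $\{\emptyset,\Omega\}$), so $\mathcal M$-stable convergence in Proposition~\ref{prop:kappa.endogenous} degenerates to ordinary convergence in distribution and the $\mathcal M$-measurable random variance $V_{bc}$ becomes deterministic. Since $\|\hat \gamma - \gamma_0\| = o_p(T^{-1/4})$ is maintained, the Taylor expansion argument underlying Proposition~\ref{prop:kappa.endogenous}---expanding $k_t$ and $\hat \xi_t$ around $\gamma_0$, using the orthogonalization condition \eqref{eq:orthog.blp.micro} (which in the no-microdata case becomes $\E[\dot k_t(\gamma_0)] = G'c$), and applying a plain CLT to the linearized moments---goes through verbatim with only the aggregate-moment terms present.

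The main (mild) obstacle is simply bookkeeping: confirming that nothing in the proof of Proposition~\ref{prop:kappa.endogenous} implicitly uses a nonempty micro sample, e.g., that the consistency of $\hat V, \hat G, \hat K$ and the negligibility of the second-order remainders in the Taylor expansion rely only on Assumption~\ref{a.3}\ref{a.3.1}-\ref{a.3.2} together with a uniform law of large numbers delivered by the envelope $D_t$ with finite fourth moment. Once these items are checked, the conclusion $\sqrt T(\hat \kappa_{bc} - \kappa_0) \to_d N(0,V_{bc})$ with $V_{bc} = \mathrm{Var}(k_t(\gamma_0)) + h'H^{-1}h - K'V^{-1}K$ and $H = G'V^{-1}G$ follows immediately, so no additional proof is required beyond pointing to Proposition~\ref{prop:kappa.endogenous}.
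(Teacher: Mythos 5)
Your proposal is correct and matches the paper exactly: the paper states this result without proof, noting it is a special case of Proposition~\ref{prop:kappa.endogenous}, and your specialization (empty micro-moment sums, trivial $\mathcal M$ turning stable convergence into ordinary convergence in distribution and making $V_{bc}$ deterministic, $H$ collapsing to $G'V^{-1}G$) is precisely the intended argument. Your reading of the hypothesis as Assumption~\ref{a.3} rather than the (evidently mislabeled) Assumption~\ref{a.1} is also the right one.
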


\medskip

The asymptotic variance can be easily estimated using the formula $\hat V_{bc}$ in \eqref{eq:standard errors blp no micro data}. Standard errors are then computed as $\sqrt{\hat V_{bc}/T}$. These are consistent under the conditions of Proposition~\ref{prop:kappa.endogenous.simple}.

\subsubsection{Theory for $LM_1$}
\label{sec:goodness of fit blp.theory}

We now show that the diagnostic $LM_1$ in \eqref{eq: LM1 blp} behaves like $\|\sqrt T(\hat \gamma - \gamma_0)\|^2$ as the sample size grows large. 
In what follows, we abbreviate ``with probability approaching one'' to ``wpa1.'' Recall $H$ from \eqref{eq:H def} and let $\lambda_{\min}(H)$ denote its smallest singular value, which is uniformly bounded away from zero by Assumption~\ref{a.2}\ref{a.2.3}.

\medskip

\begin{proposition}\label{prop:lm.fit.gamma.2}
	Let Assumption~\ref{a.2} hold and let $\hat \gamma = \gamma_0 + o_p(1)$. Fix any sequence $C_T \uparrow \infty$ and any $\epsilon > 0$. Then wpa1, we have 
	\[
	\frac{1 + \epsilon}{1 + 2\epsilon} \left( \sqrt{LM_1} - \epsilon C_T \right) \leq \| H^{1/2}(\sqrt T (\hat \gamma - \gamma_0))\| \leq \frac{1 + 2\epsilon}{1 + \epsilon} \left( \sqrt{LM_1} + \epsilon C_T \right).
	\]
	In particular, wpa1 we have that $LM_1 \leq C_T^2$ implies
	\[
	\|\sqrt T (\hat \gamma - \gamma_0)\| \leq \frac{(1+2\epsilon) C_T}{\sqrt{\lambda_{\min}(H)}}.
	\]
	Moreover, if $\hat \gamma = \gamma_0 + o_p(C_T/\sqrt T)$, then wpa1 we have
	\[
	LM_1 \leq (1+\epsilon)^2 C_T^2.
	\]
\end{proposition}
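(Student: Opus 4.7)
The plan is to linearize $\hat S$ around $\gamma_0$, identify $\hat H \sqrt T(\hat\gamma - \gamma_0)$ as its leading term, and then use the triangle inequality to pass between $\sqrt{LM_1}$ and $\|H^{1/2}\sqrt T(\hat\gamma - \gamma_0)\|$; the wpa1 conclusions follow because $C_T \to \infty$ eventually dominates any $O_p(1)$ nuisance. Throughout, set $\alpha = \hat H^{-1/2}\sqrt T \hat S$ (so $\|\alpha\| = \sqrt{LM_1}$) and $\beta = H^{1/2}\sqrt T(\hat\gamma - \gamma_0)$.

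First I would prove the linearization. Define the random function
\[
\phi(\gamma) = \hat G' \hat V^{-1}\left(\frac{1}{T}\sum_{t=1}^T Z_t \hat\xi_t(\gamma)\right) + \sum_{t=1}^\tau \hat M_t' \hat V_t^{-1}\bigl(m_t(\gamma) - \bar m_t\bigr),
\]
with the weight matrices $\hat G, \hat V, \hat M_t, \hat V_t$ held fixed at their values at $\hat\gamma$. By construction, $\hat S = \phi(\hat\gamma)$ and $(d\phi/d\gamma')(\hat\gamma) = \hat H$. A componentwise mean-value expansion then delivers $\hat S = \phi(\gamma_0) + \tilde H(\hat\gamma - \gamma_0)$ with $\tilde H = \int_0^1 (d\phi/d\gamma')(\gamma_0 + s(\hat\gamma - \gamma_0))\,ds$. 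Under Assumption~\ref{a.2}\ref{a.2.1}--\ref{a.2.3} and $\hat\gamma \to_p \gamma_0$, the sample Jacobians $\hat G(\gamma) = \tfrac{1}{T}\sum_t Z_t \dot\xi_t(\gamma)$ and $\hat M_t(\gamma) = \dot m_t(\gamma)$ converge to $G$ and $M_t$ uniformly on a neighborhood of $\gamma_0$, while $\hat V \to_p V$ and $\hat V_t \to_p r_t V_t$; together these give $\tilde H \to_p H$ and $\hat H \to_p H$. In parallel, the instrument exogeneity built into $\hat\xi_t(\gamma_0)$ and the moment bounds in Assumption~\ref{a.2}\ref{a.2.1}--\ref{a.2.2} yield $\tfrac{1}{\sqrt T}\sum_t Z_t\hat\xi_t(\gamma_0) = O_p(1)$ and $\sqrt T(\bar m_t - m_t(\gamma_0)) = O_p(1)$ (the latter via the $T/N_t \to r_t$ scaling), whence $\sqrt T\phi(\gamma_0) = O_p(1)$ and $\xi_T := \hat H^{-1/2}\sqrt T\phi(\gamma_0) = O_p(1)$.

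Premultiplying the expansion by $\hat H^{-1/2}$ and using the identity $\hat H^{-1/2}\tilde H - H^{1/2} = (\hat H^{-1/2} - H^{-1/2})\tilde H + H^{-1/2}(\tilde H - H)$, which is $o_p(1)$ in operator norm, yields $\alpha = \beta + R$ with $\|R\| \leq \|\xi_T\| + r_T\|\beta\|$, where $r_T = o_p(1)$ absorbs the operator-norm error together with the elementary bound $\sqrt T\|\hat\gamma - \gamma_0\| \leq \|\beta\|/\sqrt{\lambda_{\min}(H)}$. Fix $\epsilon > 0$: wpa1, $r_T \leq \epsilon/(1+2\epsilon)$ and $\|\xi_T\| \leq \epsilon C_T$. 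Since $1 - \epsilon/(1+2\epsilon) = (1+\epsilon)/(1+2\epsilon)$, the triangle inequality produces
\[
\frac{1+\epsilon}{1+2\epsilon}\|\beta\| - \epsilon C_T \leq \|\alpha\| \leq \frac{1+3\epsilon}{1+2\epsilon}\|\beta\| + \epsilon C_T,
\]
and solving for $\|\beta\|$ gives $\|\beta\| \leq \tfrac{1+2\epsilon}{1+\epsilon}(\sqrt{LM_1}+\epsilon C_T)$ exactly as stated, and $\|\beta\| \geq \tfrac{1+2\epsilon}{1+3\epsilon}(\sqrt{LM_1}-\epsilon C_T)$, which implies the stated $\|\beta\| \geq \tfrac{1+\epsilon}{1+2\epsilon}(\sqrt{LM_1}-\epsilon C_T)$ via the elementary inequality $(1+2\epsilon)^2 \geq (1+\epsilon)(1+3\epsilon)$.

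For the ``in particular'' consequence, $LM_1 \leq C_T^2$ combined with the upper bound just proved gives $\|\beta\| \leq (1+2\epsilon) C_T$, whence $\|\sqrt T(\hat\gamma - \gamma_0)\| \leq \|\beta\|/\sqrt{\lambda_{\min}(H)} \leq (1+2\epsilon) C_T/\sqrt{\lambda_{\min}(H)}$. For the ``moreover'' claim, $\hat\gamma - \gamma_0 = o_p(C_T/\sqrt T)$ forces $\|\beta\| = o_p(C_T)$, and plugging this into $\|\alpha\| \leq (1+r_T)\|\beta\| + \|\xi_T\|$ with both $(1+r_T)\|\beta\|/C_T$ and $\|\xi_T\|/C_T$ being $o_p(1)$ delivers $\|\alpha\| \leq (1+\epsilon) C_T$ wpa1, i.e.\ $LM_1 \leq (1+\epsilon)^2 C_T^2$. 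The main technical obstacle is the first step: showing $\tilde H \to_p H$ when the weight matrices $\hat V, \hat V_t$ are themselves random and evaluated at $\hat\gamma$. This requires uniform-in-$\gamma$ control of the sample Jacobians over the random line segment from $\gamma_0$ to $\hat\gamma$, combined with consistency of the weight estimators --- each individually standard under Assumption~\ref{a.2}, but the combination is where the bulk of the bookkeeping sits.
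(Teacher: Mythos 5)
Your proposal is correct and follows essentially the same route as the paper: the paper's Lemma~\ref{lem:lm.2} establishes exactly your linearization $\sqrt T\hat S = Z_T + o_p(1) + (H+o_p(1))\sqrt T(\hat\gamma-\gamma_0)$ (via a pointwise mean-value expansion of the moment terms rather than your integral-form expansion of $\phi$, a cosmetic difference), and the proposition is then deduced by the same triangle-inequality argument, using that $\|(H^{-1/2}+o_p(1))(Z_T+o_p(1))\|^2\to_d\chi^2_{\dim(\gamma)}$ so the nuisance term is below $\epsilon C_T$ wpa1. Your intermediate constants $(1+3\epsilon)/(1+2\epsilon)$ differ slightly from the paper's $(1+2\epsilon)/(1+\epsilon)$ but, as you note, imply the stated bounds.
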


Proposition~\ref{prop:lm.fit.gamma.2} shows $LM_1$ behaves like $\|\sqrt T (\hat \gamma - \gamma_0)\|^2$. With $C_T = o(T^{1/4})$,  wpa1 we have  that $LM_1 \leq C_T^2$ implies $\|\hat \gamma - \gamma_0\| \leq \mathrm{constant} \times C_T/\sqrt T = o(T^{-1/4})$.

The proof of Proposition~\ref{prop:lm.fit.gamma.2} shows that the ``wpa1'' qualifier depends on whether a $\chi^2_{\dim(\gamma)}$ random variable is less than $\epsilon^2 C_T^2$. With $\epsilon = 1$, say, this suggests taking $C_T^2$ to be at least as large as the 95th or 99th percentile of the $\chi^2_{\dim(\gamma)}$ distribution. To check a convergence rate of $\sqrt{(\log T)/T}$, for instance, one could use $C_T^2 = \chi^2_{\dim(\gamma), 0.95} \log T$.

\subsection{Case 2: Individual-Level Price Variation}
\label{sec:theory model 2}

As is standard, we treat the observations $(d_i,p_i,y_i)$ as independent across individuals. We again let $\Gamma \subseteq \mathbb R^{\dim(\gamma)}$ denote the parameter space for $\gamma$, which we shall take to be convex and open. This is true for the $\gamma(\theta, e)$ in Example~\ref{ex:mixed logit}, for which $\Gamma$ is the product of copies of $\mathbb R$ and $(0,\infty)$ and is therefore convex and open (see footnote~\ref{fn:cholesky}). In Section~\ref{sec: exog prices}, the counterfactual function $k_i$ and choice probabilities $\sigma_i$ depended on $(\theta, e)$ only through the value of $\gamma(\theta, e)$. We therefore treat $k_i$ and $\sigma_i$ as random functions on $\Gamma$.

\subsubsection{Theory for Bias Correction}
		
We now derive the theoretical properties of the bias-corrected estimator $\hat \kappa_{bc}$ from (\ref{eq:kappa.bc}). We first outline some standard smoothness, moment, and rank assumptions. In what follows, we use a dot (as above) and double dot to denote first and second derivatives with respect to $\gamma$. Let $H = \E [ \dot \sigma_i(\gamma_0)' V_i(\gamma_0)^{-1} \dot \sigma_i(\gamma_0) ]$ and $N$ be a neighborhood of $\gamma_0$.

\medskip

\begin{assumption}\label{a.1}
    Let the following hold:
    \begin{enumerate}[label={(\roman*)}, nosep]
        \item \label{a.1.1} $k_i(\cdot)$ and $\sigma_i(\cdot)$ are twice continuously differentiable in $\gamma$ on $N$ (almost surely), and all elements of the functions and their first and second derivatives are uniformly (for $\gamma \in N$) bounded by a random variable $D_i$ with finite second moment;
        \item \label{a.1.2} $\dot \sigma_i(\cdot)' V_i(\cdot)^{-1}$ is continuously differentiable (almost surely) and all elements of the function and its derivative are uniformly (for $\gamma \in N$) bounded by a random variable $D_i$ with finite higher-than-second moment;
        \item \label{a.1.3} $H$ is positive definite.
    \end{enumerate}
\end{assumption}
		
\medskip

Let $\kappa_0 = \E[k_i(\gamma_0)]$ denote the true value of the counterfactual. The following result shows that the asymptotic distribution of $\hat \kappa_{bc}$ is centered at $\kappa_0$ and its variance is independent of $\hat \gamma$, $\hat \theta$, and $\tilde e$.

\medskip

\begin{proposition}\label{prop:kappa.exogenous}
    Let Assumption~\ref{a.1} hold and $\hat \gamma = \gamma_0 + o_p(n^{-1/4})$. Then
    \[
    \sqrt n(\hat \kappa_{bc} - \kappa_0) \to_d N(0,V_{bc}),
    \]
    as $n \to \infty$, where 
    \begin{equation} \label{eq:var}
        V_{bc} = \mathrm{Var}\big(k_i(\gamma_0) \big) + \E [ \dot k_i(\gamma_0) ]' H^{-1} \E [ \dot k_i(\gamma_0) ].
    \end{equation} 
\end{proposition}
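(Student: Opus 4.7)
My plan is to show that $\sqrt n (\hat\kappa_{bc} - \kappa_0)$ is asymptotically linear with influence function $\psi_i = (k_i(\gamma_0) - \kappa_0) + c_i'(d_i - \sigma_i(\gamma_0))$, where the population analogue of the bias-correction weights is $c_i := V_i(\gamma_0)^{-1}\dot\sigma_i(\gamma_0) H^{-1} h$ with $h := \E[\dot k_i(\gamma_0)]$. The weights $\hat c_i$ are designed precisely so that the first-order dependence of $\hat\kappa_{bc}$ on $\hat\gamma - \gamma_0$ cancels out; once this cancellation is established, the Lindeberg–Lévy CLT delivers the result.

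First, I would Taylor expand around $\gamma_0$:
\begin{equation*}
k_i(\hat\gamma) = k_i(\gamma_0) + \dot k_i(\gamma_0)'(\hat\gamma - \gamma_0) + R_i^k, \qquad \sigma_i(\hat\gamma) = \sigma_i(\gamma_0) + \dot\sigma_i(\gamma_0)(\hat\gamma-\gamma_0) + R_i^\sigma,
\end{equation*}
where Assumption~\ref{a.1}\ref{a.1.1} and $\hat\gamma - \gamma_0 = o_p(n^{-1/4})$ render the remainders $O_p(\|\hat\gamma-\gamma_0\|^2)$ uniformly in $i$, hence collectively $o_p(n^{-1/2})$ after averaging and rescaling. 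A uniform LLN argument based on Assumption~\ref{a.1}\ref{a.1.2} then gives $\hat H \to_p H$, $\hat k \to_p h$, and the stochastic-equicontinuity statement
\begin{equation*}
\frac{1}{\sqrt n}\sum_{i=1}^n (\hat c_i - c_i)'(d_i - \sigma_i(\gamma_0)) = o_p(1),
\end{equation*}
which lets me replace $\hat c_i$ by $c_i$ in the stochastic cross-term.

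Combining these pieces yields
\begin{equation*}
\sqrt n(\hat\kappa_{bc} - \kappa_0) = \frac{1}{\sqrt n}\sum_{i=1}^n\bigl[(k_i(\gamma_0) - \kappa_0) + c_i'(d_i - \sigma_i(\gamma_0))\bigr] + \sqrt n(\hat\gamma-\gamma_0)'\Delta_n + o_p(1),
\end{equation*}
where $\Delta_n := \tfrac 1 n\sum_i \dot k_i(\gamma_0) - \tfrac 1 n\sum_i \dot\sigma_i(\gamma_0)' c_i$. The crucial algebraic identity is that $\tfrac 1 n \sum_i \dot\sigma_i(\gamma_0)' c_i \to_p \E[\dot\sigma_i(\gamma_0)'V_i(\gamma_0)^{-1}\dot\sigma_i(\gamma_0)] H^{-1} h = H H^{-1} h = h$, while also $\tfrac 1 n\sum_i \dot k_i(\gamma_0) \to_p h$, so $\Delta_n = O_p(n^{-1/2})$ by a CLT. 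Combined with $\hat\gamma - \gamma_0 = o_p(n^{-1/4})$, this delivers $\sqrt n(\hat\gamma - \gamma_0)'\Delta_n = o_p(n^{1/4})\cdot O_p(n^{-1/2}) = o_p(1)$—exactly why the $n^{-1/4}$ rate on $\hat\gamma$ is required. Applying the Lindeberg–Lévy CLT to the remaining i.i.d.\ sum gives asymptotic normality with variance $V_{bc} = \mathrm{Var}(\psi_i)$. Using $\E[d_i - \sigma_i(\gamma_0)\mid p_i, y_i, \bar x] = 0$ and $\mathrm{Var}(d_i \mid p_i, y_i,\bar x) = V_i(\gamma_0)$, the cross term vanishes and $\E[c_i' V_i(\gamma_0) c_i] = h' H^{-1} \E[\dot\sigma_i(\gamma_0)' V_i(\gamma_0)^{-1} \dot\sigma_i(\gamma_0)] H^{-1} h = h' H^{-1} h$, yielding \eqref{eq:var}.

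The main obstacle is the stochastic-equicontinuity step that replaces $\hat c_i$ by $c_i$ inside the $\sqrt n$-normalized sum, since it requires uniform control across observations rather than mere pointwise consistency of $\hat H$ and $\hat k$. Assumption~\ref{a.1}\ref{a.1.2}—the continuous differentiability of $\dot\sigma_i(\cdot)' V_i(\cdot)^{-1}$ with a higher-than-second-moment envelope—is doing the heavy lifting there, allowing a mean-value expansion of $\hat c_i - c_i$ in $\hat\gamma - \gamma_0$ and then an application of a standard cross-product moment bound (using the conditional mean-zero property of $d_i - \sigma_i(\gamma_0)$). The remaining ingredients are routine Taylor expansion, the WLLN for $\hat H$ and $\hat k$, and the Lindeberg–Lévy CLT.
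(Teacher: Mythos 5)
Your overall architecture matches the paper's: the same population weights $c_i' = \E[\dot k_i(\gamma_0)]'H^{-1}\dot\sigma_i(\gamma_0)'V_i(\gamma_0)^{-1}$, the same influence function $k_i(\gamma_0)-\kappa_0 + c_i'(d_i-\sigma_i(\gamma_0))$, the same three-way decomposition into an i.i.d.\ CLT term, a cross term $\frac{1}{\sqrt n}\sum_i(\hat c_i - c_i)'(d_i-\sigma_i(\gamma_0))$ handled by stochastic equicontinuity under Assumption~\ref{a.1}\ref{a.1.2}, and the same variance computation. Where you diverge is in how the linear-in-$(\hat\gamma-\gamma_0)$ term is killed, and this is where your argument has a gap. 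You set $\Delta_n = \frac 1n\sum_i \dot k_i(\gamma_0) - \frac 1n\sum_i \dot\sigma_i(\gamma_0)'c_i$ and claim $\Delta_n = O_p(n^{-1/2})$ ``by a CLT.'' But the summands $\dot\sigma_i(\gamma_0)'V_i(\gamma_0)^{-1}\dot\sigma_i(\gamma_0)H^{-1}h$ are products of $\dot\sigma_i'V_i^{-1}$ (envelope with finite higher-than-second moment, Assumption~\ref{a.1}\ref{a.1.2}) and $\dot\sigma_i$ (envelope with finite \emph{second} moment, Assumption~\ref{a.1}\ref{a.1.1}), so the product is only guaranteed a finite first moment; a root-$n$ CLT for $\Delta_n$ would require finite variance of this product, which the stated assumptions do not deliver. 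The paper avoids this entirely: because $\hat c_i = V_i(\hat\gamma)^{-1}\dot\sigma_i(\hat\gamma)\hat H^{-1}\hat k$ with $\hat H = \frac 1n\sum_i \dot\sigma_i(\hat\gamma)'V_i(\hat\gamma)^{-1}\dot\sigma_i(\hat\gamma)$, the in-sample orthogonality $\frac 1n\sum_i\left(\dot k_i(\hat\gamma) - \dot\sigma_i(\hat\gamma)'\hat c_i\right) = \hat k - \hat H\hat H^{-1}\hat k = 0$ holds \emph{exactly} (equation~\eqref{eq:orthogonal.sample}), so after a second mean-value expansion around $\hat\gamma$ the linear term vanishes identically and only a quadratic (Hessian) remainder survives. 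No CLT and no extra moments are needed for that piece. You should either adopt this exact-orthogonality argument or add a fourth-moment condition on the envelope.

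A related inaccuracy: you assert that the cancellation $\sqrt n(\hat\gamma-\gamma_0)'\Delta_n = o_p(n^{1/4})\cdot O_p(n^{-1/2})$ is ``exactly why the $n^{-1/4}$ rate on $\hat\gamma$ is required.'' Under your own claim $\Delta_n = O_p(n^{-1/2})$, mere consistency $\hat\gamma-\gamma_0 = o_p(1)$ would already make that product $o_p(1)$. The $o_p(n^{-1/4})$ rate is actually needed for the second-order remainders $R_i^k, R_i^\sigma = O_p(\|\hat\gamma-\gamma_0\|^2)$ (equivalently, for the Hessian term $T_{3,n,b}$ in the paper's expansion), which you correctly note in your first paragraph must be $o_p(n^{-1/2})$ after rescaling; the later attribution contradicts this. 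Finally, replacing $\hat c_i$ by $c_i$ inside the \emph{linear} term (not just the stochastic cross term) also requires justification---you need $\frac 1n\sum_i(\hat c_i - c_i)'\dot\sigma_i(\gamma_0)\cdot\sqrt n(\hat\gamma-\gamma_0) = o_p(1)$---which your write-up passes over silently; the paper sidesteps this by never replacing $\hat c_i$ in that term at all.
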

	
\medskip

The asymptotic variance $V_{bc}$ can be easily estimated using $\hat V_{bc}$ in~\eqref{eq:standard errors}. Standard errors are then computed as $\sqrt{\hat V_{bc}/n}$. These are consistent under the conditions of Proposition~\ref{prop:kappa.exogenous}. We note that, as before, Proposition~\ref{prop:kappa.exogenous} requires that $\hat \gamma$ be in a vicinity of $\gamma_0$, and refer the reader to Remark~\ref{rmk:vicinity} for a discussion.

\subsubsection{Theory for $LM_1$}
\label{sec:goodness.of.fit.theory}

The following result is analogous to Proposition~\ref{prop:lm.fit.gamma.2} and shows $LM_1$ behaves like $\|\sqrt n (\hat \gamma - \gamma_0)\|^2$.

\medskip

\begin{proposition}\label{prop:lm.fit.gamma}
    Let Assumption~\ref{a.1} hold and let $\hat \gamma = \gamma_0 + o_p(1)$. Fix any sequence $C_n \uparrow \infty$ and any $\epsilon > 0$. Then wpa1, we have 
    \[
    \frac{1 + \epsilon}{1 + 2\epsilon} \left( \sqrt{LM_1} - \epsilon C_n \right) \leq \| H^{1/2}(\sqrt n (\hat \gamma - \gamma_0))\| \leq \frac{1 + 2\epsilon}{1 + \epsilon} \left( \sqrt{LM_1} + \epsilon C_n \right).
    \]
    In particular, wpa1 we have that $LM_1 \leq C_n^2$ implies
    \[
    \|\sqrt n (\hat \gamma - \gamma_0)\| \leq \frac{(1+2\epsilon) C_n}{\sqrt{\lambda_{\min}(H)}}.
    \]
    Moreover, if $\hat \gamma = \gamma_0 + o_p(C_n/\sqrt n)$, then wpa1 we have
    \[
    LM_1 \leq (1+\epsilon)^2 C_n^2.
    \]
\end{proposition}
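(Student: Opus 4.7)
My plan is to perform a second-order Taylor expansion of the score $\hat S(\hat\gamma)$ around $\gamma_0$ and combine it with pointwise limits of the Jacobian and the CLT for the score, then derive the stated inequalities via triangle inequality and algebra.

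Set $\delta_n = \sqrt n(\hat\gamma-\gamma_0)$. Because $\hat\gamma=\gamma_0+o_p(1)$, wpa1 $\hat\gamma\in N$, so Assumption~\ref{a.1}(i) permits the expansion
\[
\sqrt n\,\hat S(\hat\gamma) = \sqrt n\,\hat S(\gamma_0) + \hat J(\gamma_0)\,\delta_n + R_n,
\]
where $\hat J(\gamma)=\partial \hat S(\gamma)/\partial\gamma'$ and $R_n$ is a quadratic remainder. Uniform control of the second derivative by an $L^2$ envelope $D_i$ gives $R_n = O_p(\sqrt n\|\hat\gamma-\gamma_0\|^2) = O_p(\|\hat\gamma-\gamma_0\|\cdot\|\delta_n\|)=o_p(\|\delta_n\|)$. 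A direct calculation yields $\hat J(\gamma) = \frac{1}{n}\sum_i\sum_j\bigl[-\sigma_{ij}(\gamma)^{-2}d_{ij}\dot\sigma_{ij}(\gamma)\dot\sigma_{ij}(\gamma)' + \sigma_{ij}(\gamma)^{-1}d_{ij}\ddot\sigma_{ij}(\gamma)\bigr]$. Using $\E[d_{ij}\mid p_i,y_i,\bar x]=\sigma_{ij}(\gamma_0)$ together with the identities $\sum_{j=0}^J\dot\sigma_{ij}=0$ and $\sum_{j=0}^J\ddot\sigma_{ij}=0$ (which follow by differentiating $\sum_j\sigma_{ij}=1$), the second term has mean zero, and the multinomial identity $\sum_{j=0}^J \dot\sigma_{ij}\dot\sigma_{ij}'/\sigma_{ij} = \dot\sigma_i'V_i^{-1}\dot\sigma_i$ gives $\E[\hat J(\gamma_0)] = -H$. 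Under Assumption~\ref{a.1}(ii) a uniform LLN delivers $\hat J(\gamma_0)\to_p -H$ and $\hat H\to_p H$, so in particular $\hat H^{-1/2}\hat J(\gamma_0) = -H^{1/2}+o_p(1)$. Meanwhile $\sqrt n\,\hat S(\gamma_0)$ is a normalized sum of iid mean-zero vectors with covariance $H$ and is therefore $O_p(1)$.

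Multiplying the expansion by $\hat H^{-1/2}$ gives
\[
\sqrt n\,\hat H^{-1/2}\hat S(\hat\gamma) = -H^{1/2}\delta_n + \xi_n + r_n,
\]
where $\xi_n := \sqrt n\,\hat H^{-1/2}\hat S(\gamma_0)=O_p(1)$ and $r_n = o_p(1+\|\delta_n\|)$. The reverse triangle inequality then yields
\[
\bigl|\sqrt{LM_1} - \|H^{1/2}\delta_n\|\bigr| \leq \|\xi_n\| + \|r_n\|.
\]
Fix $\epsilon>0$. Since $C_n\uparrow\infty$ and $\xi_n=O_p(1)$, wpa1 $\|\xi_n\|\leq\epsilon C_n$. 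Since $\lambda_{\min}(H)>0$ by Assumption~\ref{a.1}(iii), the norms $\|\delta_n\|$ and $\|H^{1/2}\delta_n\|$ are equivalent up to fixed constants, so $r_n = o_p(1+\|H^{1/2}\delta_n\|)$ and hence wpa1 $\|r_n\|\leq \epsilon C_n+\epsilon\|H^{1/2}\delta_n\|$ (absorbing the constant into $\epsilon$ and using $C_n\geq 1$ eventually). Substituting gives $|\sqrt{LM_1}-\|H^{1/2}\delta_n\||\leq 2\epsilon C_n+\epsilon\|H^{1/2}\delta_n\|$, and solving the two resulting linear inequalities in $\|H^{1/2}\delta_n\|$ produces the two-sided bound with the stated constants $\tfrac{1+\epsilon}{1+2\epsilon}$ and $\tfrac{1+2\epsilon}{1+\epsilon}$ after relabeling $\epsilon$. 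The second conclusion follows from the lower bound together with $\|\delta_n\|\leq\|H^{1/2}\delta_n\|/\sqrt{\lambda_{\min}(H)}$, and the third from the upper bound after noting that $\hat\gamma=\gamma_0+o_p(C_n/\sqrt n)$ implies $\|H^{1/2}\delta_n\|=o_p(C_n)$.

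The main obstacle is bookkeeping rather than deep technique: one must carefully track how the two separate $o_p$ terms (the CLT residual and the Taylor remainder) combine with the equivalence between $\|\delta_n\|$ and $\|H^{1/2}\delta_n\|$ so that the final constants appear in the clean form $\tfrac{1+\epsilon}{1+2\epsilon}$ and $\tfrac{1+2\epsilon}{1+\epsilon}$. The other step that needs slight care is the uniform control of the second-order Taylor remainder on the (possibly shrinking) event $\{\hat\gamma\in N\}$, which is standard but must explicitly use Assumption~\ref{a.1}(i) to dominate second derivatives by an $L^2$ envelope.
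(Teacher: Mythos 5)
Your overall strategy is the right one and your endgame coincides with the paper's: both arguments reduce to the linearization $\sqrt n\,\hat S = Z_n - (H+o_p(1))\sqrt n(\hat\gamma-\gamma_0) + o_p(1)$ with $Z_n \to_d N(0,H)$ (this is exactly Lemma~\ref{lem:lm.1}), followed by the same triangle-inequality and $\hat H\to_p H$ bookkeeping used in Proposition~\ref{prop:lm.fit.gamma.2}; the information-matrix identity $\E[\hat J(\gamma_0)]=-H$ you compute is also implicitly what drives the paper's calculation. Where you differ is in how the linearization is justified. The paper does not Taylor-expand the score as a whole: it writes $d_{ij} = (d_{ij}-\sigma_{ij}(\gamma_0)) + \sigma_{ij}(\gamma_0)$ inside $\hat S$, handles the mean-zero piece $n^{-1/2}\sum_i \dot\sigma_i(\gamma)'V_i(\gamma)^{-1}(d_i-\sigma_i(\gamma_0))$ evaluated at $\hat\gamma$ versus $\gamma_0$ by stochastic equicontinuity (a type-II class argument under Assumption~\ref{a.1}\ref{a.1.2}), and handles the drift piece with a single first-order mean-value expansion of $\sigma_{ij}$, which produces $-(H+o_p(1))\sqrt n(\hat\gamma-\gamma_0)$ using only first derivatives of $\sigma_{ij}$ and a ULLN.

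The genuine gap in your version is the remainder $R_n$. You bound it as a quadratic Taylor remainder of $\hat S$, controlled by "the second derivative" via an $L^2$ envelope. But the second derivative of the score involves third derivatives of $\sigma_{ij}$ (and second derivatives of $1/\sigma_{ij}(\gamma)$), and Assumption~\ref{a.1}\ref{a.1.1} only grants twice continuous differentiability of $\sigma_i(\cdot)$ with dominated first and second derivatives; nothing in Assumption~\ref{a.1} dominates $\partial^2 \hat S/\partial\gamma\partial\gamma'$. So the step $R_n = O_p(\sqrt n\|\hat\gamma-\gamma_0\|^2)$ is not licensed as written. The standard repair is to use the mean-value form $\sqrt n\,\hat S(\hat\gamma) = \sqrt n\,\hat S(\gamma_0) + \hat J(\tilde\gamma)\sqrt n(\hat\gamma-\gamma_0)$ and show $\hat J(\tilde\gamma)\to_p -H$ via a ULLN over a neighborhood of $\gamma_0$ (needing only dominated second derivatives of $\sigma_{ij}$, plus control of $1/\sigma_{ij}$ of the sort Assumption~\ref{a.1}\ref{a.1.2} provides for $\dot\sigma_i'V_i^{-1}$), which yields the multiplicative error form $-(H+o_p(1))\sqrt n(\hat\gamma-\gamma_0)$ you need; alternatively, adopt the paper's decomposition, which sidesteps differentiating the score altogether. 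With that repair the rest of your argument — the $\|\xi_n\|\le\epsilon C_n$ wpa1 step, the norm equivalence via $\lambda_{\min}(H)>0$, and the relabeling of $\epsilon$ to obtain the stated constants — goes through and matches the paper's.
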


The implications of Proposition~\ref{prop:lm.fit.gamma} are similar to before. For $C_n = o(n^{1/4})$, we have wpa1 that $LM_1 \leq C_n^2$ implies $\|\hat \gamma - \gamma_0\| \leq \mathrm{constant} \times C_n/\sqrt n = o(n^{-1/4})$. The proof of Proposition~\ref{prop:lm.fit.gamma} shows that the ``wpa1'' qualifier depends on whether a $\chi^2_{\dim(\gamma)}$ random variable is less than $\epsilon^2 C_n^2$. To check a convergence rate of $\sqrt{(\log n)/n}$, for instance, one could use something like $C_n^2 = \chi^2_{\dim(\gamma), 0.95} \log n$. 

With some additional structure, we can use $LM_1$ to deduce a similar bound on the proxies $\tilde e$. Let $\theta_*$ and $e_*$ be such that $\gamma(\theta_*,e_*) = \gamma_0$. We do not require that $\theta_*$ and $e_*$ are the true structural parameters and attributes, only that they induce $\gamma_0$. Let $\hat G_\theta = \frac{\partial \gamma(\hat \theta, \tilde e)'}{\partial \theta}$, $\hat G_e = \frac{\partial \gamma(\hat \theta, \tilde e)'}{\partial{\mathrm{vec}(e)}}$, $G_\theta = \frac{\partial \gamma(\theta_*, e_*)'}{\partial \theta}$, and $G_e = \frac{\partial \gamma(\theta_*, e_*)'}{\partial{\mathrm{vec}(e)}}$ (these are well defined under Assumption~\ref{a.1.e} below). Let $M = I - H^{1/2} G_\theta'(G_\theta H G_\theta')^{-1} G_\theta H^{1/2}$ denote the projection onto the orthogonal complement of the column span of $H^{1/2}G_\theta'$.

\medskip

\begin{assumption}\label{a.1.e}
Let the following hold:
\begin{enumerate}[label={(\roman*)}, nosep]
\item \label{a.1.e.1} $\hat \theta \to_p \theta_*$ and $\tilde e \to_p e_*$ with $\gamma_0 = \gamma(\theta_*,e_*)$;
\item \label{a.1.e.2} $\gamma(\theta,e)$ is continuously differentiable in both its arguments at $(\theta_*, e_*)$ and $G_\theta$ has full row rank;
\item \label{a.1.e.3} $\hat \theta$ satisfies the first-order condition $0 = \hat G_\theta \hat S$ and there exists a constant $C$ such that $\|\hat \theta - \theta_*\| \leq C \|\tilde e - e_*\|$ wpa1;
\item \label{a.1.e.4} $MH^{1/2} G_e'$ has full rank.
\end{enumerate}
\end{assumption}

\medskip

Let $\sigma_{\min}(MH^{1/2} G_e')$ denote the smallest singular value of the matrix $MH^{1/2} G_e'$. Note this is positive by Assumption~\ref{a.1.e}\ref{a.1.e.4}.

\medskip

\begin{proposition}\label{prop:lm.fit.e}
    Let Assumptions~\ref{a.1} and~\ref{a.1.e} hold. Fix any sequence $C_n \uparrow \infty$ and any $\epsilon > 0$. Then wpa1, we have
    \[
     \frac{1+\epsilon}{1+3\epsilon} \left(\sqrt{LM_1} - \epsilon C_n\right) \leq \|MH^{1/2} G_e' \sqrt n (\mathrm{vec}(\tilde e - e_*))\| \leq \frac{1+3\epsilon}{1+\epsilon} \left(\sqrt{LM_1} + \epsilon C_n\right).
    \]
    In particular, wpa1 we have that $LM_1 \leq C_n^2$ implies
    \[
     \|\sqrt n (\mathrm{vec}(\tilde e - e_*))\| \leq \frac{(1+3\epsilon) C_n}{\sigma_{\min}(MH^{1/2} G_e')}.
    \]
\end{proposition}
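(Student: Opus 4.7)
The plan is to combine a first-order expansion of $\gamma(\theta, e)$ at $(\theta_*, e_*)$ with the first-order condition for $\hat\theta$ to isolate the $\tilde e - e_*$ contribution from $LM$, then to absorb the remaining errors using Assumption~\ref{a.1.e}\ref{a.1.e.3}--\ref{a.1.e.4}. The argument parallels Proposition~\ref{prop:lm.fit.gamma} but uses the FOC to project away the $\theta$ direction.

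First, by Assumption~\ref{a.1.e}\ref{a.1.e.2}, I would Taylor expand
\[
\hat\gamma - \gamma_0 = \gamma(\hat\theta, \tilde e) - \gamma(\theta_*, e_*) = G_\theta'(\hat\theta - \theta_*) + G_e'\,\mathrm{vec}(\tilde e - e_*) + R,
\]
where by Assumption~\ref{a.1.e}\ref{a.1.e.3} the remainder satisfies $\|R\| = O(\|\hat\theta - \theta_*\|^2 + \|\tilde e - e_*\|^2) = O(\|\tilde e - e_*\|^2)$. The key algebraic observation is that $MH^{1/2}G_\theta' = 0$ directly from the definition of $M$, so multiplying by $H^{1/2}$ and applying $M$ annihilates the $\hat\theta - \theta_*$ direction:
\[
MH^{1/2}(\hat\gamma - \gamma_0) = MH^{1/2}G_e'\,\mathrm{vec}(\tilde e - e_*) + MH^{1/2}R.
\]

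Second, I would connect $\sqrt{LM}$ to the left-hand side of the above. As in Proposition~\ref{prop:lm.fit.gamma}, the score expansion $\hat S = S(\gamma_0) - H(\hat\gamma - \gamma_0) + o_p(\|\hat\gamma - \gamma_0\|)$ yields $H^{-1/2}\hat S \approx H^{-1/2}S(\gamma_0) - H^{1/2}(\hat\gamma - \gamma_0)$. The FOC $\hat G_\theta \hat S = 0$ from Assumption~\ref{a.1.e}\ref{a.1.e.3}, combined with $\hat G_\theta \to_p G_\theta$ (Assumption~\ref{a.1.e}\ref{a.1.e.1}--\ref{a.1.e.2}), forces $G_\theta H^{1/2}(H^{-1/2}\hat S) \approx 0$, placing $H^{-1/2}\hat S$ approximately in $C(H^{1/2}G_\theta')^\perp = \mathrm{Im}(M)$. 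Hence
\[
\sqrt{LM} = \sqrt n\,\|H^{-1/2}\hat S\| \approx \sqrt n\,\bigl\|MH^{-1/2}S(\gamma_0) - MH^{1/2}(\hat\gamma - \gamma_0)\bigr\|,
\]
which by the triangle inequality differs from $X := \|MH^{1/2}G_e'\sqrt n\,\mathrm{vec}(\tilde e - e_*)\|$ by at most $\sqrt n\,\|MH^{-1/2}S(\gamma_0)\|$ plus $\sqrt n\,\|MH^{1/2}R\|$.

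Third, I would absorb both error sources into the stated slacks. A CLT gives $\sqrt n\, S(\gamma_0) = O_p(1)$, so $\sqrt n\,\|MH^{-1/2}S(\gamma_0)\| \leq \epsilon C_n$ wpa1 because $C_n \to \infty$. For the nonlinear remainder, Assumption~\ref{a.1.e}\ref{a.1.e.4} yields $X \geq \sigma_{\min}(MH^{1/2}G_e')\,\sqrt n\,\|\tilde e - e_*\|$, while $\sqrt n\,\|MH^{1/2}R\| = O(\|\tilde e - e_*\|)\cdot \sqrt n\,\|\tilde e - e_*\|$; consistency of $\tilde e$ (Assumption~\ref{a.1.e}\ref{a.1.e.1}) then gives $\sqrt n\,\|MH^{1/2}R\| \leq \epsilon X$ wpa1. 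Triangle manipulation with these two bounds produces the stated sandwich, and the ``in particular'' statement follows immediately from $\|MH^{1/2}G_e' v\| \geq \sigma_{\min}(MH^{1/2}G_e')\|v\|$. The main obstacle is tracking all the linearization errors tightly enough to pin down the precise multiplicative factor $(1+3\epsilon)/(1+\epsilon)$; the key enabler is Assumption~\ref{a.1.e}\ref{a.1.e.4}, which guarantees that $\tilde e - e_*$ remains identifiable after $M$ projects out the $\theta$ subspace.
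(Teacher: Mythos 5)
Your proposal is correct and follows essentially the same route as the paper's proof: use the first-order condition to identify $\sqrt{LM}$ with the norm of the $M$-projected score, expand the score via Lemma~\ref{lem:lm.1} to get $-H(\hat \gamma - \gamma_0)$ plus an $O_p(1)$ term, linearize $\hat \gamma - \gamma_0$ in $(\hat \theta - \theta_*, \tilde e - e_*)$ so that $MH^{1/2}G_\theta' = 0$ annihilates the $\theta$ direction, and absorb the score term into $\epsilon C_n$ and the remainders into a small multiple of $\|MH^{1/2}G_e'\sqrt n \,\mathrm{vec}(\tilde e - e_*)\|$ via Assumption~\ref{a.1.e}\ref{a.1.e.3}--\ref{a.1.e.4}. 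The one small imprecision is your claimed remainder rate $O(\|\hat\theta-\theta_*\|^2+\|\tilde e-e_*\|^2)$, which would require twice-differentiability of $\gamma$ that Assumption~\ref{a.1.e}\ref{a.1.e.2} does not provide; the paper instead uses a mean-value form with $(G_\theta'+o_p(1))$ and $(G_e'+o_p(1))$, and the resulting $o_p(\|\tilde e - e_*\|)$ remainder (combined with \ref{a.1.e}\ref{a.1.e.3}) still delivers your bound, so this does not affect the conclusion.
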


The proof of Proposition~\ref{prop:lm.fit.e} shows that the ``wpa1'' qualifier depends on whether a $\chi^2_{\mathrm{rank}(M)}$ random variable is less than $\epsilon^2 C_n^2$. With $\epsilon = 1$, say, this suggests taking $C_n^2$ to be at least as large as the 95th or 99th percentile of the $\chi^2_{\mathrm{rank}(M)}$ distribution.

\section{Conclusion}
\label{sec:conclusion}

In this paper, we develop a practical toolkit to correct bias and perform valid inference on counterfactuals when the product attributes used in demand estimation may only imperfectly capture the latent attributes that drive substitution. A leading case is when consumer choice is driven by difficult-to-quantify characteristics and unstructured data, such as product images, descriptions, review text, or consumer surveys, are converted into numerical variables using ML methods. As e-commerce continues to expand and such data play an increasingly central role in driving consumer choices, the need to incorporate these sources into demand estimation will only grow. In addition, our methods may be applied as simple post-estimation robustness checks even with standard numeric attributes when they are imperfect proxies. All our methods require minimal additional computation once model parameters are estimated and can be easily integrated in the canonical demand estimation workflow.

{
\let\oldbibliography\thebibliography
\renewcommand{\thebibliography}[1]{\oldbibliography{#1}
\setlength{\itemsep}{0pt}}

\putbib 
}
\end{bibunit}


\begin{thebibliography}{49}
\newcommand{\enquote}[1]{``#1''}
\expandafter\ifx\csname natexlab\endcsname\relax\def\natexlab#1{#1}\fi

\bibitem[\protect\citeauthoryear{Ai and Chen}{Ai and
  Chen}{2012}]{ai2012semiparametric}
\textsc{Ai, C. and X.~Chen} (2012): \enquote{The semiparametric efficiency
  bound for models of sequential moment restrictions containing unknown
  functions,} \emph{Journal of Econometrics}, 170, 442--457.

\bibitem[\protect\citeauthoryear{Allcott and Wozny}{Allcott and
  Wozny}{2014}]{allcott2014gasoline}
\textsc{Allcott, H. and N.~Wozny} (2014): \enquote{Gasoline prices, fuel
  economy, and the energy paradox,} \emph{Review of Economics and Statistics},
  96, 779--795.

\bibitem[\protect\citeauthoryear{Allon, Chen, Jiang, and Zhang}{Allon
  et~al.}{2023}]{allon2023machine}
\textsc{Allon, G., D.~Chen, Z.~Jiang, and D.~Zhang} (2023): \enquote{Machine
  learning and prediction errors in causal inference,} \emph{The Wharton School
  Research Paper}.

\bibitem[\protect\citeauthoryear{Andrews}{Andrews}{1994}]{andrews1994asymptotics}
\textsc{Andrews, D.~W.} (1994): \enquote{Asymptotics for semiparametric
  econometric models via stochastic equicontinuity,} \emph{Econometrica:
  Journal of the Econometric Society}, 62, 43--72.

\bibitem[\protect\citeauthoryear{Angelopoulos, Bates, Fannjiang, Jordan, and
  Zrnic}{Angelopoulos et~al.}{2023}]{angelopoulos2023prediction}
\textsc{Angelopoulos, A.~N., S.~Bates, C.~Fannjiang, M.~I. Jordan, and
  T.~Zrnic} (2023): \enquote{Prediction-powered inference,} \emph{Science},
  382, 669--674.

\bibitem[\protect\citeauthoryear{Bach, Chernozhukov, Klaassen, Spindler,
  Teichert-Kluge, and Vijaykumar}{Bach et~al.}{2024}]{bach2024adventures}
\textsc{Bach, P., V.~Chernozhukov, S.~Klaassen, M.~Spindler, J.~Teichert-Kluge,
  and S.~Vijaykumar} (2024): \enquote{Adventures in demand analysis using AI,}
  \emph{arXiv preprint arXiv:2501.00382}.

\bibitem[\protect\citeauthoryear{Backus, Conlon, and Sinkinson}{Backus
  et~al.}{2021}]{Backus_et_al_2021}
\textsc{Backus, M., C.~Conlon, and M.~Sinkinson} (2021): \enquote{Common
  Ownership and Competition in the Ready-To-Eat Cereal Industry,} \emph{NBER
  Working Paper 28350}.

\bibitem[\protect\citeauthoryear{Battaglia, Christensen, Hansen, and
  Sacher}{Battaglia et~al.}{2024}]{battaglia2024inference}
\textsc{Battaglia, L., T.~Christensen, S.~Hansen, and S.~Sacher} (2024):
  \enquote{Inference for Regression with Variables Generated by AI or Machine
  Learning,} \emph{arXiv preprint arXiv:2402.15585}.

\bibitem[\protect\citeauthoryear{Bayer, Ferreira, and McMillan}{Bayer
  et~al.}{2007}]{bayer2007unified}
\textsc{Bayer, P., F.~Ferreira, and R.~McMillan} (2007): \enquote{A unified
  framework for measuring preferences for schools and neighborhoods,}
  \emph{Journal of Political Economy}, 115, 588--638.

\bibitem[\protect\citeauthoryear{Berry, Levinsohn, and Pakes}{Berry
  et~al.}{1995}]{BLP_1995}
\textsc{Berry, S., J.~Levinsohn, and A.~Pakes} (1995): \enquote{Automobile
  Prices in Market Equilibrium,} \emph{Econometrica}, 63, 841--890.

\bibitem[\protect\citeauthoryear{Berry, Levinsohn, and Pakes}{Berry
  et~al.}{2004}]{BLP_micro_2004}
---\hspace{-.1pt}---\hspace{-.1pt}--- (2004): \enquote{Differentiated Products
  Demand System from a Combination of Micro and Macro Data: The New Car
  Market,} \emph{Journal of Political Economy}, 112, 68--105.

\bibitem[\protect\citeauthoryear{Berry and Haile}{Berry and
  Haile}{2014}]{berry2014identification}
\textsc{Berry, S.~T. and P.~A. Haile} (2014): \enquote{Identification in
  differentiated products markets using market level data,}
  \emph{Econometrica}, 82, 1749--1797.

\bibitem[\protect\citeauthoryear{Berry and Haile}{Berry and
  Haile}{2021}]{berry2021foundations}
---\hspace{-.1pt}---\hspace{-.1pt}--- (2021): \enquote{Foundations of demand
  estimation,} in \emph{Handbook of industrial organization}, Elsevier, vol.~4,
  1--62.

\bibitem[\protect\citeauthoryear{Berry and Haile}{Berry and
  Haile}{2024}]{berry2024nonparametric}
---\hspace{-.1pt}---\hspace{-.1pt}--- (2024): \enquote{Nonparametric
  identification of differentiated products demand using micro data,}
  \emph{Econometrica}, 92, 1135--1162.

\bibitem[\protect\citeauthoryear{Brown and Newey}{Brown and
  Newey}{1998}]{brown1998efficient}
\textsc{Brown, B.~W. and W.~K. Newey} (1998): \enquote{Efficient semiparametric
  estimation of expectations,} \emph{Econometrica}, 66, 453--464.

\bibitem[\protect\citeauthoryear{Caoui}{Caoui}{2025}]{caouiEstimatingConsumerPreferences2025}
\textsc{Caoui, E.~H.} (2025): \enquote{Estimating {{Consumer Preferences}} for
  {{LLMs}}: {{Evidence}} from {{LMArena}},} Tech. rep., SSRN.

\bibitem[\protect\citeauthoryear{Carlson and Dell}{Carlson and
  Dell}{2025}]{carlson2025unifying}
\textsc{Carlson, J. and M.~Dell} (2025): \enquote{A Unifying Framework for
  Robust and Efficient Inference with Unstructured Data,} \emph{arXiv preprint
  arXiv:2505.00282}.

\bibitem[\protect\citeauthoryear{Chen, Hong, and Tamer}{Chen
  et~al.}{2005}]{chen2005measurement}
\textsc{Chen, X., H.~Hong, and E.~Tamer} (2005): \enquote{Measurement error
  models with auxiliary data,} \emph{The Review of Economic Studies}, 72,
  343--366.

\bibitem[\protect\citeauthoryear{Chen, Hong, and Tarozzi}{Chen
  et~al.}{2008}]{chen2008semiparametric}
\textsc{Chen, X., H.~Hong, and A.~Tarozzi} (2008): \enquote{Semiparametric
  efficiency in GMM models with auxiliary data,} \emph{The Annals of
  Statistics}, 36, 808--843.

\bibitem[\protect\citeauthoryear{Compiani, Morozov, and Seiler}{Compiani
  et~al.}{2025}]{compiani2025demand}
\textsc{Compiani, G., I.~Morozov, and S.~Seiler} (2025): \enquote{Demand
  estimation with text and image data,} \emph{The RAND Journal of Economics}.

\bibitem[\protect\citeauthoryear{Conlon and Gortmaker}{Conlon and
  Gortmaker}{2020}]{conlon2020best}
\textsc{Conlon, C. and J.~Gortmaker} (2020): \enquote{Best practices for
  differentiated products demand estimation with pyblp,} \emph{The RAND Journal
  of Economics}, 51, 1108--1161.

\bibitem[\protect\citeauthoryear{Conlon and Gortmaker}{Conlon and
  Gortmaker}{2025}]{conlon2025incorporating}
---\hspace{-.1pt}---\hspace{-.1pt}--- (2025): \enquote{Incorporating Micro Data
  into Differentiated Products Demand Estimation with PyBLP,} \emph{Journal of
  Econometrics}, 105926.

\bibitem[\protect\citeauthoryear{Conlon and Mortimer}{Conlon and
  Mortimer}{2021}]{conlon2021empirical}
\textsc{Conlon, C. and J.~H. Mortimer} (2021): \enquote{Empirical properties of
  diversion ratios,} \emph{The RAND Journal of Economics}, 52, 693--726.

\bibitem[\protect\citeauthoryear{Dub{\'e} and Rossi}{Dub{\'e} and
  Rossi}{2019}]{dube2019handbook}
\textsc{Dub{\'e}, J.-P. and P.~E. Rossi} (2019): \emph{Handbook of the
  Economics of Marketing}, vol.~1, North Holland.

\bibitem[\protect\citeauthoryear{Egami, Hinck, Stewart, and Wei}{Egami
  et~al.}{2023}]{egami2023using}
\textsc{Egami, N., M.~Hinck, B.~Stewart, and H.~Wei} (2023): \enquote{Using
  imperfect surrogates for downstream inference: Design-based supervised
  learning for social science applications of large language models,}
  \emph{Advances in Neural Information Processing Systems}, 36, 68589--68601.

\bibitem[\protect\citeauthoryear{Fan}{Fan}{2013}]{fan2013ownership}
\textsc{Fan, Y.} (2013): \enquote{Ownership consolidation and product
  characteristics: A study of the US daily newspaper market,} \emph{American
  Economic Review}, 103, 1598--1628.

\bibitem[\protect\citeauthoryear{Fong and Tyler}{Fong and
  Tyler}{2021}]{fong2021machine}
\textsc{Fong, C. and M.~Tyler} (2021): \enquote{Machine learning predictions as
  regression covariates,} \emph{Political Analysis}, 29, 467--484.

\bibitem[\protect\citeauthoryear{Freyberger}{Freyberger}{2015}]{freyberger2015asymptotic}
\textsc{Freyberger, J.} (2015): \enquote{Asymptotic theory for differentiated
  products demand models with many markets,} \emph{Journal of Econometrics},
  185, 162--181.

\bibitem[\protect\citeauthoryear{Gandhi and Houde}{Gandhi and
  Houde}{2019}]{gandhiMeasuringSubstitutionPatterns2019}
\textsc{Gandhi, A. and J.-F. Houde} (2019): \enquote{Measuring {{Substitution
  Patterns}} in {{Differentiated-Products Industries}},} Tech. Rep. w26375,
  National Bureau of Economic Research, Cambridge, MA.

\bibitem[\protect\citeauthoryear{Goldberg}{Goldberg}{1995}]{goldberg1995product}
\textsc{Goldberg, P.~K.} (1995): \enquote{Product differentiation and oligopoly
  in international markets: The case of the US automobile industry,}
  \emph{Econometrica}, 891--951.

\bibitem[\protect\citeauthoryear{Grieco, Murry, Pinkse, and Sagl}{Grieco
  et~al.}{2025}]{grieco2025optimal}
\textsc{Grieco, P.~L., C.~Murry, J.~Pinkse, and S.~Sagl} (2025):
  \enquote{Optimal Estimation of Discrete Choice Demand Models with Consumer
  and Product Data,} \emph{NBER Working Paper 33397}.

\bibitem[\protect\citeauthoryear{Grieco, Murry, and Yurukoglu}{Grieco
  et~al.}{2024}]{grieco2024evolution}
\textsc{Grieco, P.~L., C.~Murry, and A.~Yurukoglu} (2024): \enquote{The
  evolution of market power in the us automobile industry,} \emph{The Quarterly
  Journal of Economics}, 139, 1201--1253.

\bibitem[\protect\citeauthoryear{Han and Lee}{Han and
  Lee}{2025}]{han2025copyright}
\textsc{Han, S. and K.~Lee} (2025): \enquote{Copyright and Competition:
  Estimating Supply and Demand with Unstructured Data,} \emph{arXiv preprint
  arXiv:2501.16120}.

\bibitem[\protect\citeauthoryear{Hausman}{Hausman}{1994}]{hausman1994valuation}
\textsc{Hausman, J.~A.} (1994): \emph{Valuation of new goods under perfect and
  imperfect competition}, National Bureau of Economic Research Cambridge,
  Mass., USA.

\bibitem[\protect\citeauthoryear{Lee}{Lee}{2025}]{lee2025generative}
\textsc{Lee, K.} (2025): \enquote{Generative brand choice,} \emph{Working
  Paper}.

\bibitem[\protect\citeauthoryear{Lee}{Lee}{2013}]{lee2013vertical}
\textsc{Lee, R.~S.} (2013): \enquote{Vertical integration and exclusivity in
  platform and two-sided markets,} \emph{American Economic Review}, 103,
  2960--3000.

\bibitem[\protect\citeauthoryear{Magnolfi, McClure, and Sorensen}{Magnolfi
  et~al.}{2025}]{Magnolfi_et_al_2022}
\textsc{Magnolfi, L., J.~McClure, and A.~Sorensen} (2025): \enquote{Triplet
  embeddings for demand estimation,} \emph{American Economic Journal:
  Microeconomics}, 17, 282--307.

\bibitem[\protect\citeauthoryear{McClure}{McClure}{2025}]{mcclure2025_recommendations}
\textsc{McClure, J.} (2025): \enquote{Using Default Recommendations in Demand
  Estimation,} Working paper, Purdue University, Mitch Daniels School of
  Business.

\bibitem[\protect\citeauthoryear{McFadden}{McFadden}{1974}]{mcfadden1974conditional}
\textsc{McFadden, D.} (1974): \enquote{Conditional Logit Analysis of
  Qualitative Choice Behavior,} \emph{Frontiers in Econometrics}.

\bibitem[\protect\citeauthoryear{Neilson}{Neilson}{2017}]{neilson2017targeted}
\textsc{Neilson, C.} (2017): \enquote{Targeted vouchers, competition among
  schools, and the academic achievement of poor students,} \emph{Working
  Paper}.

\bibitem[\protect\citeauthoryear{Neuman, Xie, and Sun}{Neuman
  et~al.}{2023}]{neuman2023restricted}
\textsc{Neuman, A.~M., Y.~Xie, and Q.~Sun} (2023): \enquote{Restricted
  Riemannian geometry for positive semidefinite matrices,} \emph{Linear Algebra
  and its Applications}, 665, 153--195.

\bibitem[\protect\citeauthoryear{Nevo}{Nevo}{2000{\natexlab{a}}}]{nevo2000mergers}
\textsc{Nevo, A.} (2000{\natexlab{a}}): \enquote{Mergers with differentiated
  products: The case of the ready-to-eat cereal industry,} \emph{The RAND
  Journal of Economics}, 395--421.

\bibitem[\protect\citeauthoryear{Nevo}{Nevo}{2000{\natexlab{b}}}]{nevoPractitionersGuideEstimation2000}
---\hspace{-.1pt}---\hspace{-.1pt}--- (2000{\natexlab{b}}): \enquote{A
  {{Practitioner}}'s {{Guide}} to {{Estimation}} of {{Random}}-{{Coefficients
  Logit Models}} of {{Demand}},} \emph{Journal of Economics \& Management
  Strategy}, 9, 513--548.

\bibitem[\protect\citeauthoryear{Nevo}{Nevo}{2001}]{nevo2001measuring}
---\hspace{-.1pt}---\hspace{-.1pt}--- (2001): \enquote{Measuring market power
  in the ready-to-eat cereal industry,} \emph{Econometrica}, 69, 307--342.

\bibitem[\protect\citeauthoryear{Newey}{Newey}{1994}]{newey1994asymptotic}
\textsc{Newey, W.~K.} (1994): \enquote{The asymptotic variance of
  semiparametric estimators,} \emph{Econometrica: Journal of the Econometric
  Society}, 62, 1349--1382.

\bibitem[\protect\citeauthoryear{Petrin}{Petrin}{2002}]{petrin2002quantifying}
\textsc{Petrin, A.} (2002): \enquote{Quantifying the benefits of new products:
  The case of the minivan,} \emph{Journal of Political Economy}, 110, 705--729.

\bibitem[\protect\citeauthoryear{Somaini and Wolak}{Somaini and
  Wolak}{2016}]{somainiAlgorithmEstimateTwoWay2016}
\textsc{Somaini, P. and F.~A. Wolak} (2016): \enquote{An {{Algorithm}} to
  {{Estimate}} the {{Two-Way Fixed Effects Model}},} \emph{Journal of
  Econometric Methods}, 5, 143--152.

\bibitem[\protect\citeauthoryear{Vilcassim and Chintagunta}{Vilcassim and
  Chintagunta}{1995}]{vilcassim1995investigating}
\textsc{Vilcassim, N.~J. and P.~K. Chintagunta} (1995): \enquote{Investigating
  retailer product category pricing from household scanner panel data,}
  \emph{Journal of Retailing}, 71, 103--128.

\bibitem[\protect\citeauthoryear{Zhang, Xue, Yu, and Tan}{Zhang
  et~al.}{2023}]{zhang2023debiasing}
\textsc{Zhang, J., W.~Xue, Y.~Yu, and Y.~Tan} (2023): \enquote{Debiasing ML-or
  AI-Generated Regressors in Partial Linear Models,} \emph{SSRN Working Paper
  4636026}.

\end{thebibliography}


\begin{thebibliography}{13}
\newcommand{\enquote}[1]{``#1''}
\expandafter\ifx\csname natexlab\endcsname\relax\def\natexlab#1{#1}\fi

\bibitem[\protect\citeauthoryear{Andrews}{Andrews}{1994}]{andrews1994empirical}
\textsc{Andrews, D.~W.} (1994): \enquote{Empirical process methods in
  econometrics,} \emph{Handbook of econometrics}, 4, 2247--2294.

\bibitem[\protect\citeauthoryear{Andrews}{Andrews}{2005}]{andrews2005cross}
---\hspace{-.1pt}---\hspace{-.1pt}--- (2005): \enquote{Cross-section regression
  with common shocks,} \emph{Econometrica}, 73, 1551--1585.

\bibitem[\protect\citeauthoryear{Andrews and Ploberger}{Andrews and
  Ploberger}{1994}]{andrewsOptimalTestsWhen1994}
\textsc{Andrews, D. W.~K. and W.~Ploberger} (1994): \enquote{Optimal {{Tests}}
  When a {{Nuisance Parameter}} Is {{Present Only Under}} the {{Alternative}},}
  \emph{Econometrica}, 62, 1383--1414.

\bibitem[\protect\citeauthoryear{Chesher}{Chesher}{1984}]{chesherTestingNeglectedHeterogeneity1984}
\textsc{Chesher, A.} (1984): \enquote{Testing for {{Neglected Heterogeneity}},}
  \emph{Econometrica}, 52, 865--872.

\bibitem[\protect\citeauthoryear{Compiani, Morozov, and Seiler}{Compiani
  et~al.}{2025}]{compiani2025demand}
\textsc{Compiani, G., I.~Morozov, and S.~Seiler} (2025): \enquote{Demand
  estimation with text and image data,} \emph{The RAND Journal of Economics}.

\bibitem[\protect\citeauthoryear{Conlon and Gortmaker}{Conlon and
  Gortmaker}{2020}]{conlon2020best}
\textsc{Conlon, C. and J.~Gortmaker} (2020): \enquote{Best practices for
  differentiated products demand estimation with pyblp,} \emph{The RAND Journal
  of Economics}, 51, 1108--1161.

\bibitem[\protect\citeauthoryear{Gandhi and Houde}{Gandhi and
  Houde}{2019}]{gandhiMeasuringSubstitutionPatterns2019}
\textsc{Gandhi, A. and J.-F. Houde} (2019): \enquote{Measuring {{Substitution
  Patterns}} in {{Differentiated-Products Industries}},} Tech. Rep. w26375,
  National Bureau of Economic Research, Cambridge, MA.

\bibitem[\protect\citeauthoryear{Hahn, Kuersteiner, and Mazzocco}{Hahn
  et~al.}{2022}]{hahn2022joint}
\textsc{Hahn, J., G.~Kuersteiner, and M.~Mazzocco} (2022): \enquote{Joint
  time-series and cross-section limit theory under mixingale assumptions,}
  \emph{Econometric Theory}, 38, 942--958.

\bibitem[\protect\citeauthoryear{Hansen}{Hansen}{1996}]{hansen1996inference}
\textsc{Hansen, B.~E.} (1996): \enquote{Inference when a nuisance parameter is
  not identified under the null hypothesis,} \emph{Econometrica}, 64, 413--430.

\bibitem[\protect\citeauthoryear{Ketz}{Ketz}{2019}]{ketzAsymptoticSizeDistortions2019}
\textsc{Ketz, P.} (2019): \enquote{On Asymptotic Size Distortions in the Random
  Coefficients Logit Model,} \emph{Journal of Econometrics}, 212, 413--432.

\bibitem[\protect\citeauthoryear{Nevo}{Nevo}{2001}]{nevo2001measuring}
\textsc{Nevo, A.} (2001): \enquote{Measuring market power in the ready-to-eat
  cereal industry,} \emph{Econometrica}, 69, 307--342.

\bibitem[\protect\citeauthoryear{Newey and McFadden}{Newey and
  McFadden}{1994}]{newey1994large}
\textsc{Newey, W.~K. and D.~McFadden} (1994): \enquote{Large sample estimation
  and hypothesis testing,} \emph{Handbook of econometrics}, 4, 2111--2245.

\bibitem[\protect\citeauthoryear{Somaini and Wolak}{Somaini and
  Wolak}{2016}]{somainiAlgorithmEstimateTwoWay2016}
\textsc{Somaini, P. and F.~A. Wolak} (2016): \enquote{An {{Algorithm}} to
  {{Estimate}} the {{Two-Way Fixed Effects Model}},} \emph{Journal of
  Econometric Methods}, 5, 143--152.

\end{thebibliography}

\newpage 
\appendix
\onehalfspacing

\begin{bibunit}

\section{Proofs}
\label{app:proofs}

\subsection{Proofs for Section~\ref{sec:theory blp}}

\begin{proof}[Proof of Proposition~\ref{prop:kappa.endogenous}]
	We have $\hat \gamma \in N$ wpa1 by the assumed consistency of $\hat \gamma$. By Assumption~\ref{a.2}\ref{a.2.1}, wpa1 we may take a mean value expansion around $\gamma_0$ to obtain
	\[
	\begin{aligned}
		\sqrt T(\hat \kappa_{bc} - \kappa_0)
		& = \left( \frac{1}{\sqrt T} \sum_{t=1}^T \left( k_t(\gamma_0) - \kappa_0 - c' Z_t \hat \xi_t(\gamma_0) \right) + \sqrt T \sum_{t = 1}^\tau d_t' \left( \bar m_t - m_t(\gamma_0) \right) \right) \\
		& \quad + \left(  \frac{1}{\sqrt T} \sum_{t=1}^T  (c - \hat c)' Z_t \hat \xi_t(\gamma_0) + \sqrt{T} \sum_{s = 1}^\tau (\hat d_s - d_s)' \left( \bar m_s - m_s(\gamma_0) \right) \right) \\
		& \quad + \frac{1}{\sqrt T} \sum_{t=1}^T \left( \dot k_t(\tilde \gamma)' - \hat c' Z_t \dot \xi_t(\tilde \gamma) - \sum_{s=1}^\tau \hat d_s' \dot m_s(\tilde \gamma) \right) (\hat \gamma - \gamma_0) \\
		& =: T_{1,T} + T_{2,T} + T_{3,T},
	\end{aligned}
	\]
	where $\tilde \gamma$ is in the segment between $\hat \gamma$ and $\gamma_0$, and 
	\begin{equation}\label{eq:blp weights}
	\begin{aligned}
		c & = V^{-1} ( K + G H^{-1} h ) , &&&
		d_t & = (r_t V_t)^{-1} M_t H^{-1} h, \quad t = 1,\ldots,\tau.
	\end{aligned}
	\end{equation}
    Note that $c$ and $d_1,\ldots,d_\tau$ are well defined by virtue of Assumption~\ref{a.2}\ref{a.2.3}.
	
	For $T_{1,T}$, define the $(\dim(z) + 1) \times 1$ random vector $\zeta_t =( k_t(\gamma_0) - \kappa_0  , Z_t \hat \xi_t(\gamma_0) )$. By Theorem 2 of \cite{hahn2022joint} (noting Assumptions~\ref{a.2}\ref{a.2.1}\ref{a.2.2} and independence within and across markets are sufficient for their integrability and dependence conditions) and Assumption~\ref{a.2}\ref{a.2.4}, for any $\mathcal M$-measurable random vectors $d_1,\ldots,d_\tau$, we have 
	\[
	\left( \begin{array}{c}
		\frac{1}{\sqrt T} \sum_{t=1}^T \zeta_t \\
		\sum_{t = 1}^\tau  d_t'\sqrt T  \left( \bar m_t - m_t(\gamma_0) \right)
	\end{array} \right) 
	\to_d
	\left( \begin{array}{c}
		Z_A \\
		(\sum_{t=1}^\tau r_t d_t' V_t d_t)^{1/2} Z_M
	\end{array}
	\right)
	\]
	$\mathcal M$-stably, where the random vector $Z_A$ and random variable $Z_M$ are jointly normally distributed and independent with mean zero, $\mathrm{Var}(Z_A) = \mathrm{Var}(\zeta_t)$, and $\mathrm{Var}(Z_M) = 1$. Moreover, $(Z_A,Z_M)$ are independent of any $\mathcal M$-measurable random variable. Hence, the asymptotic distribution of $T_{1,T}$ is mixed Gaussian with mean zero and random variance
	\[
	\mathrm{Var}\left(k_t(\gamma_0) \right) + c' V c  - 2 c' K + \sum_{t=1}^\tau r_t d_t' V_t d_t.
	\]
	Substituting the above formulas for $c$ and $d_1,\ldots,d_\tau$ gives the form of the variance in display (\ref{eq:var.endogenous}). It remains to show that $T_{2,T}$ and $T_{3,T}$ are asymptotically negligible.
	
	For term $T_{2,T}$, first recall the expressions for $\hat h$ and $\hat H$ in~\eqref{eq:h blp}. Note that by Assumption~\ref{a.2}\ref{a.2.1}\ref{a.2.2} and consistency of $\hat \gamma$, we can deduce by standard arguments (e.g., Lemma~2.4 of \cite{newey1994large}) that $\hat k \to_p \E[\dot k_t(\gamma_0)]$, $\hat K \to_p K$, $\hat G \to_p G$, and $\hat V \to_p V$. Hence, $\hat h \to_p h$ by Assumption~\ref{a.2}\ref{a.2.3} and Slutsky's theorem. Note that for each $1 \leq t \leq \tau$, the $m_{it}$ are iid conditional on $\mathcal M$. It follows by Lemma~1 of \cite{andrews2005cross} and Assumption~\ref{a.2}\ref{a.2.2}\ref{a.2.4} that $\hat V_t \to_p r_t V_t$. Hence, $\hat V_t^{-1} \to_p (r_t V_t)^{-1}$ for $1 \leq t \leq \tau$ by Assumption~\ref{a.2}\ref{a.2.3}. Finally, Assumption~\ref{a.2}\ref{a.2.1} implies $\hat M_t \to_p M_t$ for $1 \leq t \leq \tau$. Hence, $\hat H \to_p H$ and so $\hat H^{-1} \to_p H^{-1}$, $\hat c \to_p c$, and $\hat d_t \to_p d_t$ for $1 \leq t \leq \tau$ by Assumption~\ref{a.2}\ref{a.2.3} and Slutsky's theorem.
	
	Now write
	\begin{multline*}
		T_{2,T} = (c - \hat c)' \frac{1}{\sqrt T} \sum_{t=1}^T  Z_t \hat \xi_t(\gamma_0) +  \sum_{t = 1}^\tau \frac{\sqrt T}{\sqrt{N_t}} (\hat d_t - d_t)' \sqrt{N_t} \left( \bar m_t - m_t(\gamma_0) \right) \\
		=: T_{2,T,a} + T_{2,T,b}.
	\end{multline*}
	Term $T_{2,T,a} \to_p 0$ because $\hat c \to_p c$ and $ \frac{1}{\sqrt T} \sum_{t=1}^T Z_t \hat \xi_t(\gamma_0) = O_p(1)$ by Assumption~\ref{a.2}\ref{a.2.1}. Similarly, $T_{2,T,b} \to_p 0$ because $T/N_t \to r_t \in (0,\infty)$ by Assumption~\ref{a.2}\ref{a.2.4}, $\hat d_t \to_p d_t$, and $\sqrt{N_t}(\bar m_t - m_t(\gamma_0))$ converges in distribution $\mathcal M$-stably to a mixed normal limit with mean zero and variance $V_t$ (by Assumption~\ref{a.2}\ref{a.2.2}) and is therefore tight by Assumption~\ref{a.2}\ref{a.2.2}.
	
	For term $T_{3,T}$, we first let $m_{tl}(\gamma)$ denote the $l$-th element of $m_t(\gamma)$, and let $\rho_{tl}(\gamma)$ denote the $l$-th element of $Z_t \hat \xi_t(\gamma)$. Similarly, we let $\hat c_l$ and $\hat d_{tl}$ denote the $l$-th elements of $\hat c$ and $\hat d_t$. Then we may write
	\[
	T_{3,T} = \frac{1}{\sqrt T} \sum_{t=1}^T \left( \dot k_t(\tilde \gamma)' - \sum_{l=1}^{\dim(z)} \hat c_l \dot \rho_{tl}(\tilde \gamma)' - \sum_{s=1}^\tau \sum_{l=1}^{\dim(m)} \hat d_{sl} \dot m_{sl}(\tilde \gamma)' \right) (\hat \gamma - \gamma_0) .
	\]
	By construction, $\hat c$ and $\hat d_1,\ldots,\hat d_\tau$ satisfy the in-sample orthogonality condition
	\begin{equation}\label{eq:orthogonal.sample.endogenous}
		\hat k - \hat G' \hat c - \sum_{s=1}^\tau \hat M_s' \hat d_s  =  0,
	\end{equation}
	wpa1. 
	By Assumption~\ref{a.2}\ref{a.2.1} we may take a second mean-value expansion, this time of $\tilde \gamma$ around $\hat \gamma$, to arrive at
	\[
	\begin{aligned}
		T_{3,T} 
		& = \frac{1}{\sqrt T} \sum_{t=1}^T \left( \dot k_t(\hat \gamma)' - \sum_{l=1}^{\dim(z)} \hat c_l \dot \rho_{tl}(\hat \gamma)' - \sum_{s=1}^\tau \sum_{l=1}^{\dim(m)} \hat d_{sl} \dot m_{sl}(\hat \gamma)' \right) (\hat \gamma - \gamma_0) \\
		& \quad + T^{1/4} (\tilde \gamma - \hat \gamma)' \left( \frac{1}{T} \sum_{t=1}^T \left( \ddot k_t(\check \gamma) - \sum_{l=1}^{\dim(z)} \hat c_l \ddot \rho_{tl}(\check \gamma) - \sum_{s=1}^\tau \sum_{l=1}^{\dim(m)} \hat d_{sl} \ddot m_{sl}(\check \gamma) \right) \right) T^{1/4} (\hat \gamma - \gamma_0) \\
		& =: T_{3,T,a} + T_{3,T,b},
	\end{aligned}
	\]
	wpa1, where $\check \gamma$ is in the segment between $\hat \gamma$ and $\tilde \gamma$, $\ddot k_t(\gamma) = \frac{\partial^2 k_t(\gamma)}{\partial \gamma \partial \gamma'}$, $\ddot \rho_{tl}(\gamma) = \frac{\partial^2 \rho_{tl}(\gamma)}{\partial \gamma \partial \gamma'}$, and $\ddot m_{tl}(\gamma) = \frac{\partial^2 m_{tl}(\gamma)}{\partial \gamma \partial \gamma'}$.
	
	We have 
    \[
    T_{3,T,a} = \left( \hat k - \hat G' \hat c - \sum_{s=1}^\tau \hat M_s' \hat d_s \right) \sqrt T(\hat \gamma - \gamma_0) = 0
    \]
    wpa1 by the in-sample orthogonality condition (\ref{eq:orthogonal.sample.endogenous}).
	
	To show $T_{3,T,b} \to_p 0$, in view of the condition $\hat \gamma = \gamma_0 + o_p(T^{-1/4})$, it is enough to show that the central term in parentheses is $O_p(1)$. To this end, standard arguments (e.g., Lemma~2.4 of \cite{newey1994large}) using Assumption~\ref{a.2}\ref{a.2.1} and consistency of $\hat \gamma$ yield $ \frac{1}{T} \sum_{t=1}^T \ddot k_t(\check \gamma) \to_p \E[ \ddot k_t(\gamma_0) ]$ and $\frac 1T \sum_{t=1}^T \ddot \rho_{tl}(\check \gamma) \to_p \E[ \ddot \rho_{tl}(\gamma_0)]$, both of which are finite. It also follows by the fact that $\hat d_t \to_p d_t$ for $1 \leq t \leq \tau$, Assumption~\ref{a.2}\ref{a.2.1}, and consistency of $\hat \gamma$ that $\hat d_{sl} \ddot m_{sl}(\check \gamma) \to_p d_{sl} \ddot m_{sl}(\gamma_0)$ for $1 \leq s \leq \tau$ and $1 \leq l \leq \dim(m)$. Finally, Assumption~\ref{a.2}\ref{a.2.1}-\ref{a.2.3} implies $c$ and $d_1,\ldots,d_\tau$ are tight.
\end{proof}

\begin{proof}[Proof of Proposition~\ref{prop:efficiency}]
    Arguing as in the proof of Proposition~\ref{prop:kappa.endogenous}, the asymptotic distribution of any estimator of the form \eqref{eq:kappa linearize} is mixed Gaussian with mean zero and variance 
    \[
    	\mathrm{Var}\left(k_t(\gamma_0) \right) + c' V c  - 2 c' K + \sum_{t=1}^\tau r_t d_t' V_t d_t.
    \]
    Conditioning on $\mathcal M$, we may minimize this expression with respect to the vectors $c$ and $d_1,\ldots,d_\tau$ subject to (\ref{eq:orthog.blp.micro}) to obtain the weights $c$ and $d_1,\ldots,d_\tau$ in \eqref{eq:blp weights}. Substituting into the above display yields the minimum variance $V_{bc}$ given in \eqref{eq:var.endogenous}.
\end{proof}

Before proving Proposition~\ref{prop:lm.fit.gamma.2}, we first state and prove a lemma.

\begin{lemma}\label{lem:lm.2}
	Let Assumption~\ref{a.2} hold and let $\hat \gamma = \gamma_0 + o_p(1)$. Then
	\[
	\sqrt T \hat S = Z_T + o_p(1) + (H + o_p(1))(\sqrt T (\hat \gamma - \gamma_0)),
	\]
	where $Z_T := G' V^{-1} \frac{1}{\sqrt T} \sum_{t=1}^T Z_t \hat \xi_t(\gamma_0) + \sum_{t=1}^\tau M_t' (r_t V_t)^{-1} \sqrt{T} (m_t(\gamma_0) - \bar m_t)$ converges in distribution ($\mathcal M$-stably) to a mixed Gaussian random variable with mean zero and $\mathcal M$-measurable variance $H$.
\end{lemma}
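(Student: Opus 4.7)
The plan is to linearize $\hat S$ by mean-value expanding the aggregate and micro moments around $\gamma_0$, replace the sample estimates of the Jacobians and weighting matrices with their probability limits, and then invoke the stable central limit theorem from \cite{hahn2022joint} (as in the proof of Proposition~\ref{prop:kappa.endogenous}) to establish the limit of $Z_T$.

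First I would note that $\hat \gamma \in N$ wpa1 by the consistency of $\hat \gamma$, so Assumption~\ref{a.2}\ref{a.2.1} allows a mean-value expansion
\[
 \frac{1}{\sqrt T} \sum_{t=1}^T Z_t \hat \xi_t(\hat \gamma) = \frac{1}{\sqrt T} \sum_{t=1}^T Z_t \hat \xi_t(\gamma_0) + \left(\frac{1}{T} \sum_{t=1}^T Z_t \dot \xi_t(\tilde \gamma_t)\right) \sqrt T (\hat \gamma - \gamma_0),
\]
and similarly for each $t \leq \tau$,
\[
 \sqrt T \bigl( m_t(\hat \gamma) - \bar m_t \bigr) = \sqrt T \bigl( m_t(\gamma_0) - \bar m_t \bigr) + \dot m_t(\tilde \gamma_t') \sqrt T(\hat \gamma - \gamma_0),
\]
for intermediate points $\tilde \gamma_t, \tilde \gamma_t'$ between $\hat \gamma$ and $\gamma_0$. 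By standard uniform law of large numbers arguments (Lemma~2.4 of \cite{newey1994large}) using Assumption~\ref{a.2}\ref{a.2.1} and consistency, $\frac{1}{T} \sum_{t=1}^T Z_t \dot \xi_t(\tilde \gamma_t) \to_p G$ and $\dot m_t(\tilde \gamma_t') \to_p M_t$. Combined with the convergences $\hat G \to_p G$, $\hat V \to_p V$, $\hat M_t \to_p M_t$, and $\hat V_t \to_p r_t V_t$ established in the proof of Proposition~\ref{prop:kappa.endogenous}, substituting the expansions into $\sqrt T \hat S$ and collecting terms yields
\[
 \sqrt T \hat S = Z_T + \bigl( G' V^{-1} G + \textstyle\sum_{t=1}^\tau M_t'(r_t V_t)^{-1} M_t + o_p(1)\bigr) \sqrt T (\hat \gamma - \gamma_0) + o_p(1),
\]
which is the claimed representation once one recognizes $H$ in the middle bracket.

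For the asymptotic distribution of $Z_T$, I would apply Theorem~2 of \cite{hahn2022joint} exactly as in the proof of Proposition~\ref{prop:kappa.endogenous}, using Assumption~\ref{a.2}\ref{a.2.1}\ref{a.2.2}\ref{a.2.4} to verify their integrability and dependence conditions, together with independence within and across markets. This delivers joint $\mathcal M$-stable convergence of $\frac{1}{\sqrt T} \sum_{t=1}^T Z_t \hat \xi_t(\gamma_0)$ and $\sqrt T (m_t(\gamma_0) - \bar m_t)$ to mean-zero mixed Gaussian limits that are mutually independent and independent of $\mathcal M$, with respective (random) variances $V$ and $r_t V_t$. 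Linear combinations then give $Z_T$ mixed Gaussian with mean zero and $\mathcal M$-measurable variance
\[
 G' V^{-1} V V^{-1} G + \sum_{t=1}^\tau M_t' (r_t V_t)^{-1} (r_t V_t) (r_t V_t)^{-1} M_t = H.
\]

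The main obstacle is the careful bookkeeping of $o_p(1)$ remainders when combining the mean-value expansions with the estimated Jacobians and weighting matrices; in particular, to bound the cross-product $(\hat G - G)' \hat V^{-1} \cdot (\text{Jacobian term}) \cdot \sqrt T (\hat \gamma - \gamma_0)$, I would use that $\sqrt T(\hat \gamma - \gamma_0)$ need not be $O_p(1)$ under the hypothesis $\hat \gamma = \gamma_0 + o_p(1)$, so I would collect the Jacobian terms into the $(H + o_p(1))$ factor rather than into the $o_p(1)$ additive remainder. Apart from this point the argument is a direct adaptation of standard two-step asymptotics, following the same template used in the proof of Proposition~\ref{prop:kappa.endogenous}.
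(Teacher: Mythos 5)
Your proposal is correct and follows essentially the same route as the paper: decompose $\sqrt T \hat S$ into the moments evaluated at $\gamma_0$ plus mean-value remainder terms, use the convergences $\hat G \to_p G$, $\hat V^{-1} \to_p V^{-1}$, $\hat M_t \to_p M_t$, $\hat V_t^{-1} \to_p (r_t V_t)^{-1}$ from the proof of Proposition~\ref{prop:kappa.endogenous} together with a ULLN for the Jacobians, and invoke Theorem~2 of \cite{hahn2022joint} for the $\mathcal M$-stable limit of $Z_T$. Your remark that the Jacobian factor must be absorbed into $(H+o_p(1))$ rather than the additive $o_p(1)$ remainder---because $\sqrt T(\hat\gamma-\gamma_0)$ is not assumed $O_p(1)$---is exactly the point the paper's proof also handles.
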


\begin{proof}[Proof of Lemma~\ref{lem:lm.2}]
	By definition of $\hat S$, we have
	\[
    \begin{aligned}
		\sqrt T \hat S & = \hat G' \hat V^{-1} \frac{1}{\sqrt T} \sum_{t=1}^T Z_t \hat \xi_t(\gamma_0) + \sum_{t=1}^\tau \hat M_t' \hat V_t^{-1} \sqrt{T} (m_t(\gamma_0) - \bar m_t) \\
        & \quad \quad  + \hat G' \hat V^{-1} \left( \frac{1}{\sqrt T} \sum_{t=1}^T Z_t (\hat \xi_t(\hat \gamma) - \hat \xi_t(\gamma_0)) \right) \\
		& \quad \quad + \sum_{t=1}^\tau \hat M_t' \hat V_t^{-1}  \sqrt{T} (m_t(\hat \gamma) - m_t(\gamma_0))
		=: T_{1,T} + T_{2,T} + T_{3,T} + T_{4,T},
	\end{aligned}
    \]
    where $\hat G \to_p G$, $\hat M_t \to_p M_t$ for $1 \leq t \leq \tau$, $\hat V^{-1} \to_p V^{-1}$, and $\hat V_t^{-1} \to_p (r_t V_t)^{-1}$ for $1 \leq t \leq \tau$ (all by the proof of Proposition~\ref{prop:kappa.endogenous}). 
	
	For $T_{1,T}$ and $T_{2,T}$, we  have by the proof of Proposition~\ref{prop:kappa.endogenous} that $\frac{1}{\sqrt T} \sum_{t=1}^T Z_t \hat \xi_t(\gamma_0)$ and $\sqrt{N_t} (m_t(\gamma_0) - \bar m_t)$, $1 \leq t \leq \tau$, are all $O_p(1)$. It follows by Assumption~\ref{a.2}\ref{a.2.4} that $T_{1,T} + T_{2,T} = Z_T + o_p(1)$. Hence, by similar arguments to the proof of Proposition~\ref{prop:kappa.endogenous} we may invoke Theorem 2 of \cite{hahn2022joint} to conclude that $Z_T$ converges $\mathcal M$-stably to a mixed Gaussian limit with mean zero and variance $H$.
	
	For $T_{3,T}$ and $T_{4,T}$, a mean-value expansion in $\hat \gamma$ around $\gamma_0$ yields 
	\[
	\begin{aligned}
		T_{3,T} & = \hat G' \hat V^{-1} \left( \frac 1T \sum_{t=1}^T Z_t \dot \xi_t(\tilde \gamma) \right)\sqrt T (\hat \gamma - \gamma_0), & 
		T_{4,T} & = \sum_{t=1}^\tau \hat M_t' \hat V_t^{-1}  \dot m_t(\tilde \gamma)\sqrt{T} (\hat \gamma - \gamma_0),
	\end{aligned}
	\]
	for $\tilde \gamma$ in the segment between $\hat \gamma$ and $\gamma_0$. It follows by Assumption~\ref{a.2}\ref{a.2.1} and standard arguments that $\frac 1T \sum_{t=1}^T Z_t \dot \xi_t(\tilde \gamma) \to_p  G$ and $\dot m_t(\tilde \gamma) \to_p M_t$, $1 \leq t \leq \tau$. Hence, $T_{3,T} + T_{4,T} = (H + o_p(1))\sqrt T(\hat \gamma - \gamma_0)$.
\end{proof}

\begin{proof}[Proof of Proposition~\ref{prop:lm.fit.gamma.2}]
    The proof of Proposition~\ref{prop:kappa.endogenous} shows that $\hat H \to_p H$. Combined with Lemma~\ref{lem:lm.2} and the triangle inequality, we have
    \begin{multline*}
        \| (H^{1/2} + o_p(1))(\sqrt T (\hat \gamma - \gamma_0))\| + \| (H^{-1/2} + o_p(1))(Z_T + o_p(1))\| \\
        \geq \sqrt{LM_1} 
        \geq \| (H^{1/2} + o_p(1))(\sqrt T (\hat \gamma - \gamma_0))\| - \| (H^{-1/2} + o_p(1))(Z_T + o_p(1))\|.
    \end{multline*}
    As $\| (H^{-1/2} + o_p(1))(Z_T + o_p(1))\|^2 \to_d \chi^2_{\dim(\gamma)}$ by the proof of Lemma~\ref{lem:lm.2}, we have
    \[
    \| (H^{-1/2} + o_p(1))(Z_T + o_p(1))\| \leq \epsilon C_T
    \]
    wpa1. Moreover, we have that 
    \[
    \frac{1 + 2\epsilon}{1 + \epsilon} \| H^{1/2}(\sqrt T (\hat \gamma - \gamma_0))\| \geq \| (H^{1/2} + o_p(1))(\sqrt T (\hat \gamma - \gamma_0))\| \geq \frac{1 + \epsilon}{1 + 2\epsilon} \| H^{1/2}(\sqrt T (\hat \gamma - \gamma_0))\|
    \]
    wpa1. The first result follows by combining the above three displays and rearranging. The second and third results are implications of the first.
\end{proof}

\subsection{Proofs for Section~\ref{sec:theory model 2}}

\begin{proof}[Proof of Proposition~\ref{prop:kappa.exogenous}]
    Let $c_i' = ( \E[ \dot k_i(\gamma_0) ] )' H^{-1} \dot \sigma_i(\gamma_0)' V_i(\gamma_0)^{-1}$ denote the population counterpart of $\hat c_i'$. We have $\hat \gamma \in N$ wpa1 by consistency of $\hat \gamma$. Hence, wpa1, we may take a mean value expansion around $\gamma_0$ to obtain
    \[
     \begin{aligned}
     \sqrt n(\hat \kappa_{bc} - \kappa_0)
     & = \frac{1}{\sqrt n} \sum_{i=1}^n \left( k_i(\gamma_0) - \kappa_0 + c_i'(d_i - \sigma_i(\gamma_0)) \right) \\
     & \quad + \frac{1}{\sqrt n} \sum_{i=1}^n (\hat c_i - c_i)' (d_i - \sigma_i(\gamma_0)) \\
     & \quad + \frac{1}{\sqrt n} \sum_{i=1}^n \left( \dot k_i(\tilde \gamma)' - \hat c_i' \dot \sigma_i(\tilde \gamma) \right) (\hat \gamma - \gamma_0) \\
     & =: T_{1,n} + T_{2,n} + T_{3,n},
     \end{aligned}
    \]
    where $\tilde \gamma$ is in the segment between $\hat \gamma$ and $\gamma_0$. This expansion is valid by Assumption~\ref{a.1}\ref{a.1.1}. Term $T_{1,n}$ is asymptotically $N(0,V_{bc})$. It remains to show that $T_{2,n}$ and $T_{3,n}$ are both asymptotically negligible.
    
    For $T_{2,n}$, first define the $1 \times \dim(\gamma)$ vectors
    \[
     \begin{aligned}
     \hat a & =  \bar k ' \hat H^{-1} , &&& 
     a & = \E[ \dot k_i(\gamma_0) ] ' H^{-1},
     \end{aligned}
    \]
    where $\bar k = \frac 1n \sum_{i=1}^n \dot k_i(\hat \gamma)$. 
    Also define the $\dim(\gamma) \times J$ random element $b_i(\gamma) = \dot \sigma_i(\gamma)' V_i(\gamma)^{-1}$, and let $e_i = d_i - \sigma_i(\gamma_0)$, which is $J \times 1$. Then we may write
    \[
     \begin{aligned}
     T_{2,n} 
     & = \hat a \left( \frac{1}{\sqrt n} \sum_{i=1}^n (b_i(\hat \gamma) - b_i(\gamma_0)) e_i \right) + (\hat a - a) \left( \frac{1}{\sqrt n} \sum_{i=1}^n b_i(\gamma_0) e_i \right) \\
     & =: T_{2,n,a} + T_{2,n,b}.
     \end{aligned}
    \]
    
    By Assumptions~\ref{a.1}\ref{a.1.1}\ref{a.1.2} and consistency of $\hat \gamma$, it follows by standard arguments \cite[e.g.,][Lemma~2.4]{newey1994large} that $\bar k \to_p \E[\dot k_i(\gamma_0)]$ and $\hat H \to_p H$. Hence, $\hat a \to_p a$ by Assumption~\ref{a.1}\ref{a.1.3}. 
    
    To show $T_{2,n,a} \to_p 0$, first note that $\hat a = O_p(1)$ and $\E[b_i(\gamma)e_i] = 0$. Consider the empirical process $\nu_n(\gamma) = \frac{1}{\sqrt n} \sum_{i=1}^n b_i(\gamma) e_i$ defined for $\gamma \in N$. For $\gamma_1, \gamma_2 \in N$, we have by a mean-value expansion that $b_i(\gamma_1) e_i - b_i(\gamma_2) e_i = (\gamma_1 - \gamma_2)' \dot b_i(\tilde \gamma)e_i$ for $\tilde \gamma$ in the segment between $\gamma_1$ and $\gamma_2$ (with possibly different values for each element), where $\dot b_i(\gamma)e_i = \frac{\partial}{\partial \gamma} (\dot b_i(\tilde \gamma)e_i)'$. This expansion is valid in view of Assumption~\ref{a.1}\ref{a.1.2}. The elements of $e_i$ are bounded by $\pm 1$ and the elements of $\dot b_i(\gamma)$ are uniformly (for $\gamma \in N$) bounded by some random variable with finite second moment, again by Assumption~\ref{a.1}\ref{a.1.2}. Hence, $\|b_i(\gamma_1) e_i - b_i(\gamma_2) e_i\| \leq B_i \|\gamma_1 - \gamma_2\|$ for $\gamma_1, \gamma_2 \in N$, for some random variable $B_i$ with finite second moment. Thus, $\{b_i(\gamma) e_i : \gamma \in N\}$ is a type-II class of \cite{andrews1994empirical}. It follows by Theorems~1 and~2 of \cite{andrews1994empirical} (using  Assumption~\ref{a.1}\ref{a.1.2} to verify the moment condition on the envelope function) that $\nu_n(\cdot)$ is stochastically equicontinuous. Also note by the Lipschitz condition the pseudometric corresponding to this process is dominated by the Euclidean metric. Hence,  by consistency of $\hat \gamma$ we have $\frac{1}{\sqrt n} \sum_{i=1}^n (b_i(\hat \gamma) - b_i(\gamma_0)) e_i \to_p 0$. 
    
    To show $T_{2,n,b} \to_p 0$, first note $\hat a \to_p a$. Moreover, $\E[\|b_i(\gamma_0) e_i \|^2] < \infty$ by Assumption~\ref{a.1}\ref{a.1.2} and $\E[b_i(\gamma_0)e_i] = 0$, so $\frac{1}{\sqrt n} \sum_{i=1}^n b_i(\gamma_0) e_i  = O_p(1)$ by Chebyshev's inequality. 
    
    For term $T_{3,n}$, we first write
    \[
     T_{3,n} = \frac{1}{\sqrt n} \sum_{i=1}^n \left( \dot k_i(\tilde \gamma)' - \sum_{j=1}^J \hat c_{ij} \dot \sigma_{ij}(\tilde \gamma)' \right) (\hat \gamma - \gamma_0),
    \]
    where $\hat c_i = (\hat c_{i1}, \ldots, \hat c_{iJ})$, and $\dot \sigma_{ij}(\gamma) = \frac{\partial \sigma_{ij}(\gamma)}{\partial \gamma}$.  Note by construction that $\hat c_i$ satisfies the in-sample orthogonality condition
    \begin{equation}\label{eq:orthogonal.sample}
     \frac{1}{n} \sum_{i=1}^n \left( \dot k_i(\hat \gamma) - \dot \sigma_i(\hat \gamma)' \hat c_i \right) = 0.
    \end{equation}
    A second mean-value expansion, this time of $\tilde \gamma$ around $\hat \gamma$, yields
    \[
     \begin{aligned}
     T_{3,n} 
     & = \frac{1}{\sqrt n} \sum_{i=1}^n \left( \dot k_i(\hat \gamma)' - \hat c_{i}' \dot \sigma_{i}(\hat \gamma) \right) (\hat \gamma - \gamma_0) \\
     & \quad + n^{1/4} (\tilde \gamma - \hat \gamma)' \left( \frac{1}{n} \sum_{i=1}^n \left( \ddot k_i(\check \gamma) - \sum_{j=1}^J \hat c_{ij} \ddot \sigma_{ij}(\check \gamma) \right) \right) n^{1/4} (\hat \gamma - \gamma_0) \\
     & =: T_{3,n,a} + T_{3,n,b},
     \end{aligned}
    \]
    where $\check \gamma$ is in the segment between $\hat \gamma$ and $\tilde \gamma$, $\ddot k_i(\gamma) = \frac{\partial^2 k_i(\gamma)}{\partial \gamma \partial \gamma'}$, and $\ddot \sigma_{ij}(\gamma) = \frac{\partial^2 \sigma_{ij}(\gamma)}{\partial \gamma \partial \gamma'}$. 
    We have $T_{3,n,a} = 0$ by the in-sample orthogonality condition (\ref{eq:orthogonal.sample}).
    
    To show $T_{3,n,b} \to_p 0$, in view of the condition $\hat \gamma = \gamma_0 + o_p(n^{-1/4})$, it is enough to show that the central term in parentheses is $O_p(1)$. To this end, standard arguments \cite[e.g.,][Lemma~2.4]{newey1994large} using Assumption~\ref{a.1}\ref{a.1.1} and consistency of $\hat \gamma$ yield $ \frac{1}{n} \sum_{i=1}^n \ddot k_i(\check \gamma) \to_p \E[ \ddot k_i(\gamma_0) ]$, which is finite. We may similarly deduce by the fact that $\hat a \to_p a$ and Assumption~\ref{a.1}\ref{a.1.1}-\ref{a.1.3} that $\frac{1}{n} \sum_{i=1}^n \hat c_{ij} \ddot \sigma_{ij}(\check \gamma) \to_p \E[c_{ij} \ddot \sigma_{ij}(\gamma_0)]$, which is finite, for $j = 1,\ldots,J$.
\end{proof}

\begin{proof}[Proof of Proposition~\ref{prop:lm.fit.gamma}]
	Analogous to the proof of Proposition~\ref{prop:lm.fit.gamma.2}, using Lemma~\ref{lem:lm.1} below in place of Lemma~\ref{lem:lm.2}.
\end{proof}

\begin{lemma}\label{lem:lm.1}
    Let Assumption~\ref{a.1} hold and let $\hat \gamma = \gamma_0 + o_p(1)$. Then
    \[
    \sqrt n \hat S = Z_n + o_p(1) - (H + o_p(1))(\sqrt n (\hat \gamma - \gamma_0)),
    \]
    where $Z_n := \frac{1}{\sqrt n} \sum_{i=1}^n \dot \sigma_i(\gamma_0)' V_i(\gamma_0)^{-1} (d_i - \sigma_i(\gamma_0)) \to_d N(0,H)$.
\end{lemma}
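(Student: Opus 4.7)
The first step is to rewrite $\hat S$ in a form that is amenable to a standard $Z$-estimator-style expansion. Because $\sum_{j=0}^J \sigma_{ij}(\gamma) = 1$, we have $\sum_{j=0}^J \dot \sigma_{ij}(\gamma) = 0$, so $\hat S = \frac{1}{n}\sum_{i=1}^n \sum_{j=0}^J \frac{d_{ij}-\sigma_{ij}(\hat\gamma)}{\sigma_{ij}(\hat\gamma)} \dot \sigma_{ij}(\hat\gamma)$. Applying Sherman--Morrison to $V_i(\gamma) = \mathrm{diag}(\sigma_i(\gamma)) - \sigma_i(\gamma)\sigma_i(\gamma)'$ yields the identity $V_i(\gamma)^{-1} = \mathrm{diag}(1/\sigma_{ij}(\gamma))_{j=1}^J + \sigma_{i0}(\gamma)^{-1}\mathbf 1 \mathbf 1'$, and combining this with $d_{i0} - \sigma_{i0} = -\sum_{j=1}^J(d_{ij}-\sigma_{ij})$ and $\dot \sigma_{i0} = -\sum_{j=1}^J \dot \sigma_{ij}$ gives the compact representation
\[
 \hat S = \frac{1}{n}\sum_{i=1}^n \dot \sigma_i(\hat\gamma)' V_i(\hat\gamma)^{-1} (d_i - \sigma_i(\hat\gamma)).
\]

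Next, I would decompose the residual as $d_i - \sigma_i(\hat\gamma) = (d_i - \sigma_i(\gamma_0)) - (\sigma_i(\hat\gamma) - \sigma_i(\gamma_0))$ and handle the two resulting pieces separately. For the first piece, setting $b_i(\gamma) = \dot\sigma_i(\gamma)'V_i(\gamma)^{-1}$, write
\[
 \frac{1}{\sqrt n}\sum_{i=1}^n b_i(\hat\gamma)(d_i - \sigma_i(\gamma_0)) = Z_n + \frac{1}{\sqrt n}\sum_{i=1}^n (b_i(\hat\gamma) - b_i(\gamma_0))(d_i - \sigma_i(\gamma_0)).
\]
The remainder on the right is $o_p(1)$ by the stochastic equicontinuity argument used in the proof of Proposition~\ref{prop:kappa.exogenous}: under Assumption~\ref{a.1}\ref{a.1.2}, $\{b_i(\gamma)(d_i-\sigma_i(\gamma_0)) : \gamma \in N\}$ is a type-II class in the sense of \cite{andrews1994empirical}, the associated empirical process is stochastically equicontinuous in the Euclidean metric, and $\hat\gamma\to_p\gamma_0$.

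For the second piece, a componentwise mean-value expansion gives $\sigma_{ij}(\hat\gamma) - \sigma_{ij}(\gamma_0) = \dot\sigma_{ij}(\tilde\gamma_{ij})'(\hat\gamma - \gamma_0)$ for intermediate points $\tilde\gamma_{ij}$ in the segment between $\hat\gamma$ and $\gamma_0$, which is valid by Assumption~\ref{a.1}\ref{a.1.1}. Collecting rows into a $J\times\dim(\gamma)$ matrix $\tilde D_i$, the second piece equals $-\bigl(\frac{1}{n}\sum_i b_i(\hat\gamma)\tilde D_i\bigr)\sqrt n(\hat\gamma-\gamma_0)$. A standard uniform law of large numbers (e.g., Lemma~2.4 of \cite{newey1994large}), applicable by Assumption~\ref{a.1}\ref{a.1.1}\ref{a.1.2} and the consistency of $\hat\gamma$, shows the prefactor converges in probability to $\E[b_i(\gamma_0)\dot\sigma_i(\gamma_0)'] = H$, yielding the $-(H+o_p(1))\sqrt n(\hat\gamma-\gamma_0)$ term.

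Finally, $Z_n \to_d N(0,H)$ follows from the Lindeberg--L\'evy CLT: the summands $b_i(\gamma_0)(d_i - \sigma_i(\gamma_0))$ are i.i.d.\ with mean zero (since $\E[d_i \mid \text{covariates}] = \sigma_i(\gamma_0)$ under the model), their variance equals $\E[b_i(\gamma_0) V_i(\gamma_0) b_i(\gamma_0)'] = \E[\dot\sigma_i(\gamma_0)'V_i(\gamma_0)^{-1}\dot\sigma_i(\gamma_0)] = H$ by the multinomial information identity, and existence of the requisite moments is guaranteed by Assumption~\ref{a.1}\ref{a.1.2}. The main obstacle is purely algebraic, namely recognizing the $V_i^{-1}$ identity that converts the multinomial score into the GLS-type form $b_i(d_i-\sigma_i)$; once that is in hand, the remaining steps mirror the linearization already developed for $\hat\kappa_{bc}$.
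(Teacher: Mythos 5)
Your proposal is correct and follows essentially the same route as the paper's proof: the same centering of the score via $\sum_{j=0}^J \dot\sigma_{ij}=0$, the same decomposition of the residual into $(d_i-\sigma_i(\gamma_0))$ and $(\sigma_i(\hat\gamma)-\sigma_i(\gamma_0))$ pieces, the same stochastic-equicontinuity argument for the first piece, and the same mean-value expansion plus uniform law of large numbers for the second. The only cosmetic difference is that you make the Sherman--Morrison identity converting the multinomial score into $\dot\sigma_i'V_i^{-1}(d_i-\sigma_i)$ explicit up front, whereas the paper invokes the same identity implicitly when rewriting $T_{1,n}$ and the prefactor of $T_{2,n}$.
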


\begin{proof}[Proof of Lemma~\ref{lem:lm.1}]
    First note that since $\sum_{j=0}^J \dot \sigma_{ij}(\gamma) = 0$ for all $\gamma \in \Gamma$, we may write
    \[
    \begin{aligned}
        \sqrt n \hat S & = \frac{1}{\sqrt n} \sum_{i=1}^n \sum_{j=0}^J \frac{d_{ij} - \sigma_{ij}(\gamma_0)}{\sigma_{ij}(\hat \gamma)} \dot \sigma_{ij}(\hat \gamma) + \frac{1}{\sqrt n} \sum_{i=1}^n \sum_{j=0}^J \frac{\sigma_{ij}(\gamma_0) - \sigma_{ij}(\hat \gamma)}{\sigma_{ij}(\hat \gamma)} \dot \sigma_{ij}(\hat \gamma) \\
        & =: T_{1,n} + T_{2,n}.
    \end{aligned}
    \]
    
    For $T_{1,n}$, we may rewrite this term using the notation from the proof of Proposition~\ref{prop:kappa.exogenous} as
    \[
    \begin{aligned}
        T_{1,n} & = \frac{1}{\sqrt n} \sum_{i=1}^n b_i(\gamma_0) e_i + \frac{1}{\sqrt n} \sum_{i=1}^n \left( b_i(\hat \gamma) - b_i(\gamma_0)\right) e_i \\
        & =: T_{1,n,a} + T_{1,n,b}
    \end{aligned}
    \]
    where $b_i(\gamma) = \dot \sigma_i(\gamma)' V_i(\gamma)^{-1}$ and $e_i = d_i - \sigma_i(\gamma_0)$. The summands in $T_{1,n,a}$ have mean zero and variance $H$, which is finite and non-singular by Assumption~\ref{a.1}\ref{a.1.3}. Hence, $T_{1,n,a} \to_d N(0,H)$. The proof of Proposition~\ref{prop:kappa.exogenous} shows that the empirical process $\nu_n(\gamma) = \frac{1}{\sqrt n} \sum_{i=1}^n b_i(\gamma) e_i$ defined for $\gamma \in N$, a suitable neighborhood of $\gamma_0$, is stochastically equicontinuous under Assumption~\ref{a.1}\ref{a.1.2}. Hence, $T_{1,n,b} \to_p 0$.
    
    For $T_{2,n}$, a mean-value expansion in $\gamma_0$ around $\hat \gamma$ yields
    \[
    T_{2,n} = \left(\frac 1n \sum_{i=1}^n \sum_{j=0}^J \frac{\dot \sigma_{ij}(\hat \gamma) \dot \sigma_{ij}(\tilde \gamma)'}{\sigma_{ij}(\hat \gamma)}  \right)\sqrt n (\gamma_0 - \hat \gamma),
    \]
    for $\tilde \gamma$ in the segment between $\gamma_0$ and $\hat \gamma$ (with possibly different values for each element). This expansion is valid in view of Assumption~\ref{a.1}\ref{a.1.1}. For the term in parentheses, note
    \[
    \frac 1n \sum_{i=1}^n \sum_{j=0}^J \frac{\dot \sigma_{ij}(\hat \gamma) \dot \sigma_{ij}(\tilde \gamma)'}{\sigma_{ij}(\hat \gamma)} = \frac 1n \sum_{i=1}^n \dot \sigma_{i}(\hat \gamma)' V_i(\hat \gamma)^{-1} \dot \sigma_{i}(\tilde \gamma). 
    \]
    Standard arguments (e.g., Lemma~2.4 of \cite{newey1994large}) then yield that $\frac 1n \sum_{i=1}^n \dot \sigma_{i}(\hat \gamma)' V_i(\hat \gamma)^{-1} \dot \sigma_{i}(\tilde \gamma) \to_p  H$ under Assumption~\ref{a.1}\ref{a.1.1}\ref{a.1.2}.
\end{proof}

\begin{proof}[Proof of Proposition~\ref{prop:lm.fit.e}]
    First note that Assumption~\ref{a.1.e}\ref{a.1.e.1}\ref{a.1.e.2} implies $\hat \gamma \to_p \gamma_0$. Moreover, $\hat H \to_p H$ by the proof of Proposition~\ref{prop:kappa.exogenous}, $H$ is positive definite by Assumption~\ref{a.1}\ref{a.1.3}, and $\hat G_\theta \to_p G_\theta$ by Assumption~\ref{a.1.e}\ref{a.1.e.1}\ref{a.1.e.2}. It follows by Assumption~\ref{a.1.e}\ref{a.1.e.2} that $\hat M = I - \hat H^{1/2} \hat G_\theta'(\hat G_\theta \hat H \hat G_\theta')^{-1} \hat G_\theta \hat H^{1/2}$ exists wpa1 and $\hat M \to_p M$. Now by the first-order condition $0=\hat G_\theta\hat S$ in Assumption~\ref{a.1.e}\ref{a.1.e.3} and Lemma~\ref{lem:lm.1},
    \[
     \sqrt n \hat H^{-1/2} \hat S = \sqrt n \hat M \hat H^{-1/2} \hat S = \hat M \hat H^{-1/2}( Z_n + o_p(1)) - \hat M \hat H^{-1/2}(H + o_p(1))(\sqrt n (\hat \gamma - \gamma_0)),
    \]
    where $\hat M = I - \hat H^{1/2} \hat G_\theta'(\hat G_\theta \hat H \hat G_\theta')^{-1} \hat G_\theta \hat H^{1/2}$. Hence,
    \[
     \sqrt n \hat H^{-1/2} \hat S = (M + o_p(1))(H^{-1/2} Z_n + o_p(1)) - (M H^{1/2} + o_p(1))(\sqrt n (\hat \gamma - \gamma_0)).
    \]
    A mean-value expansion of $\hat \gamma$ around $(\theta_*,e_*)$ yields
    \[
     \sqrt n( \hat \gamma - \gamma_0 ) = (G_\theta' + o_p(1)) \sqrt n (\hat \theta - \theta_*) + ( G_e' + o_p(1)) \sqrt n (\mathrm{vec}(\tilde e - e_*)).
    \]
    Since $M H^{1/2} G_\theta' = 0$, we have by Assumption~\ref{a.1.e}\ref{a.1.e.3}\ref{a.1.e.4} that 
    \[
     \|(M H^{1/2} + o_p(1))(G_\theta' + o_p(1)) \sqrt n (\hat \theta - \theta_*)\| \leq \frac{\epsilon}{1+3\epsilon} \|MH^{1/2} G_e' \sqrt n (\mathrm{vec}(\tilde e - e_*))\|
    \]
    wpa1. Moreover,
    \begin{multline*}
     \frac{1+2\epsilon}{1+3\epsilon} \| MH^{1/2} G_e'\sqrt n (\mathrm{vec}(\tilde e - e_*)) \| \\
     \leq \| (MH^{1/2} + o_p(1))( G_e' + o_p(1))\sqrt n (\mathrm{vec}(\tilde e - e_*))\| \\
     \leq \frac{1+2\epsilon}{1+\epsilon} \| MH^{1/2} G_e'\sqrt n (\mathrm{vec}(\tilde e - e_*)) \|
    \end{multline*}
    wpa1. We also have $\|M H^{-1/2} Z_n\|^2 \to_d \chi^2_{\mathrm{rank}(M)}$ by Lemma~\ref{lem:lm.1}, which implies that the inequality $\| (M + o_p(1))(H^{-1/2} Z_n + o_p(1))\| \leq \epsilon C_n$ holds wpa1. Hence, wpa1,
    \begin{multline*}
     \frac{1+3\epsilon}{1+\epsilon}\|MH^{1/2} G_e' \sqrt n (\mathrm{vec}(\tilde e - e_*))\| + \epsilon C_n \\
     \geq \sqrt{LM_1} \geq \frac{1+\epsilon}{1+3\epsilon}\|MH^{1/2} G_e' \sqrt n (\mathrm{vec}(\tilde e - e_*))\| - \epsilon C_n .
    \end{multline*}
    The first result follows by rearranging. The second is an immediate implication.
\end{proof}

\section{Diagnostic 2}

We now outline our $LM_2$ diagnostics for the models in Sections~\ref{sec:blp} and~\ref{sec: exog prices}. Theory for these diagnostics can be developed by straightforward adaptation of arguments from \cite{andrewsOptimalTestsWhen1994} and \cite{hansen1996inference}.

\subsection{Case 1: Endogenous Prices}
\label{app: LM2 case 1}

To introduce the diagnostic, we extend $\gamma$ to $\zeta = (\gamma,\psi)$. With slight abuse of notation, we now write $\hat \xi_{jt}$ and $m_t$ on this extended space as functions $\hat \xi_{jt}(\zeta;\eta)$ and $m_t(\zeta; \eta)$ of $\zeta$ and $\eta$, with the understanding that $\hat \xi_{jt}(\gamma) = \hat \xi_{jt}((\gamma, 0);\eta)$ and $m_{t}(\gamma) = m_{t}((\gamma, 0);\eta)$. Let
\begin{align}
\label{eq: lambda^hat}
 \hat \lambda(\eta) & = \left( \frac{1}{T} \sum_{t=1}^T Z_t w_t(\hat \gamma, \eta) \right)' \hat V^{-1} ( I - \hat G (\hat G' \hat V^{-1} \hat G )^{-1} \hat G' \hat V^{-1} ) \sqrt T \bar g, \\
 \hat \lambda_t(\eta) & = u_t(\hat \gamma, \eta)' \hat V_t^{-1}( I - \hat M_t ( \hat M_t' \hat V_t^{-1} \hat M_t )^{-1} \hat M_t' \hat V_t^{-1} ) \sqrt T( m_t(\hat \gamma) - \bar m_t), \label{eq: lambda^hat_t}
\end{align}
where $w_t(\gamma, \eta) = (w_{jt}(\gamma,\eta)')_{j \in \mathcal J_t}$ is $|\mathcal J_t| \times \dim(\psi)$ and $u_t(\gamma, \eta)$ is $\dim(m) \times \dim(\psi)$, with
\[
 \begin{aligned}
 w_{jt}(\gamma, \eta) & = \lim_{\psi \to 0} \frac{\partial \hat \xi_{jt}((\gamma, \psi); \eta)}{\partial \psi}, \quad j \in \mathcal{J}_t, & & & 
 u_t(\gamma, \eta)' & = \lim_{\psi \to 0} \frac{\partial m_t((\gamma,\psi);\eta)'}{\partial \psi}.
 \end{aligned}
\]
We take limits to deal with parameters that are at the boundary when $\psi = 0$, such as the variance of the random coefficients on $\eta$. For the intuition, the left-most terms in (\ref{eq: lambda^hat}) and (\ref{eq: lambda^hat_t}) are the Jacobian of the moments with respect to $\psi$. These can depend to first order on $\hat \gamma$. The second parts of these expressions eliminate this dependence with a similar correction to $\hat \kappa_{bc}$. To ensure that this elimination does not make the test statistic degenerate, in practice we form the moments and gradients in these statistics using an enlarged set of instruments formed by augmenting $Z_t$ with an additional $\dim(\psi) \times |\mathcal J_t|$ set of ``optimal instruments'' for $\psi$. These are given by $\tilde Z_t(\eta) = \hat w_t(\hat \gamma, \hat \eta)'$, where $\hat w_t$ is formed as above but using predicted prices $\hat p_t$ and corresponding predicted market shares $\hat s_t$ as in Remark~\ref{rmk:optimal}. The dependence of $\tilde Z_t$ on $\eta$ does not affect the properties of the statistic and we suppress this notation in what follows, subsuming $\tilde Z_t(\eta)$ in $Z_t$.
We estimate the variance of $\hat \lambda(\eta)$ and $\hat \lambda_t(\eta)$ using
\begin{align}
\label{eq: Lambda^hat}
 \hat \Lambda(\eta) & = \left( \frac{1}{T} \sum_{t=1}^T Z_t w_t(\hat \gamma, \eta) \right)' \hat V^{-1} ( I - \hat G (\hat G' \hat V^{-1} \hat G )^{-1} \hat G' \hat V^{-1} ) \left( \frac{1}{T} \sum_{t=1}^T Z_t w_t(\hat \gamma, \eta) \right) , \\
 \hat \Lambda_t(\eta) & = u_t(\hat \gamma, \eta)' \hat V_t^{-1}( I - \hat M_t ( \hat M_t' \hat V_t^{-1} \hat M_t )^{-1} \hat M_t' \hat V_t^{-1} )  u_t(\hat \gamma, \eta).\notag
\end{align}
Finally, define
\[
 \hat W(\eta) = \left(\hat \lambda(\eta) + \sum_{t=1}^\tau \hat \lambda_t(\eta)\right)' \left( \hat \Lambda(\eta) + \sum_{t=1}^\tau \hat \Lambda_t(\eta) \right)^{-1} \left(\hat \lambda(\eta) + \sum_{t=1}^\tau \hat \lambda_t(\eta)\right).
\]
Without microdata, $\hat W(\eta)$ simplifies to $\hat W(\eta) = \hat \lambda(\eta)' \hat \Lambda(\eta)^{-1} \hat\lambda(\eta)$.
The statistic $\hat W(\eta)$ can be shown to behave like a $\chi^2_{\dim(\psi)}$ random variable under the null (i.e., when the true $\psi = 0$), but it depends on the nuisance parameter $\eta$. Thus, we define our second diagnostic as
\begin{equation}
\label{eq: LM2 blp}
 LM_2 = \sup_{\eta \in S^J : \eta \perp C(\tilde e)} \hat W(\eta),
\end{equation}
where we take the supremum over $\eta$ in the unit sphere $S^J$ (since the scale of $\eta$ is not important) that are orthogonal to the column span of $\tilde e$. 

Following, e.g., \cite{hansen1996inference}, critical values can be computed by simulation. Draw $(\varpi_t^*)_{t=1}^T, (\varpi_{i1}^*)_{i=1}^{N_1},\ldots, (\varpi_{i\tau}^*)_{i=1}^{N_\tau}$ iid from a $N(0,1)$ distribution, and set
\[
 \hat W^*(\eta) = \left(\hat \lambda^*(\eta) + \sum_{t=1}^\tau \hat \lambda_t^*(\eta)\right)' \left( \hat \Lambda(\eta) + \sum_{t=1}^\tau \hat \Lambda_t(\eta) \right)^{-1} \left(\hat \lambda^*(\eta) + \sum_{t=1}^\tau \hat \lambda_t^*(\eta)\right),
\]
where
\[
 \begin{aligned}
 \hat \lambda^*(\eta) & =  \left( \frac{1}{T} \sum_{t=1}^T Z_t w_t(\hat \gamma, \eta) \right)' \hat V^{-1} ( I - \hat G (\hat G' \hat V^{-1} \hat G )^{-1} \hat G' \hat V^{-1} ) \left( \frac{1}{\sqrt T} \sum_{t=1}^T \varpi_t^* Z_t \hat \xi_t(\hat \gamma) \right) , \\
 \hat \lambda_t^*(\eta) & = u_t(\hat \gamma, \eta)' \hat V_t^{-1}( I - \hat M_t ( \hat M_t' \hat V_t^{-1} \hat M_t )^{-1} \hat M_t' \hat V_t^{-1} ) \left( \frac{\sqrt T}{N_t} \sum_{i=1}^{N_t} \varpi_{it}^*( m_t(\hat \gamma) - m_{it}) \right).
 \end{aligned}
\]
For each collection of $N(0,1)$ draws, compute
\[
 LM_2^* = \sup_{\eta \in S^J : \eta \perp C(\tilde e)} \hat W^*(\eta).
\]
Let $\hat \xi_{0.95}^*$ denote the 95th percentile of $LM_2^*$ across a large number of independent draws. This quantity is easy to compute, as only the right-most terms in the expressions for $\hat \lambda^*$ and $\hat \lambda^*_t$ need to be recomputed for different draws and these terms do not depend on $\eta$. We reject the null that the dimension of $\tilde e$ is adequate if $LM_2 > \hat \xi_{0.95}^*$.

\begin{continuance}{ex:blp}
Let the random coefficient on $\eta$ be $\delta_i = \psi_1 + \sqrt{\psi_2} Z_i$, where $Z_i \sim F_0$ has mean zero and unit variance, $\psi_1 \in \mathbb R$, and $\psi_2 \in [0,\infty)$.\footnote{We parametrize $\psi$ in terms of the variance of the random coefficient to avoid boundary issues, following the recommendation of \cite{ketzAsymptoticSizeDistortions2019}. The resulting test is similar to the test for neglected heterogeneity of \cite{chesherTestingNeglectedHeterogeneity1984}.} For simplicity, suppose $\bar x_t$ is empty and $|\mathcal J_t| = J$. Define the functions $\sigma_{jt}(\,\cdot\,; (\gamma, \psi), \eta)$ and $\varsigma_{jt}(\,\cdot\,;\gamma)$  from $\mathbb R^{J}$ to $\mathbb R$ by
\[
 \begin{aligned}
 \sigma_{jt}(\xi_t; (\gamma, \psi), \eta) & = \int \varsigma_{jt}(\xi_t + (\psi_1 + \sqrt{\psi_2} z) \eta ; \gamma) \, d F_0(z),  \\
 \varsigma_{jt}(u; \gamma) & = \int \frac{e^{-\alpha' p_{jt} + \beta' e_j + u_j}}{1 + \sum_{k \in \mathcal J_t} e^{-\alpha' p_{kt} + \beta' e_k + u_k} } dF(\alpha, \beta),  
 \end{aligned}
\]
where we have suppressed dependence on $p_t$. Here $\hat \xi_{jt}(\zeta; \eta)$ solves $s_{jt} = \sigma_{jt}(\hat \xi; \zeta, \eta)$ for $j \in \mathcal J_t$. Evidently, $\hat \xi_{jt}(\gamma) = \hat \xi_{jt}((\gamma, 0);\eta)$. 
Using $\dot \varsigma_{jt}(u;\gamma)$ and $\ddot \varsigma_{jt}(u;\gamma)$ to denote the first and second derivatives of $\varsigma_{jt}(u;\gamma)$ with respect to $u$, we have
\[
 \lim_{\psi \to 0} \left( \left. \frac{\partial \sigma_{jt}(\xi_t;(\gamma, \psi), \eta)}{\partial \psi} \right|_{\xi_t = \hat \xi_t((\gamma,\psi); \eta)}\right) = \left( \begin{array}{c}
  \eta' \dot \varsigma_{jt}(\hat \xi_{t}(\gamma); \gamma) \\
 \frac 12 \eta' \ddot \varsigma_{jt}(\hat \xi_{t}(\gamma); \gamma) \eta
 \end{array} \right).
\]
It follows by the implicit function theorem that $w_t(\gamma, \eta)$ is $J \times 2$ and is given by
\[
 w_t(\gamma, \eta) = - \begin{pmatrix}
  \eta &&
 \left( \frac{\partial \varsigma_t(\hat \xi_t(\gamma); \gamma)}{\partial \xi'} \right)^{-1} \left( \frac 12  \eta' \ddot \varsigma_{jt}(\hat \xi_{t}(\gamma); \gamma) \eta
 \right)_{j \in \mathcal J_t} \end{pmatrix},
\]
where $\varsigma_t(u; \gamma) = (\varsigma_{jt}(u; \gamma))_{j \in \mathcal J_t}$. Plugging this into \eqref{eq: lambda^hat} and \eqref{eq: Lambda^hat}, one can compute $\hat W(\eta)$ and thus the $LM_2$ diagnostic.
\end{continuance}

\subsection{Case 2: Individual-Level Price Variation}
\label{app: LM2 case 2}

We augment $\tilde e$ with a vector $\eta$ representing additional attributes not included in $\tilde e$ and augment $\theta$ with an additional component $\psi \in \Psi$ representing coefficients on $\eta$. Correspondingly, we extend $\gamma$ to $\zeta = (\gamma, \psi)$. With slight abuse of notation, we now write choice probabilities on this extended space as $\sigma_{ij}(\zeta; \eta)$, with the understanding that $\sigma_{ij}(\gamma) = \sigma_{ij}((\gamma, 0);\eta)$.

Consider an LM test of the null hypothesis that $\psi = 0$. Such a test could be based on the score
\begin{equation}\label{eq:lm.rank.prelim}
 \frac{1}{n} \sum_{i=1}^n \sum_{j=0}^J \frac{d_{ij}}{\sigma_{ij}(\hat \gamma)} w_{ij}(\hat \gamma,\eta),
\end{equation}
where
\[
 w_{ij}(\gamma, \eta) = \lim_{\psi \to 0} \frac{\partial \sigma_{ij}((\gamma, \psi);\eta)}{\partial \psi}.
\]

\medskip

\begin{continuance}{ex:mixed logit}
Let the random coefficient on $\eta$ be $\delta_i = \sqrt{\psi} Z_i$, where $\psi \in [0,\infty)$ and $Z_i \sim F_0$ has mean zero and unit variance. For simplicity, suppose $\Pi = 0$. Define the functions $\sigma_{ij}(\,\cdot\,; (\gamma, \psi), \eta)$ and $\varsigma_{ij}(\,\cdot\,; (\gamma, \psi), \eta)$ from $\mathbb R^J$ to $\mathbb R$ by 
\[ 
 \begin{aligned}
 \sigma_{ij}((\gamma, \psi); \eta) & = \int \varsigma_{ij}(\xi + \sqrt{\psi} \eta z; \gamma) \, d F_0(z), \\
 \varsigma_{ij}(u; \gamma) & = \int \frac{e^{\alpha' p_{ij} + \beta' e_j + u_j}}{1 + \sum_{k=1}^J e^{\alpha' p_{ik} + \beta' e_k + u_k} } dF(\alpha, \beta),
 \end{aligned}
\]
where we have suppressed dependence on $p_{ij}$ and $\bar x$. 
Using $\dot \varsigma_{ij}$ and $\ddot \varsigma_{ij}$ to denote first and second derivatives of $\varsigma_{ij}$ with respect to its first argument, we have
\[
 \frac{\partial \sigma_{ij}((\gamma, \psi);\eta)}{\partial \psi} = \frac{1}{2 \sqrt \psi} \int z \eta' \dot \varsigma_{ij}(\xi + \sqrt \psi \eta z; \gamma) \, dF_0(z).
\]
Thus, $w_{ij}(\gamma, \eta) =  \frac 12 \eta' \ddot \varsigma_{ij}(\xi; \gamma) \eta$. The term $w_{ij}(\hat \gamma, \eta) =  \frac 12 \eta' \ddot \varsigma_{ij}(\hat \xi; \hat \gamma) \eta$ is plugged into the $LM_2$ statistic below.
\end{continuance}

\bigskip

The statistic (\ref{eq:lm.rank.prelim}) can still depend to first order on $\hat \gamma$. To eliminate this dependence, we perform a similar correction to $\hat \kappa_{bc}$. Let $g_i(\gamma;\eta) = w_i(\gamma, \eta)' V_i(\gamma)^{-1} (d_i - \sigma_i(\gamma))$, where $w_i(\gamma; \eta) = (w_{ij}(\gamma; \eta)')_{j=1}^J$ is $J \times \dim(\psi)$. Then define the $\dim(\psi) \times J$ matrix
\[
 \hat c_i(\eta)' = \left(w_i(\hat \gamma, \eta) + \dot \sigma_i(\hat \gamma) \hat H^{-1}  \left( \frac 1n \sum_{l=1}^n \dot g_l(\hat \gamma;\eta)' \right)\right)' V_i(\hat \gamma)^{-1},
\]
where $\dot g_i(\gamma;\eta)' = \frac{\partial g_i(\gamma; \eta)}{\partial \gamma}$ is $\dim(\gamma) \times \dim(\psi)$, and let
\[
 \begin{aligned}
 \hat \lambda(\eta) & = \frac{1}{\sqrt n } \sum_{i=1}^n \hat c_i(\eta)'(d_i - \sigma_i(\hat \gamma)), &&& 
 \hat \Lambda(\eta) & =  \frac 1n \sum_{i=1}^n  \hat c_i(\eta)' V_i(\hat \gamma) \hat c_i(\eta) .
 \end{aligned}
\]
Finally, let
\[
 \hat W(\eta) = \hat \lambda(\eta)'\hat \Lambda(\eta)^{-1} \hat \lambda(\eta).
\]
It can be shown that the statistic $\hat W(\eta)$ behaves like a $\chi^2_{\dim(\psi)}$ random variable under the null (i.e., when the true $\psi = 0$), but it depends on the nuisance parameter $\eta$. Thus, we again define our second diagnostic as 
\begin{equation}\label{eq:lm.rank}
 LM_2 = \sup_{\eta \in S^J : \eta \perp C(\tilde e)} \hat W(\eta).
\end{equation}
Critical values can be computed by simulation: for iid $N(0,1)$ random variables $(\varpi_i^*)_{i=1}^n$, compute
\[
 LM_2^* = \sup_{\eta \in S^J : \eta \perp C(\tilde e)} \hat W^*(\eta),
\]
where $\hat W^*(\eta) = \hat \lambda^*(\eta)'\hat \Lambda(\eta)^{-1} \hat \lambda^*(\eta)$, with $\hat \lambda^*(\eta) = \frac{1}{\sqrt n } \sum_{i=1}^n \varpi_i^* \hat c_i(\eta)'(d_i - \sigma_i(\hat \gamma))$. Note $\hat \lambda^*(\eta)$ factors into the product of terms involving $\eta$, which only need to be computed once, and $(\varpi_i^*)_{i=1}^n$, making it trivial to compute. Let $\hat \xi_{0.95}^*$ denote the 95th percentile of $LM_2^*$ across a large number of independent sequences $(\varpi_i^*)_{i=1}^n$. The null that the dimension of $\tilde e$ is adequate can be rejected if $LM_2 > \hat \xi^*_{0.95}$.

\section{Fixed Effects in Case 1}\label{app:fes}
Here we briefly discuss the asymptotic properties of the bias corrected estimator, standard error formulas, and diagnostics, when utilities depend additively on fixed effects. Recall that product and/or market fixed effects are handled by de-meaning $Z_t$ and/or $\hat \xi_t$ at the product and/or market level when forming moments, as in \cite{somainiAlgorithmEstimateTwoWay2016} and \cite{conlon2020best}.

\emph{Market Fixed Effects.} De-meaning at the market level introduces correlation among the instruments within each market and among the $\hat \xi_t$ within each market. This does not affect the theoretical derivations, which already allow for arbitrary correlation among the elements of $Z_t$ and $\hat \xi_t$ within markets. In particular, the standard error formula~\eqref{eq:standard errors blp} still applies as it clusters at the market level. The bootstrap for generating critical values for $LM_2$ does not need to be modified either, as it is also based on clustering at the market level.

\emph{Product Fixed Effects.} De-meaning at the product level introduces correlation across markets, but this is asymptotically negligible. It is straightforward to show that the theoretical results from Section~\ref{sec:theory blp} carry over to this case. An alternative (equivalent) approach is to simply include dummy variables for each product, in which case the theory carries over directly once the parameter space is suitably expanded.

\emph{Product and Market Fixed Effects.} In this case de-meaning introduces correlation both across products within markets and across markets. The correlation across products within markets is accommodated by the fact that we cluster at the market level, whereas the correlation across markets is asymptotically negligible.

\section{Additional Details for Section~\ref{sec:simulations case 1}}
\label{app:simulations}

In this section we give additional details on the simulation design used to generate results in Section~\ref{sec:simulations case 1} and present additional results not included in the main text.

\subsection{Simulation Design}
\label{app:simulations design}

We first fit a mixed logit model to the data using PyBLP. We use the third model in the PyBLP tutorial. The model has product fixed effects, and random coefficients on price, sugar content, mushy, and a constant for the inside products, with a diagonal covariance matrix. In addition to the 20 instruments provided in the PyBLP data, we estimate the model using two additional differentiation IVs, following \cite{gandhiMeasuringSubstitutionPatterns2019}. For product $j$ in market $t$, we form
\begin{equation}\label{eq:diff ivs}
 \sum_{j' \neq j} \mathbb{I}[|d_{j,j'}| < \hat \sigma_j] (\hat p_{jt} - \hat p_{j't}), 
 \quad \quad 
 \sum_{j' \neq j} \mathbb{I}[|d_{j,j'}| < \hat \sigma_j] (\hat p_{jt} - \hat p_{j't})^2, 
\end{equation}
where $d_{j,j'}$ is the difference between the sugar content of focal product $j$ and other product $j'$, $\hat \sigma_j$ is the standard deviation of $|d_{j,j'}|$ across products $j'$, and $\hat p_{jt}$ are the fitted values from a regression of $p_{jt}$ on $z_t$ across markets (i.e., regression coefficients are estimated at the product level). 

The estimated coefficients are $\bar{\alpha} = -30.1587$, $\sigma_{\alpha} = 0.1431$, $\sigma_{\mbox{\tiny inside}} = 0.0303$,  $\sigma_{\mbox{\tiny sugar}} =0.0365$, and $\sigma_{e} = 0.5272$. The parameter $\sigma_{e}$ is estimated imprecisely, with a standard error of $1.51$. To make task of correcting bias more challenging, we increase $\sigma_{e}$ to $2.0$, which is well inside the 95\% confidence interval for $\sigma_e$ based on the PyBLP estimates. We use these parameter values to generate choices given prices and exogenous variables in the simulations.

Let $\mu_{z,j}$ denote the sample mean of the 20 instruments $z_{jt}^{py}$ provided in the PyBLP data across markets, and $\Sigma_z$ and $\Sigma_{\xi}$ denote the sample variance of $z_{jt}^{py} - \mu_{z,j}$ and $\hat \xi_{jt}$, respectively, across products and markets, where $\hat \xi_{jt}$ are the estimated residuals at the data-generating parameters. For each simulated dataset, we first draw $\varepsilon_{jt}^*$ iid $N(0,\Sigma_z)$ and $f_{0t}^*$ and $f_{1t}^*$ iid $N(0,1)$, and set $z_{jt}^{py*} = \mu_{z,j} + \sqrt{1-\rho_z} \, \varepsilon_{jt}^* + \sqrt{\rho_z \,\mathrm{diag}(\Sigma_z)} f_{e_jt}^*$ with $\rho_z = 0.8$. This ensures that the mean and variances of the 20 instruments $z_{jt}^{py*}$ match those in the data, but with a factor structure that makes the instruments more informative about the nonlinear parameters. We also draw $\xi_{jt}^*$ iid $N(0,\Sigma_{\xi})$. To generate prices, we first run a pooled regression of prices $p_{jt}$ on $z_{jt}^{py}$, $\hat \xi_{jt}$, and product fixed effects. We then set $p_{jt}^* = \hat p_{jt}^* + u_{jt}^*$, where $\hat p_{jt}^*$ denotes the predicted values from this regression at $z_{jt}^{py} = z_{jt}^{py*}$ and $\hat \xi_{jt} = \xi_{jt}^*$, and $u_{jt}^*$ is an iid $N(0,\sigma_p^2)$ error term, where $\sigma_p^2$ is the sample variance of the residuals from the pooled price regression. Given the simulated $p_{t}^* = (p_{jt}^*)_{j=1}^J$ and $\xi_{t}^* = (\xi_{jt}^*)_{j=1}^J$, we then simulate market shares at the above parameter values (with product fixed effects also matching parameter estimates). Finally, for each $j$ and $t$ we form two more differentiation IVs as in~\eqref{eq:diff ivs}. This yields a total of 22 instruments $z_{jt}$ per product, which we use to estimate the parameters and implement our bias corrections.

\subsection{Additional Results with (Potentially) Mismeasured IVs}
\label{app: sims e^tilde IVs}

Here we present simulation results using two additional differentiation IVs based on the (potentially) mismeasured product attributes. In addition to the 22 instruments described above, we use two more based on the proxies:
\[
  \sum_{j' \neq j} \mathbb{I}[d_{j,j'}^e = 0] (\hat p_{jt} - \hat p_{j't}), 
 \quad \quad 
 \sum_{j' \neq j} \mathbb{I}[d_{j,j'}^e = 0] (\hat p_{jt} - \hat p_{j't})^2, 
\]
where $d_{j,j'}^e = \mathbb{I}[\tilde e_j \leq 0.5] - \mathbb{I}[\tilde e_{j'} \leq 0.5]$. When there is no proxy error ($\rho = 0$), these capture differentiation in price for products that are close in ``mushiness.'' However, for $\rho > 0$ these capture differentiation in price for products that are close according to the imperfect proxy $\tilde e_j$. Bias and RMSEs are presented in Figure~\ref{fig:bias rmse sim case 1 e}, and closely match those in Figure~\ref{fig:bias rmse sim case 1} (which excludes the last two differentiation IVs based on proximity in $\tilde e_j$). Coverage of 95\% confidence intervals for the counterfactuals based on the bias-corrected estimator and the variance formula~\eqref{eq:standard errors blp no micro data} are presented in Table~\ref{tab:coverage case 1 e}. These are close to the values reported in Table~\ref{tab:coverage case 1} in the main text.

\begin{figure}[t]
	\centering
	\begin{subfigure}[t]{0.48\textwidth}
		\centering
		\includegraphics[width=\linewidth]{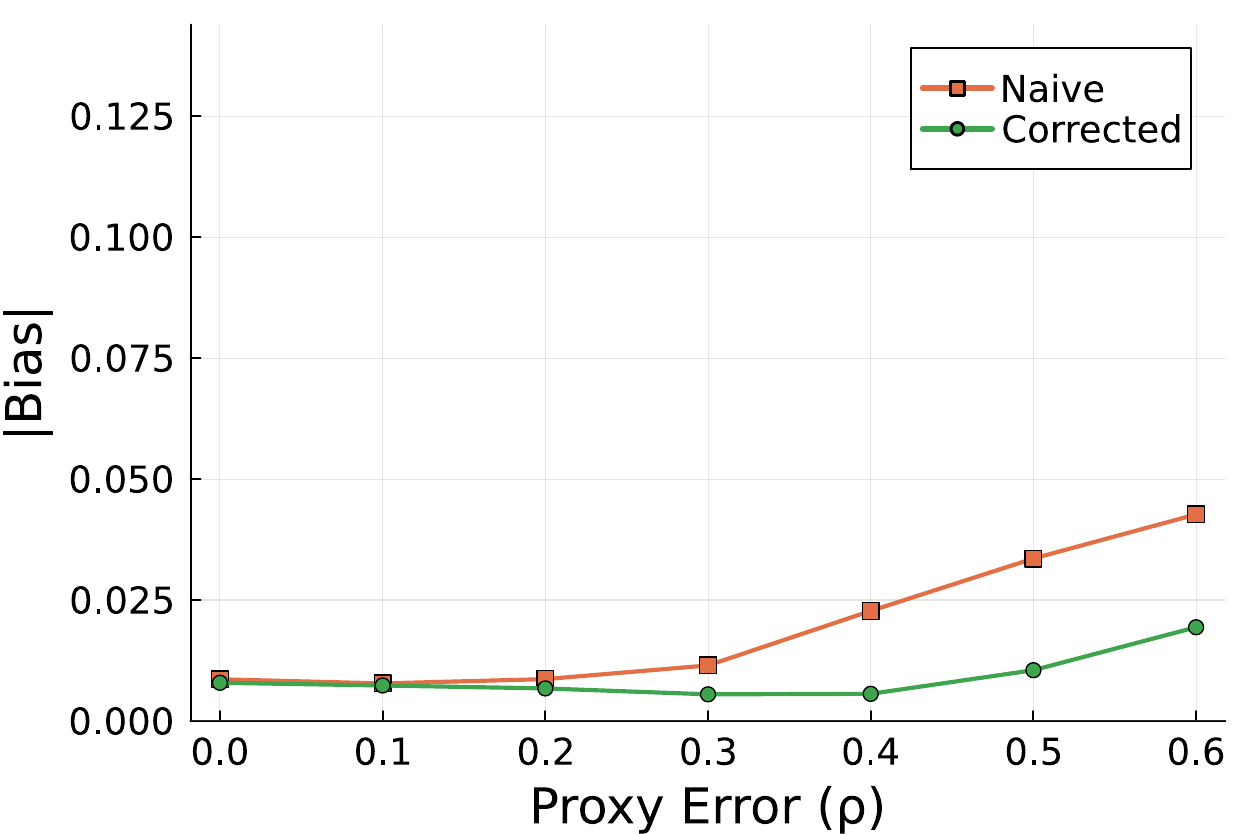}
		\caption{Bias for the ``all'' counterfactual}
	\end{subfigure}
	\hfill 
	\begin{subfigure}[t]{0.48\textwidth}
	\centering
	\includegraphics[width=\linewidth]{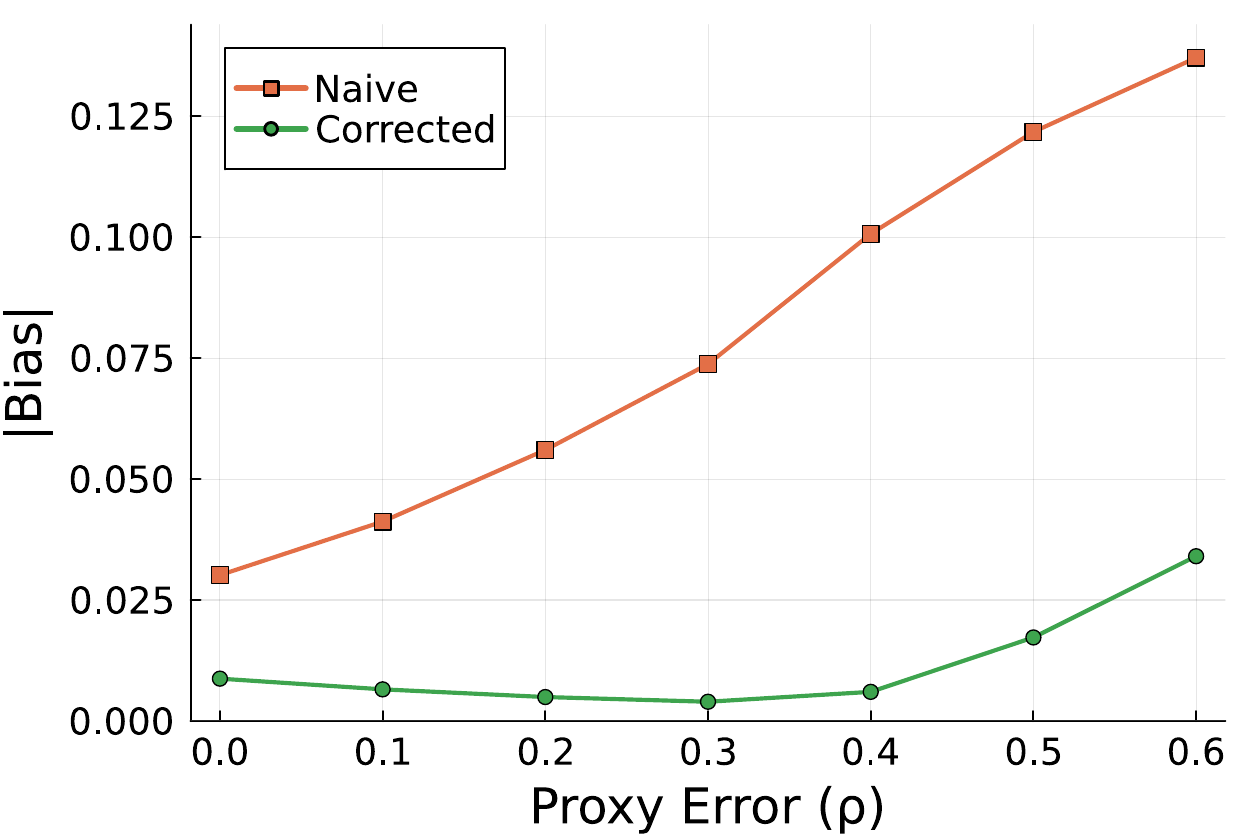}
	\caption{Bias for the ``mushy'' counterfactual}
\end{subfigure}
	
	\par\vspace{1em} 
	
		\begin{subfigure}[t]{0.48\textwidth}
		\centering
		\includegraphics[width=\linewidth]{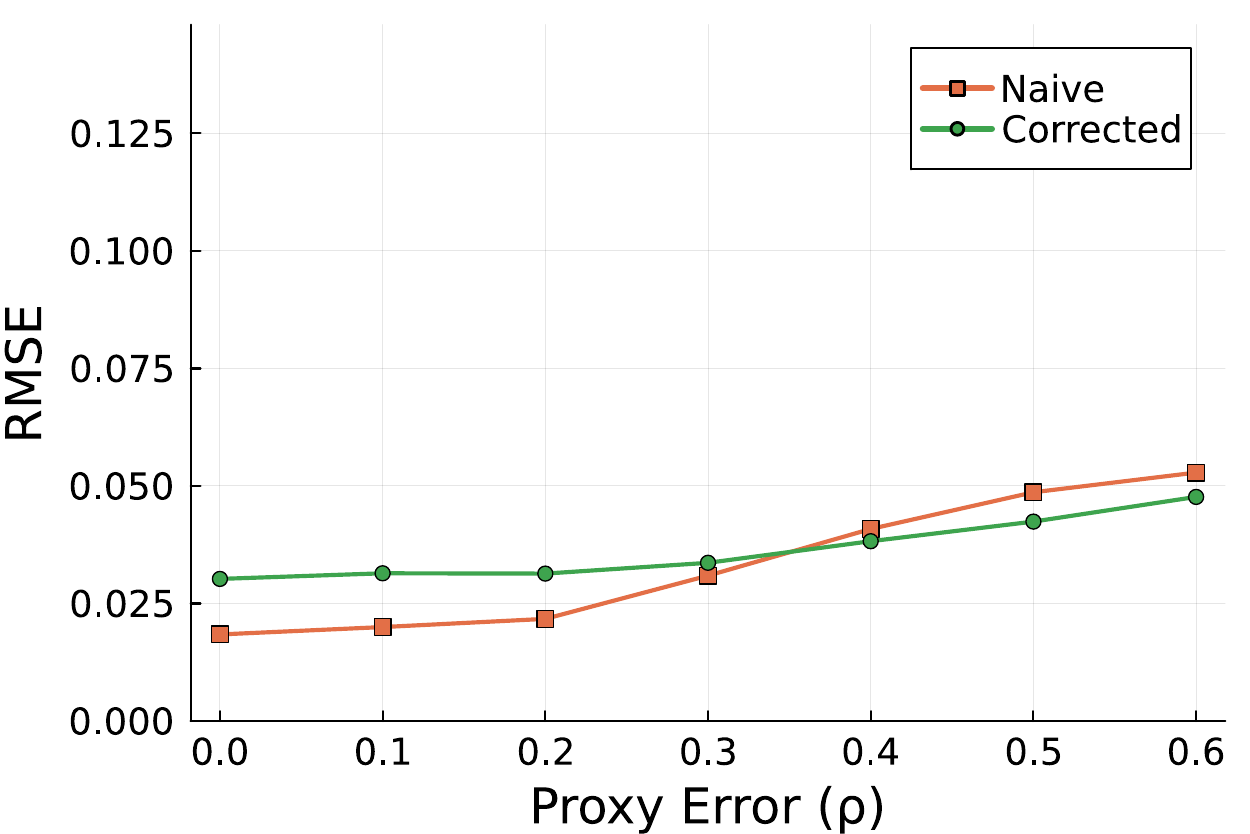}
		\caption{RMSE for the ``all'' counterfactual}
	\end{subfigure}
	\hfill 
	\begin{subfigure}[t]{0.48\textwidth}
	\centering
	\includegraphics[width=\linewidth]{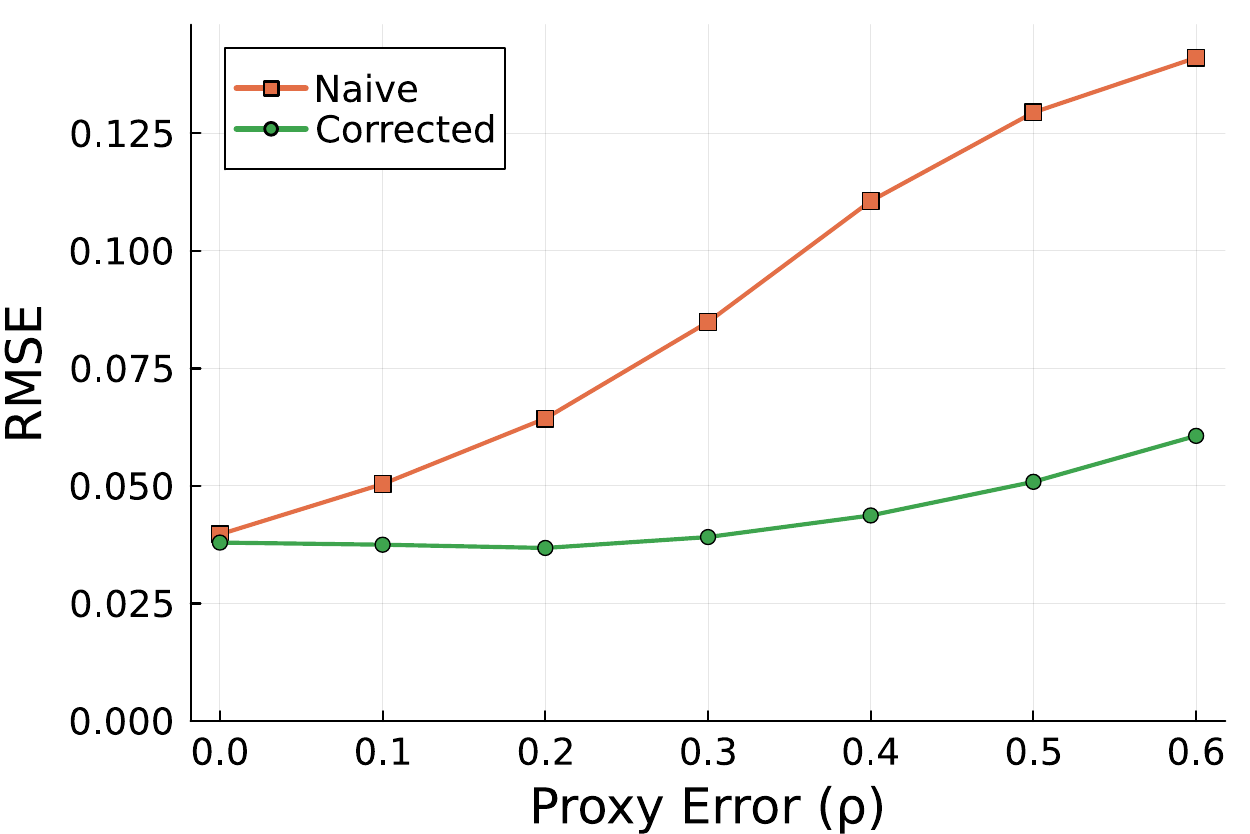}
	\caption{RMSE for the ``mushy'' counterfactual}
\end{subfigure}

	\caption{Bias and RMSE as a function of proxy error, \cite{nevo2001measuring} design}
	\label{fig:bias rmse sim case 1 e}

\vskip 0.5em

\begin{minipage}{\textwidth}
    \small \textit{Note:} Figures show the absolute bias and RMSE across simulations of the naive estimator $\hat \kappa$ in~\eqref{eq: kappa naive blp} and bias-corrected estimator $\hat \kappa_{bc}$ in~\eqref{eq: kappa_bc blp linear no micro} for the ``all'' and ``mushy'' counterfactuals.
\end{minipage}
    
\end{figure}

\begin{table}[H]
    \centering
    \begin{tabular}{ccc} \hline \hline 
    Proxy error $\rho$ & \phantom{111} Coverage: ``all''\phantom{111} & Coverage: ``mushy'' \\  \hline \hline 
    0.00 & 0.958 & 0.962 \\
    0.10 & 0.966 & 0.971 \\
    0.20 & 0.966 & 0.972 \\
    0.30 & 0.959 & 0.967 \\
    0.40 & 0.953 & 0.940 \\
    0.50 & 0.923 & 0.870 \\
    0.60 & 0.899 & 0.769 \\ \hline \hline
    \end{tabular}
    \caption{Coverage of 95\% confidence intervals for counterfactuals, \cite{nevo2001measuring} design}
    \label{tab:coverage case 1 e}

    \vskip 0.5em

  \begin{minipage}{\textwidth}
    \small \textit{Note:} Fraction of simulations in which confidence intervals for counterfactuals contain the true value. Confidence intervals are computed as $\hat \kappa_{bc} \pm 1.96 (\hat V_{bc}/T)^{1/2}$ with $\hat \kappa_{bc}$ in~\eqref{eq: kappa_bc blp linear no micro} and $\hat V_{bc}$ as in Remark~\ref{rmk:std errors}.
  \end{minipage}
  
\end{table}

\section{Additional Results for Section~\ref{sec:application}}
\label{app:empirical}

Figure \ref{fig:hit rate all} compares the hit rates with and without bias correction for additional unstructured data sources and ML models. For book titles and descriptions, we use the same ML models used for book reviews (discussed in Section \ref{sec:application}). For images, we use standard pre-trained classification models (see \citetalias{compiani2025demand} for more details).

For 14 out of 16 proxies (around 88\%), the
bias correction weakly improves performance and the magnitude of the improvement
is large in several cases.
This confirms that the bias correction is helpful in improving the estimator's performance at counterfactual predictions across various unstructured data sources.

\begin{figure}[p]
	\centering
	\includegraphics[width=0.85\linewidth]{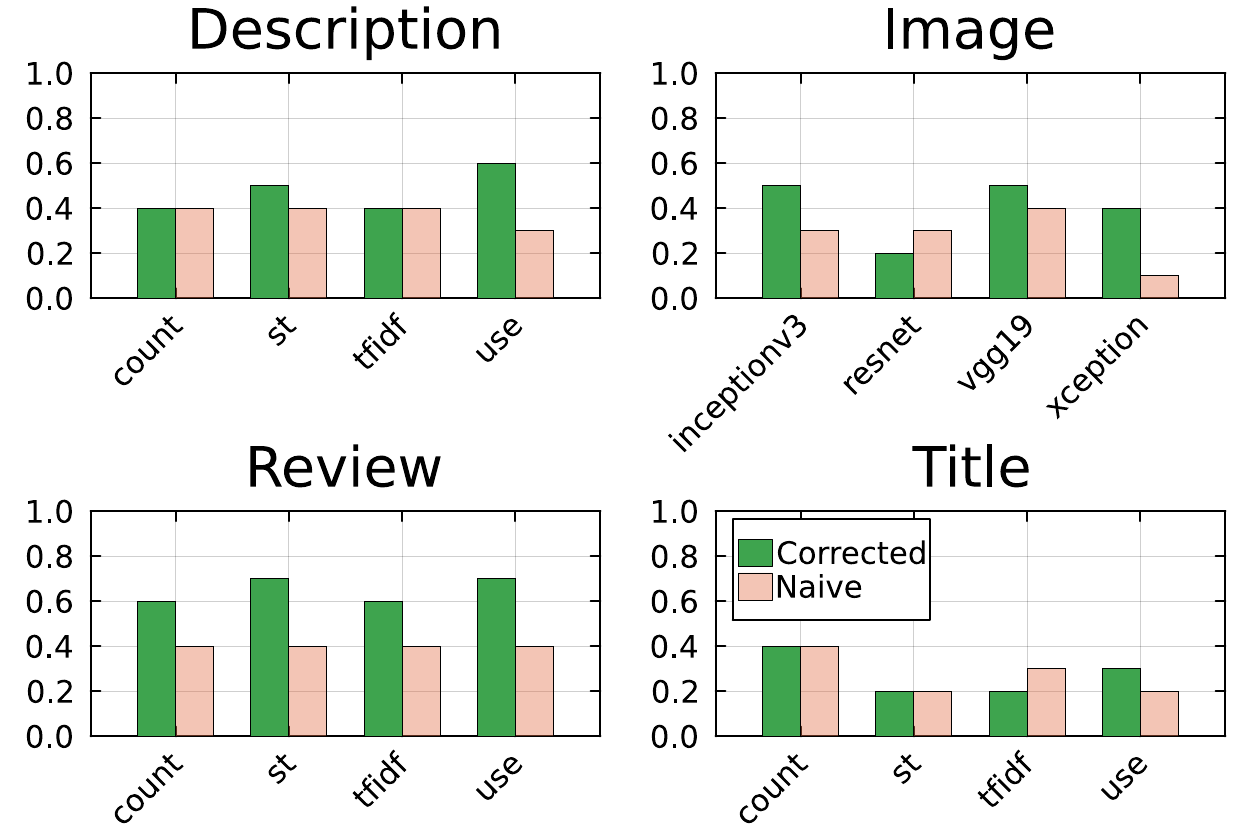}
	\caption{Rates of correct closest-substitute predictions.\label{fig:hit rate all}}
    \parbox[]{\textwidth}{\small \emph{Note: }Darker green bars show the fraction of books for which the bias-corrected estimator correctly identifies the closest substitute. Lighter orange bars show the corresponding fractions for the naive estimator. Each of the four panels corresponds to an unstructured data source. Within each panel, each pair of bars corresponds to an ML model used to extract the proxies from that data source.}
\end{figure}

\section{Reparameterization for Micro BLP}
\label{app:micro BLP}

Consider Example~\ref{ex:micro blp}. We partition $\beta_i = (\beta_{\bar x, i}, \beta_{e,i})$ and $\Pi = [\Pi_{\bar x} \; \Pi_e \; \Pi_p]$, and write the utilities  as:
	\[
	u_{ijt} = \beta'_{\bar x, i} \bar x_{jt}  + \beta'_{e, i} e_{j} -\alpha_i p_{jt} + y_{it}' [\Pi_{\bar x} \; \Pi_{p}] (\bar x_{jt}, p_{jt}) +  y_{it}' \Pi_e e_j + \pi ' \bar y_{ijt} + \xi_{jt} + \varepsilon_{ijt} , \quad j \in \mathcal{J}_t .
	\]
	Suppose $\alpha_i \sim N(\bar \alpha, \sigma^2_\alpha)$, $\beta_{\bar x,i} \sim N(\bar \beta_{\bar x}, \Sigma_{\bar x})$, $\beta_{e,i} \sim N(\bar \beta_e, \Sigma_{e})$, and $\alpha_i$, $\beta_{\bar x,i}$ and $\beta_{e,i}$ are independent. Then, 
	\[
	\theta = \left(\bar \alpha, \sigma_\alpha, \bar \beta_{\bar x}, \bar \beta_e, \pi, v(\Pi_{\bar x}), v(\Pi_p), v(\Pi_e), l(\Sigma_{\bar x}), l(\Sigma_e) \right),
\] 
where  we use the same notation $v$ and $l$ as for Examples~\ref{ex:blp} and~\ref{ex:mixed logit}.\footnote{As with Example~\ref{ex:blp}, if $\Sigma_{\bar x}$ and/or $\Sigma_e$ are diagonal, then we replace $l(\Sigma_{\bar x})$ and/or $l(\Sigma_e)$ with vectors containing their diagonal entries.} Note that  $e_j$ only enters via $\beta'_{e,i} e_j$ and $y_{it}' \Pi_e e_j$. 
As before, collecting $\beta'_{e,i} e_j$ across products, we have $e \beta_{e, i} \sim N(e \bar \beta_e, e \Sigma_e e')$, where $e \Sigma_e e'$ has rank $r \leq J$ because $e$ is $J \times r$. Hence,
	\[
	\gamma(\theta,e) = \left(\bar \alpha, \sigma_\alpha, \bar \beta_{\bar x}, e \bar \beta_e, \pi, v(\Pi_{\bar x}), v(\Pi_p), v(e \Pi_e'), l(\Sigma_{\bar x}), l_r(e \Sigma_{e}e')\right),
	\]
with $l_r$ as in Examples~\ref{ex:blp} and~\ref{ex:mixed logit}.

{\let\oldbibliography\thebibliography
\renewcommand{\thebibliography}[1]{\oldbibliography{#1}
\setlength{\itemsep}{0pt}}
\putbib 
}
\end{bibunit}

\end{document}